\documentclass[11pt]{article}

 \setlength{\evensidemargin}{0.0in}
 \setlength{\oddsidemargin}{0.0in}
 \setlength{\textwidth}{6.5in}
 \topmargin -0.25in 
 \textheight 8.5in
 \hfuzz=50pt
 \pagestyle{plain}

\usepackage{setspace}
\usepackage{natbib}
\usepackage{graphicx}
\usepackage{lscape}
\usepackage{pdfpages}
\usepackage{float}
\usepackage{adjustbox}

\usepackage{arydshln}

\usepackage{arydshln}
\usepackage{hyperref}

\hypersetup{
    colorlinks=true,
    linkcolor=red,
    filecolor=red,      
    urlcolor=red,
    citecolor=blue
}

\usepackage{tikz}
\usepackage{lipsum}
\usepackage{pgffor}

\newtheorem{assumption}{Assumption}

\providecommand{\customgenericname}{}
\newcommand{\newcustomtheorem}[2]{%
  \newenvironment{#1}[1]
  {%
   \renewcommand\customgenericname{#2}%
   \renewcommand\theinnercustomgeneric{##1}%
   \innercustomgeneric
  }
  {\endinnercustomgeneric}
}

\newcustomtheorem{proposition_b}{Proposition}
\newcustomtheorem{assumption_b}{Assumption}

%
%
%

\usepackage[top=1in, bottom=1in, left=1in, right=1in]{geometry}

\usepackage{amsfonts}%
\usepackage{amsmath}%
\setcounter{MaxMatrixCols}{30}%
\usepackage{amssymb}%
\usepackage{graphicx}
\providecommand{\U}[1]{\protect\rule{.1in}{.1in}}
\newtheorem{theorem}{Theorem}

\newtheorem{proposition}{Proposition}
\newtheorem{remark}[theorem]{Remark}

\newenvironment{proof}[1][Proof]{\noindent \textbf{#1.} }{\  \rule{0.5em}{0.5em}}

\usepackage{footmisc}
\setlength{\footnotesep}{\baselineskip}

\usepackage{caption}

\newcommand\fnote[1]{\captionsetup{font=small}\caption*{#1}}

%

\begin{document}

\onehalfspacing

\newsavebox{\tablebox}
\newlength{\tableboxwidth}

\title{ 
\huge Synthetic Controls with Imperfect Pre-Treatment Fit\footnote{ We would like to thank Alberto Abadie, Josh Angrist, Ivan Canay, Marcelo Fernandes, Jin Hahn, Guido Imbens, Aureo de Paula, Tong Li, Victor Filipe Martins-da-Rocha, Ricardo Masini, Pedro Sant'Anna, Rodrigo Soares and conference and seminar participants at UC Berkeley, UC Santa Barbara, USC, UDELAR, UCL, University of Bristol,  USP, George Washington University, University of Miami, PUC-Chile, IPEA, PUC-Rio, California Econometrics Conference 2016,  Bristol Econometric Study Group 2016, IAAE conference 2017,  EEA conference 2017, LACEA 2017, and the MIT Conference on Synthetic Controls and Related Methods for comments and suggestions. We also thank  Deivis Angeli, Lucas Barros, and Luis Alvarez for excellent research assistance. Bruno Ferman gratefully acknowledges financial support from FAPESP. An earlier version of this paper circulated under the title ``Revisiting the Synthetic Control Estimator''.}}

\author{
Bruno Ferman\footnote{E-mail: bruno.ferman@fgv.br. Address: Sao Paulo School of Economics, FGV,
Rua Itapeva no. 474, Sao Paulo - Brazil, 01332-000.}    ~~ Cristine Pinto\footnote{E-mail:  cristine.pinto@fgv.br. Address: Sao Paulo School of Economics, FGV,
Rua Itapeva no. 474, Sao Paulo - Brazil, 01332-000.} \\
\\
Sao Paulo School of Economics - FGV \\
\footnotesize
First Draft: June, 2016 \\
\footnotesize
This Draft: January, 2021
}

\date{}
\maketitle

\setstretch{1}

\begin{center}


\

%

\

\textbf{Abstract}

\end{center}

\onehalfspacing

We analyze the properties of the Synthetic Control (SC) and related estimators when the pre-treatment fit is imperfect. In this framework, we show that these  estimators are generally biased if treatment assignment is correlated with unobserved confounders, even when the number of pre-treatment periods goes to infinity. Still, we  show that a demeaned version of the SC method can substantially improve in terms of bias and variance relative to the difference-in-difference estimator. We also derive a specification test for the demeaned SC estimator in this setting with imperfect pre-treatment fit.  Given our theoretical results, we provide practical guidance for applied researchers on how to justify the use of such estimators in empirical applications.

\

\textbf{Keywords:}  synthetic control; difference-in-differences; policy evaluation; linear factor model

\textbf{JEL Codes:} C13; C21; C23

\newpage


\onehalfspacing

\section{Introduction}
\label{Introduction}

In a series of influential papers, \cite{Abadie2003}, \cite{Abadie2010}, and \cite{Abadie2015} proposed the Synthetic Control (SC) method as an alternative to estimate treatment effects in comparative case studies when there is only one treated unit. The main idea of the SC method is to use the pre-treatment periods to estimate weights such that a weighted average of the control units reconstructs the pre-treatment outcomes of the treated unit, and then use these weights to compute the counterfactual of the treated unit in case  it were not treated. 
According to \cite{Athey_Imbens}, \textit{``the simplicity of the idea, and the obvious improvement over the standard methods, have made this a widely used method in the short period of time since its inception''}, making it ``\emph{arguably the most important innovation in the policy evaluation literature in the last 15 years}''.   As one of the main advantages that helped popularize the method, \cite{Abadie2010} derive conditions under which the SC estimator would allow confounding unobserved characteristics with time-varying effects, as long as there exist  weights such that a weighted average of the control units fits the outcomes of the treated unit for  a long set of pre-intervention periods.

In this paper, we analyze the properties of the SC and related estimators when potential outcomes are determined by a linear factor model, which is the structure considered by \cite{Abadie2010} and \cite{Abadie2020} to derive the main theoretical justifications for the SC estimator. Differently from \cite{Abadie2010}, we consider the case in which the pre-treatment fit is imperfect.\footnote{We refer to ``imperfect pre-treatment fit'' as a setting in which it is not assumed existence of weights such that a weighted average of the outcomes of the control unit perfectly fits the outcome of the treated unit for all pre-treatment periods. The perfect pre-treatment fit condition is presented in equation 2 of \cite{Abadie2010}. }    In a model with ``non-diverging'' common factors and a fixed number of control units ($J$), we show that the SC weights  converge in probability  to weights that do \textit{not}, in general, reconstruct the factor loadings of the treated unit  when the number of pre-treatment periods ($T_0$) goes to infinity.\footnote{We refer to  ``non-diverging'' common factors when the pre-treatment average of  of the first and second moments of the common factors converge in probability to a constant. We focus on the SC specification that uses the outcomes of all pre-treatment periods as  predictors. Specifications that use the average of the pre-treatment periods outcomes and other covariates as  predictors are also considered  in Appendix \ref{A_alternatives}. } 
This happens because, in this setting, the SC weights converge to weights that simultaneously attempt to match the factor loadings of the treated unit \textit{and} to minimize the variance of a linear combination of the idiosyncratic shocks. Therefore, weights that reconstruct the factor loadings of the treated unit are not generally the solution to this problem, even if such weights exist. While in many  applications $T_0$ may not be large enough to justify large-$T_0$ asymptotics (e.g.  \cite{Doudchenko}),   our results can also be interpreted as the SC weights not converging to weights that reconstruct the factor loadings of the treated unit \emph{even when $T_0$ is large}. 

 As a consequence, the SC estimator is biased if treatment assignment is correlated with the unobserved heterogeneity in this setting, even when the number of pre-treatment periods goes to infinity. The intuition is the following: if treatment assignment is correlated with  common factors in the post-treatment periods, then we would need a SC unit that is affected in exactly the same way by these common factors as the treated unit, but did not receive the treatment, to obtain an unbiased estimator. However, this condition is not attained when the pre-treatment fit is imperfect, even when $T_0$ is large.\footnote{ \cite{Rothstein} derive finite-sample bounds on the bias of the SC estimator, and show that the  bounds they derive do not converge to zero when $J$ is fixed and $T_0 \rightarrow \infty$. This is consistent with our results, but does not directly imply that the SC estimator is asymptotically biased when $J$ is fixed and $T_0 \rightarrow \infty$. In contrast, our result on the asymptotic bias of the SC estimator imply that it would be impossible to derive bounds that converge to zero in this case. Moreover, we show the conditions under which the estimator is asymptotically biased.  }  Our results  are  not as conflicting with the results from \cite{Abadie2010} as it might appear at first glance. The asymptotic bias of the SC estimator, in our framework,  goes to zero when the variance of the idiosyncratic shocks is small. This is the case in which one should expect to have a close-to-perfect pre-treatment fit  when $T_0$ is large, which is the setting the SC estimator was originally designed for. Our theory complements the theory developed by \cite{Abadie2010}, by considering the properties of the SC estimator when the pre-treatment fit is imperfect.

One important implication of the SC restriction to convex combinations of the control units is that the SC estimator may  be biased even if treatment assignment is only correlated with time-invariant unobserved variables, which is essentially the identification assumption of the difference-in-differences (DID) estimator. We therefore consider a modified SC estimator, where we demean the data using information from the pre-intervention period, and then construct the SC estimator using the demeaned data.\footnote{Demeaning the data before applying the SC estimator is equivalent to relaxing the non-intercept constraint, as suggested, in parallel to our paper, by \cite{Doudchenko}. We formally analyze the implication of this modification to the bias of the SC estimator. The estimator proposed by \cite{Hsiao} relaxes not only the the non-intercept but also the adding-up and non-negativity constraints. We consider the properties of the estimator proposed by \cite{Hsiao}  in Remark \ref{Remark_other}. } An advantage of demeaning is that it is possible to, under some conditions, show that the SC estimator dominates the  DID estimator in terms of variance and bias in this setting. {Moreover, we provide a specification test for the validity of the demeaned SC estimator in this setting with an imperfect pre-treatment fit. Finally, we also show that, in a setting with both non-diverging and diverging common factors, diverging common shocks would not generate asymptotic bias in the demeaned SC estimator, but we need that  treatment assignment is uncorrelated with the non-diverging common factors to guarantee  asymptotic unbiasedness.\footnote{For this result, we need an assumption of existence of weights that reconstruct the factor loadings of the treated unit associated with the diverging common factors. This result holds for the demeaned SC estimator, but not  for the original SC estimator. }       }

If potential outcomes follow a linear factor model structure, then it would be possible to construct a counterfactual for the treated unit if we could consistently estimate the factor loadings.\footnote{Assuming that it is possible to construct a linear combination of the factor loadings of the control units that reconstructs the factor loadings of the treated unit, then this linear combination of the control units' outcomes would provide an unbiased counterfactual for the treated unit.  } However, with fixed $J$, it is only possible to estimate factor loadings consistently under strong assumptions on the idiosyncratic shocks (e.g., \cite{Bai2003} and \cite{anderson1984introduction}). Therefore, the asymptotic bias we find for the SC  estimator is consistent with the results from a large literature on factor models. We show that the asymptotic bias we derive for the SC estimator also applies to other related panel data approaches that have been studied in the context of an imperfect pre-treatment fit, such as \cite{Hsiao},  \cite{Li}, \cite{Carvalho2015},  \cite{Carvalho2016b},  and \cite{Masini}, in settings with fixed $J$. We show that these papers rely on assumptions that implicitly imply no selection on unobservables, which clarifies why their consistency/unbiasedness results when $J$ is fixed are not conflicting with our main results. 

 Also consistent with the literature on factor models, if we impose restrictions on the idiosyncratic shocks, then there are asymptotically unbiased alternatives. For example,   \cite{Robust_SC} propose a  de-noising algorithm, but it relies on idiosyncratic errors being serially uncorrelated.\footnote{This is also the case for an  IV-like SC estimator we presented in an earlier version of this paper \citep{FP_old}.} However, this may not be an appealing assumption  in common applications. To the best of our knowledge, there is no estimator that is asymptotically valid in settings with fixed $J$ without assuming such kind of additional assumptions.  Finally, \cite{Powell2} proposes a 2-step estimation in a setting with fixed $J$ in which the SC unit is constructed based on the fitted values of the outcomes on unit-specific time trends. However,  we show that the demeaned  SC method is already very efficient in controlling for polynomial time trends

 When both $J$ and $T_0$ diverge, \cite{Magnac},  \cite{XU},   \cite{Imbens_matrix}, and \cite{SDID} provide alternative estimation methods that are asymptotically valid when the number of both pre-treatment periods and controls increase. This is also consistent with the literature on linear factor models, which shows that these models can be consistently estimated in large panels (e.g., \cite{Bai2003}, \cite{baing}, \cite{Bai}, and \cite{Martin}).  \cite{Ferman} provides conditions under which the original and the demeaned SC estimators are also asymptotically unbiased in this setting with large $J$/large $T_0$. The  main requirement is that, as the number of control units increases, there are weights diluted among an increasing number of control units that recover the factor loadings of the treated unit. In contrast, our results on the bias of the SC estimator provide a better approximation for the properties of the SC estimator for cases in which this condition on the weights is not valid, and/or when $J$ and $T_0$ are roughly of the same size, but they are not large enough, so that a large $T_0$/large $J$ asymptotics does not provide a good approximation.

The remainder of this paper proceeds as follows. In Section \ref{SC_model} we describe our setting and provide a  brief review of the SC estimator. The main results are presented in Section \ref{Sec_main_results}. We then present a Monte Carlo (MC) simulation in Section \ref{Sec_MC}, and an empirical illustration in Section \ref{Sec_EI}.  In Section \ref{recommendations} we provide a guideline for applied researchers on how to justify the use of the SC method, based on our theoretical results.  We conclude in Section \ref{Conclusion}.

\section{Base Model} \label{SC_model}

Suppose we have a balanced panel of  $J+1$ units indexed by $j = 0,...,J$ observed on a total of $T$ periods.  We want to estimate the treatment effect of a policy change that affected only unit $j=0$, and we have information before and after the policy change.  Let $\mathcal{T}_0$ ($\mathcal{T}_1$) be the set of time indices in the pre-treatment (post-treatment) periods. We assume that potential outcomes follow a linear factor model. 

\begin{assumption}[potential outcomes] \label{assumption_LFM}
\normalfont
Potential outcomes when unit $j$ at time $t$ is treated ($y_{jt}^I$) and non-treated ($y_{jt}^N$) are given by

\begin{eqnarray} \label{model}
\begin{cases} y_{jt}^N = c_j + \delta_t + \lambda_t \mu_j + \epsilon_{jt}  \\ 
y_{jt}^I = \alpha_{jt} + y_{jt}^N, \end{cases}
\end{eqnarray}
where $\delta_t$ is an unknown common factor with constant factor loadings across units, $c_j$ is an unknown time-invariant fixed effect, $\lambda_t$ is a $(1 \times F)$ vector of common factors, $\mu_j$ is a $(F \times 1)$ vector of unknown factor loadings, and the error terms $\epsilon_{jt}$ are unobserved idiosyncratic shocks.

\end{assumption}

In principle, the terms $\delta_t$ and $c_j$  could be included in the linear factor structure  $\lambda_t\mu_j$. We include these separately because we want to consider $\lambda_t$ as a vector of common factors that do not have constant effects across units and that do not include a time-invariant fixed effect. Therefore, we can think of   $\lambda_t$ as time-varying unobservables that may affect different units differently. In order to simplify the exposition of our main results, we consider the model without  observed covariates $Z_j$. In Appendix Section \ref{theta} we consider the model with covariates. 

The treatment effect on unit $j$ at time $t$ is given by $\alpha_{jt}$, and the main goal of the SC method is to estimate the effect of the treatment for unit 0 for each post-treatment  $t$, that is $\{ \alpha_{1t} \}_{t \in \mathcal{T}_1}$. However, we only observe $y_{jt} = d_{jt} y_{jt}^I  +  (1-d_{jt}) y_{jt}^N$, where $d_{jt}=1$ if unit $j$ is treated at time $t$. 

We treat the vector of unknown factor loadings ($\mu_j$) and the treatment assignment as fixed, and consider the properties of the SC estimator under a repeated sampling framework over the distributions of the common factors ($\lambda_t$) and of the idiosyncratic shocks ($\epsilon_{jt}$). Alternatively, we can think that we have an underlying model where treatment assignment, $\mu_j$, $\lambda_t$, and $\epsilon_{jt}$ are stochastic, but we are conditioning on the treatment assignment and on the factor loadings.
Assumption \ref{assumption_sample} defines the observed sample.

\begin{assumption}[sampling] \label{assumption_sample}

\normalfont We observe a realization of $\{ y_{0t} ,..., y_{Jt} \}_{t \in \mathcal{T}_0 \cup \mathcal{T}_1}$, where  $y_{jt} = d_{jt} y_{jt}^I  +  (1-d_{jt}) y_{jt}^N$, while $d_{jt}=1$ if $j=0$ and $t\in \mathcal{T}_1$, and zero otherwise. Potential outcomes are determined by equation (\ref{model}). We treat $\{c_j, \mu_j\}_{j=0}^J$ as fixed, and $\{ \lambda_t \}_{t \in \mathcal{T}_0 \cup \mathcal{T}_1}$ and $\{ \epsilon_{jt} \}_{t \in \mathcal{T}_0 \cup \mathcal{T}_1}$ for $j=0,...,J$ as stochastic.

\end{assumption}

 In the assumption below we consider the identification assumption usually considered in the SC literature. 

\begin{assumption}[idiosyncratic shocks] \label{assumption_exogeneity}

\normalfont $\mathbb{E}[\epsilon_{jt} ]=0$  for all $j \in \{0,1,...,J\}$ and $t \in \mathcal{T}_1 \cup \mathcal{T}_0$.

\end{assumption}

Assumption \ref{assumption_exogeneity}, combined with the fact that we consider treatment assignment and factor loadings as fixed, compose the main restrictions we impose on the treatment assignment mechanism. It is easier to think about the assignment mechanism if we consider an underlying model in which treatment assignment and factor loadings are stochastic, and the expectation in Assumption \ref{assumption_exogeneity} is conditional on the realization of these variables. In this case, Assumption \ref{assumption_exogeneity} implies that idiosyncratic shocks are mean-independent from the treatment assignment. However, it does not impose any restriction on the dependence between treatment assignment and the factor structure. In particular,  Assumption \ref{assumption_exogeneity} does  not impose any restriction on the distribution of $\lambda_t$ for $t \in \mathcal{T}_1$. We refer to that as ``selection on unobservables'', meaning that treatment assignment may be correlated with the factor structure, but is uncorrelated with the idiosyncratic shocks.\footnote{This assumptions is essentially the same as the ones considered by, for example, \cite{Abadie2010}, \cite{Magnac} and  \cite{Rothstein}  (in their Section 4.1), where they assume unconfoundness conditional on the unobserved factor loadings.}

Let $\boldsymbol{\mu} \equiv [\mu_1 \hdots \mu_J]'$, $\mathbf{c} \equiv [c_1 \hdots c_J]'$,   $\mathbf{y}_t \equiv (y_{1t}, \hdots, y_{Jt})$ and $\boldsymbol{\epsilon}_t \equiv (\epsilon_{1t}, \hdots, \epsilon_{Jt})$. Following the original SC papers, we start restricting to convex combinations of the control units, so we consider weights in  $\Delta^{J-1} \equiv \{ (w_1,...,w_J) \in \mathbb{R}^{J} | w_j \geq 0 \mbox{ and } \sum_{j=1}^J w_j = 1\}$. We define  $\widetilde \Phi = \{  \textbf{w} \in \Delta^{J-1} ~ | ~ \mu_0 = \boldsymbol{\mu}' \mathbf{w} \mbox{ and } c_0 = \mathbf{c}' \mathbf{w} \}$. Therefore, $\mathbf{w} \in \widetilde \Phi$ is such that a weighted average of the control units absorbs all time correlated shocks of unit 0, $  \lambda_t \mu_0$, and also absorbs the time-invariant fixed effects.  Assuming $\widetilde \Phi \neq \varnothing$, if we knew $\textbf{w}^\ast \in  \widetilde \Phi $, then we could consider an \emph{infeasible} SC estimator using these weights, $\hat \alpha_{0t}^\ast =  y_{0t} -  \mathbf{y}_t '  {\mathbf{w}^\ast}   $. For a given $t \in \mathcal{T}_1$, we would have 
\begin{eqnarray} \label{infeasible}
\hat \alpha^\ast_{0t} = y_{0t} - \mathbf{y}_t ' \mathbf{w}^\ast    =  \alpha_{0t} + \left( \epsilon_{0t} - \boldsymbol{\epsilon}_t ' \mathbf{w}^\ast \right).
\end{eqnarray}

Therefore, under Assumption \ref{assumption_exogeneity},  $\mathbb{E}[\hat \alpha^\ast_{0t} ] = \alpha_{0t}$, which implies that this infeasible SC estimator is unbiased.    Intuitively, the infeasible SC estimator constructs a SC unit for the counterfactual of $y_{0t}$ that is affected in the same way as unit 0 by each of the common factors (that is, $\mu_0 = \boldsymbol{\mu} ' \mathbf{w}^\ast$) and has the same time-invariant fixed effect ($c_0 = \mathbf{c} ' \mathbf{w}^\ast$), but did not receive treatment. Therefore, the only difference between unit 0 and this SC unit, beyond the treatment effect, would be given by the idiosyncratic shocks, which are assumed  to have mean zero  (Assumption \ref{assumption_exogeneity}), implying that this infeasible SC estimator is  unbiased.

It is important to note that  \cite{Abadie2010} do note make any assumption on $\widetilde \Phi \neq \varnothing$. Instead, they consider that there is a set of weights $\widetilde{\mathbf{w}}^\ast \in \Delta^{J-1}$ that satisfies $y_{0t} = \mathbf{y}_t'  \widetilde{\mathbf{w}}^\ast$ for all $t \in \mathcal{T}_0$.\footnote{\cite{Abadie2010} assume that such weights also provide perfect balance in terms of observed covariates. \cite{FB} analyze the case in which the perfect balance on covariates assumption is dropped, but there is still perfect balance on pre-treatment outcomes.}  We call the existence of such weights $\widetilde{\mathbf{w}}^\ast \in \Delta^{J-1}$ as a ``perfect pre-treatment fit'' condition. While subtle, this reflects a crucial difference between our setting and the setting considered in the original SC papers. \cite{Abadie2010} and \cite{Abadie2015}    consider the properties of the SC estimator  conditional on having a perfect pre-intervention fit. As stated by \cite{Abadie2015}, they \textit{``do not recommend using this method when the pretreatment fit is poor or the number of pretreatment periods is small''}. 

\cite{Abadie2010} provide conditions under which existence of $\widetilde{\mathbf{w}}^\ast \in \Delta^{J-1}$ such that $y_{0t} = \mathbf{y}_t'  \widetilde{\mathbf{w}}^\ast$ for all $t \in \mathcal{T}_0$  (for large $T_0$) implies that $\mu_{0} \approx \boldsymbol{\mu} ' \widetilde{\mathbf{w}}^\ast$.  In this case, the bias of the SC estimator would be bounded by a function that goes to zero when $T_0$ increases. We depart from the original SC setting in that we consider a setting with imperfect pre-treatment fit, meaning that  we do not assume existence of  $\widetilde{\mathbf{w}}^\ast \in \Delta^{J-1}$ such that $y_{0t} = \mathbf{y}_t'  \widetilde{\mathbf{w}}^\ast$ for all $t \in \mathcal{T}_0$. The motivation to analyze the SC method in our setting is that the SC estimator has been widely used even when the pre-treatment fit is poor. Therefore, it is important to understand the properties of the estimator in this setting. Moreover, we show that the estimator can provide important improvements relative to DID even when the fit is imperfect, although in this case we should be more careful about the conditions for unbiasedness. 

In order to implement their method, \cite{Abadie2010} recommend a nested minimization problem using the pre-intervention data to estimate the SC weights.  We focus on the case where one includes all pre-intervention outcome values as  predictors. In this case, the nested optimization problem proposed by \cite{Abadie2010}  simplifies to\footnote{ See \cite{Kaul2015} and \cite{Doudchenko}. }
\begin{eqnarray}  \label{objective_function}
\widehat{\textbf{w}}^{\mbox{\tiny SC}} &=& \underset{{\textbf{w} \in \Delta^{J-1}}}{\mbox{argmin}} \frac{1}{T_0} \sum_{t \in \mathcal{T}_0} \left[ y_{0t} - \mathbf{y}_{t}' \mathbf{w}   \right]^2.
\end{eqnarray}

For a given $t \in \mathcal{T}_1$, the SC estimator is then defined by $\hat \alpha_{0t} = y_{0t} - \mathbf{y}_t ' \widehat{\mathbf{w}}^{\mbox{\tiny SC}}$.
\cite{FPP} provide conditions under which the SC estimator using all pre-treatment outcomes as predictors will be asymptotically equivalent, when $T_0 \rightarrow \infty$, to any alternative SC estimator such that the number of pre-treatment outcomes used as predictors goes to infinity with $T_0$, even for specifications that include other covariates. Therefore, our results are also valid for these SC specifications under these conditions. In Appendix \ref{A_alternatives} we also consider  SC estimators using (1)  the average of the pre-intervention outcomes as predictor,  and (2)  other time-invariant covariates in addition to the average of the pre-intervention outcomes as predictors.

\section{Main results } \label{Sec_main_results}

We consider the asymptotic properties of the SC and alternative estimators when $T_0 \rightarrow \infty$ and $J$ is fixed. As we discuss in Remark \ref{R_small_T}, our results are also relevant for the case in which $T_0$ is small. We consider the properties of the original SC estimator in Section \ref{Bias_SC} in a setting in which common factors are ``non-diverging'', in the sense that the second moments of the pre-treatment averages of the common factors and of the idiosyncratic shocks converge in probability to non-stochastic constants. We propose and analyze a demeaned version of the SC estimator in Section \ref{Sub_demeaned} in this setting. Then we discuss in Section \ref{Setting2} a setting in which some common factors are ``diverging'', in the sense that the variance of the idiosyncratic shocks become asymptotically irrelevant relative to the variance of the common shocks when $T_0$ increases. This setting is analyzed in more details in Appendix \ref{Appendix_diverging}.

\subsection{Asymptotic bias of the original SC estimator} \label{Bias_SC}

We consider a settings in which the  pre-treatment averages of the first and second moments of the common factors and the idiosyncratic shocks converge in probability to non-stochastic constants. {Importantly, note we do not require that the  observed outcomes $y_{jt}$  satisfy these conditions,  because we do not impose any restriction on $\delta_t$. We discuss in Section  \ref{Setting2} the case in which diverging common shocks may have heterogeneous effects across units.} Let $\varepsilon_t = (\epsilon_{0t},...,\epsilon_{Jt})$.

\begin{assumption}[common and idiosyncratic shocks] \label{assumptions_lambda}

\normalfont  $\frac{1}{T_0} \sum_{t \in \mathcal{T}_0} \lambda_t    \buildrel p \over \rightarrow  0$, $\frac{1}{T_0} \sum_{t \in \mathcal{T}_0} \varepsilon_{t}    \buildrel p \over \rightarrow   0$, \\ $\frac{1}{T_0} \sum_{t \in \mathcal{T}_0} \lambda_t'\lambda_t    \buildrel p \over \rightarrow   \Omega_0$ positive semi-definite, $\frac{1}{T_0} \sum_{t \in \mathcal{T}_0} \varepsilon_{t}\varepsilon_{t}'    \buildrel p \over \rightarrow   \sigma^2_\epsilon I_{J+1}$, and   $\frac{1}{T_0} \sum_{t \in \mathcal{T}_0} \varepsilon_{t}\lambda_{t}    \buildrel p \over \rightarrow  0$ when $T_0 \rightarrow \infty$. 

\end{assumption}

Assumption \ref{assumptions_lambda} allows for serial correlation for both idiosyncratic shocks and common factors. The only restriction on the serial correlation is that we can apply a law of large numbers so that these pre-treatment averages converge in probability.   We assume $\frac{1}{T_0} \sum_{t \in \mathcal{T}_0} \varepsilon_{t}\varepsilon_{t}'    \buildrel p \over \rightarrow   \sigma^2_\epsilon I_{J+1}$ in order to simplify the exposition of our results. However, this  can be easily replaced by $\frac{1}{T_0} \sum_{t \in \mathcal{T}_0} \varepsilon_{t}\varepsilon_{t}'    \buildrel p \over \rightarrow  \Sigma$ for any symmetric positive definite $(J+1) \times (J+1)$ matrix $\Sigma$, so that idiosyncratic shocks may be heteroskedastic and correlated across $j$. Assuming that $\frac{1}{T_0} \sum_{t \in \mathcal{T}_0} \lambda_t    \buildrel p \over \rightarrow  \omega_0$, setting $\omega_0 = 0 $ is without loss of generality.\footnote{If $\omega_0 \neq 0$, then we can consider an observably equivalent model with $\omega_0 = 0$ by adjusting $c_j$. }
Assumption \ref{assumptions_lambda} would be satisfied if, for example,   $(\varepsilon'_{t},\lambda_t)$ is $\alpha-$mixing with exponential speed, with uniformly bounded fourth moments in the pre-treatment period, and  $\varepsilon_{t}$ and $\lambda_t$ are independent. Note that this would allow the distribution of $\lambda_t$ to be different when we consider pre-treatment periods closer to the assignment of the treatment.   In this case,  $\lambda_t$ would not be stationary, but Assumption \ref{assumptions_lambda} would still hold. Finally, note that we do not impose any restriction on $\delta_t$.

We consider in Proposition \ref{main_result} the asymptotic distributions of the original SC in this setting. 

\begin{proposition} \label{main_result}
\normalfont Under Assumptions \ref{assumption_LFM} to  \ref{assumptions_lambda},  { $\mathbf{\widehat w}^{\mbox{\tiny SC}}  \buildrel p \over \rightarrow  \mathbf{\bar w}$} when $T_0 \rightarrow \infty$, where  $(c_0,\mu_0) \neq (\mathbf{c}'\mathbf{ \bar w},\boldsymbol{\mu}'\mathbf{ \bar w})$, unless $ \sigma_\epsilon^2=0 $ or $\exists \textbf{w} \in \widetilde \Phi |  \textbf{w} \in  \underset{{\textbf{w} \in \Delta^{J-1}}}{\mbox{argmin}} \left\{  \mathbf{w}'\mathbf{w} \right\}$. Moreover, for $t \in \mathcal{T}_1$,
\begin{eqnarray} \label{SC_asymptotic_distribution}
\hat \alpha_{0t} = y_{0t} -  \mathbf{y}_t ' \widehat{\mathbf{w}}^{\mbox{\tiny SC}}     \buildrel p \over \rightarrow  \alpha_{0t} +  \lambda_t \left(\mu_0 - \boldsymbol{\mu}' \mathbf{\bar w} \right) + \left(c_0 - \mathbf{c}' \mathbf{\bar w} \right)  + \left( \epsilon_{0t} - \boldsymbol{\epsilon}_t ' \mathbf{\bar w} \right)  \mbox{ when } T_0 \rightarrow \infty.
\end{eqnarray}
\end{proposition}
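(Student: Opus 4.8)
The plan is to treat $\widehat{\mathbf{w}}$ as an extremum estimator: show its sample criterion in (\ref{objective_function}) converges \emph{uniformly} on $\Delta^{J-1}$ to the population criterion in (\ref{objective_function_limit}), invoke the standard argmin‑consistency argument to get $\widehat{\mathbf{w}}\buildrel p \over \rightarrow\bar{\mathbf{w}}$, and then obtain (\ref{SC_asymptotic_distribution}) from the continuous mapping theorem. First I would substitute Assumption \ref{assumption_LFM} into the objective. Since every $\mathbf{w}\in\Delta^{J-1}$ satisfies $\mathbf{1}'\mathbf{w}=1$, the common term $\delta_t$ cancels, so $y_{0t}-\mathbf{y}_t'\mathbf{w}=\lambda_t(\mu_0-\boldsymbol{\mu}'\mathbf{w})+(\epsilon_{0t}-\boldsymbol{\epsilon}_t'\mathbf{w})$. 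Writing $\epsilon_{0t}-\boldsymbol{\epsilon}_t'\mathbf{w}=(1,-\mathbf{w}')\varepsilon_t$ and squaring, the pre‑treatment average splits into (i) $(\mu_0-\boldsymbol{\mu}'\mathbf{w})'\big(\tfrac1{T_0}\sum_{t\in\mathcal{T}_0}\lambda_t'\lambda_t\big)(\mu_0-\boldsymbol{\mu}'\mathbf{w})$, (ii) $(1,-\mathbf{w}')\big(\tfrac1{T_0}\sum_{t\in\mathcal{T}_0}\varepsilon_t\varepsilon_t'\big)(1,-\mathbf{w}')'$, and (iii) a term linear in $\tfrac1{T_0}\sum_{t\in\mathcal{T}_0}\varepsilon_t\lambda_t$. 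Assumption \ref{assumptions_lambda} sends (i)$\,\to(\mu_0-\boldsymbol{\mu}'\mathbf{w})'\Omega_0(\mu_0-\boldsymbol{\mu}'\mathbf{w})$, (ii)$\,\to\sigma_\epsilon^2(1+\mathbf{w}'\mathbf{w})$, and (iii)$\,\to0$ pointwise in $\mathbf{w}$; since the criterion is a quadratic $c_{T_0}+\mathbf{b}_{T_0}'\mathbf{w}+\mathbf{w}'A_{T_0}\mathbf{w}$ whose finitely many coefficients are exactly these sample moments and $\|\mathbf{w}\|\le1$ on the simplex, pointwise convergence of the moments upgrades to uniform convergence of the criterion to $Q(\mathbf{w})=\sigma_\epsilon^2(1+\mathbf{w}'\mathbf{w})+(\mu_0-\boldsymbol{\mu}'\mathbf{w})'\Omega_0(\mu_0-\boldsymbol{\mu}'\mathbf{w})$.

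Next, $Q$ is continuous on the compact convex set $\Delta^{J-1}$, and when $\sigma_\epsilon^2>0$ it is strictly convex (sum of the strictly convex $\sigma_\epsilon^2\mathbf{w}'\mathbf{w}$ and a convex quadratic, $\Omega_0\succeq0$), so it has a unique minimizer $\bar{\mathbf{w}}$; the standard consistency theorem for extremum estimators then gives $\widehat{\mathbf{w}}\buildrel p \over \rightarrow\bar{\mathbf{w}}$. (Under the generalization $\tfrac1{T_0}\sum\varepsilon_t\varepsilon_t'\to\Sigma$ positive definite, strict convexity follows from positive‑definiteness of the lower‑right $J\times J$ block of $\Sigma$; the case $\sigma_\epsilon^2=0$ is the first of the two stated exceptions.)

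To show $\mu_0\neq\boldsymbol{\mu}'\bar{\mathbf{w}}$ outside the exceptions, suppose $\sigma_\epsilon^2>0$ and, toward a contradiction, $\mu_0=\boldsymbol{\mu}'\bar{\mathbf{w}}$, i.e.\ $\bar{\mathbf{w}}\in\Phi$. The gradient of the fitting term $(\mu_0-\boldsymbol{\mu}'\mathbf{w})'\Omega_0(\mu_0-\boldsymbol{\mu}'\mathbf{w})$ equals $-2\boldsymbol{\mu}\Omega_0(\mu_0-\boldsymbol{\mu}'\mathbf{w})$, which vanishes at $\bar{\mathbf{w}}$, so $\nabla Q(\bar{\mathbf{w}})=2\sigma_\epsilon^2\bar{\mathbf{w}}$. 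The variational inequality characterizing a minimizer of the convex $Q$ over the convex set $\Delta^{J-1}$---$\nabla Q(\bar{\mathbf{w}})'(\mathbf{w}-\bar{\mathbf{w}})\ge0$ for all $\mathbf{w}\in\Delta^{J-1}$---then reduces, after dividing by $2\sigma_\epsilon^2>0$, to $\bar{\mathbf{w}}'(\mathbf{w}-\bar{\mathbf{w}})\ge0$ for all $\mathbf{w}\in\Delta^{J-1}$, which is precisely the (necessary and sufficient, by convexity) first‑order condition for $\bar{\mathbf{w}}$ to solve $\min_{\mathbf{w}\in\Delta^{J-1}}\mathbf{w}'\mathbf{w}$. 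Hence $\bar{\mathbf{w}}\in\Phi\cap\mbox{argmin}_{\mathbf{w}\in\Delta^{J-1}}\{\mathbf{w}'\mathbf{w}\}$, the second stated exception; contrapositively, if neither exception holds then $\mu_0\neq\boldsymbol{\mu}'\bar{\mathbf{w}}$. (Since $\mbox{argmin}_{\Delta^{J-1}}\mathbf{w}'\mathbf{w}=\{(1/J,\dots,1/J)\}$, this exception is just the knife‑edge coincidence $\mu_0=\tfrac1J\sum_{j=1}^J\mu_j$.)

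Finally, for $t>0$ with $D(0,0)=1$, using $y_{0t}=y_{0t}^I=\alpha_{0t}+\delta_t+\lambda_t\mu_0+\epsilon_{0t}$ and the same cancellation of $\delta_t$, the estimator is $\hat\alpha_{0t}=y_{0t}-\mathbf{y}_t'\widehat{\mathbf{w}}=\alpha_{0t}+(\epsilon_{0t}-\boldsymbol{\epsilon}_t'\widehat{\mathbf{w}})+\lambda_t(\mu_0-\boldsymbol{\mu}'\widehat{\mathbf{w}})$, an affine---hence continuous---function of $\widehat{\mathbf{w}}$ whose coefficients $\boldsymbol{\epsilon}_t$ and $\lambda_t\boldsymbol{\mu}$ are fixed random quantities (in particular $O_p(1)$, not varying with $T_0$); combining $\widehat{\mathbf{w}}\buildrel p \over \rightarrow\bar{\mathbf{w}}$ with the continuous mapping theorem gives (\ref{SC_asymptotic_distribution}). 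The hard part is the characterization in the third paragraph: the substance is recognizing that hypothesizing $\bar{\mathbf{w}}\in\Phi$ kills the contribution of the fitting term to the first‑order condition, so optimality of $\bar{\mathbf{w}}$ for $Q$ degenerates to optimality for the pure variance problem $\min\mathbf{w}'\mathbf{w}$, which $\Phi$ generically does not attain. The one other point to be careful about is the uniform‑convergence claim, which argmin‑consistency relies on and which is the only place compactness of $\Delta^{J-1}$ is used; with the simplex it is immediate but should be stated explicitly.
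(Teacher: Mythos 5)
Your proposal is correct and follows essentially the same route as the paper's proof: uniform convergence of the quadratic criterion on the compact simplex, argmin consistency for the M-estimator, a first-order-condition argument showing that a minimizer lying in $\Phi$ would have to solve $\min_{\mathbf{w}\in\Delta^{J-1}}\mathbf{w}'\mathbf{w}$, and the continuous mapping theorem for the limit of $\hat\alpha_{0t}$. Your variational-inequality version of the characterization is logically equivalent to the paper's directional-derivative argument (which perturbs a point of $\Phi$ toward $(1/J,\dots,1/J)$, the unique minimizer of $\mathbf{w}'\mathbf{w}$), and your observation that pointwise convergence of the finitely many sample-moment coefficients of a quadratic upgrades to uniform convergence on the simplex is a clean substitute for the paper's Lipschitz bound via Corollary 2.2 of Newey.
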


Proposition \ref{main_result} shows that the weights of the original  SC estimators will generally not converge to weights that recover the factor loadings of the treated unit. The intuition is that  $\mathbf{\widehat w}^{\mbox{\tiny SC}} $ converges in probability to $\mathbf{w} \in \Delta^{J-1}$  that minimizes the probability limit of equation (\ref{objective_function}), which is given by
\begin{eqnarray} \label{Q0}
Q_0(\mathbf{w}) = \left[ \left( c_0 - \mathbf{c}'\mathbf{w}   \right)^2 + \left( \mu_0 - \boldsymbol{\mu}'\mathbf{w}   \right)' \Omega_0  \left( \mu_0 - \boldsymbol{\mu}'\mathbf{w}  \right) \right]+ \sigma_\epsilon^2 \left( 1+ \mathbf{w}' \mathbf{w}    \right).
\end{eqnarray}

This objective function has two parts. The first one reflects the presence of  common factors $\lambda_t$ and differences in the fixed effects that remain after we choose the weights to construct the SC unit. If $\widetilde \Phi \neq \varnothing$, then we can set this part equal to zero by choosing $\textbf{w}^\ast \in \widetilde \Phi$. However, this objective function also depends on the variance of a weighted average of the idiosyncratic shocks $\epsilon_{jt}$, implying that choosing  $\textbf{w}^\ast \in \widetilde \Phi$ will not generally be the solution to this problem. As a consequence, the SC weights will generally converge to weights that do not recover the factor loadings of the treated unit, even if $\widetilde \Phi \neq \varnothing$. The SC weights would only asymptotically recover the factor loadings of the treated unit if  $ \sigma_\epsilon^2=0 $ or $\exists \textbf{w} \in \widetilde \Phi   \mbox{ such that } \textbf{w} \in \mbox{argmin}_{\textbf{w} \in \Delta^{J-1}} \left\{ \mathbf{w}'\mathbf{w}   \right\}$. {Given this rationale, such distortion on the SC weights will tend to be smaller when the common trends are much stronger than the idiosyncratic shocks (so that the second part of the objective function $Q_0(\mathbf{w})$ becomes less relevant).} We present details of proof in Appendix \ref{Prop1}. Another intuition for this result is that the outcomes of the controls work as proxy variables for the factor loadings of the treated unit, but they are measured with error. We present this interpretation in more detail in Appendix \ref{finite_T}.

Proposition \ref{main_result} also shows that the SC estimator  converges in probability  to the parameter we want to estimate ($ \alpha_{0t}$) plus linear combinations of contemporaneous idiosyncratic shocks and  common factors.\footnote{For simplicity, we consider the case in which $\alpha_{0t}$ is a fixed parameter. More generally, we could consider $\alpha_{0t}$  stochastic, and re-define the parameter of interest as $ \mathbb{E}[\alpha_{0t}] $. The intuition for all results would remain unchanged.} By Assumption \ref{assumption_exogeneity}, $\mathbb{E}[\epsilon_{jt}]=0$,  so whether this estimator is asymptotically unbiased depends crucially on the differences in how the treated and the SC units are affected by the common shocks, $\lambda_t \left(\mu_0 - \boldsymbol{\mu}' \mathbf{\bar w} \right)$, and on whether the SC unit reconstructs  $c_0$. We can guarantee asymptotic unbiasedness for the original SC estimator if we assume that, for $t \in \mathcal{T}_1$, $\mathbb{E} \left[ \lambda^k_t  \right]=0$ for all common factors $k$ such that $\mu^k_0 \neq \sum_{j \neq 0} \bar w_j^{\mbox{\tiny SC}} \mu^k_j$, and $c_0 = \mathbf{c}'\mathbf{w} $.\footnote{There could also be linear combinations of biases arriving from different common factors that end up cancelling out, but we see that as uninteresting  ``knife-edge'' cases.}  Therefore, once we relax the perfect fit condition, Assumption \ref{assumption_exogeneity}, which allows for selection on unobservables, is not sufficient to guarantee that the SC estimator is asymptotically unbiased. More specifically, Proposition \ref{main_result} shows that, once we relax the perfect fit condition, the original SC estimator will generally be asymptotically biased when treatment assignment is correlated with time-varying unobservables, and when the SC weights fail to recover the levels of the treated unit. This second conclusion implies that the SC estimator may be biased even when a DID estimator would be unbiased.

\begin{remark}
\normalfont
The discrepancy of our results with the results from \cite{Abadie2010} arises because we consider different frameworks. \cite{Abadie2010}  consider the properties of the SC estimator conditional on having a perfect pre-treatment fit. Our results are  not as conflicting with the results from \cite{Abadie2010} as they may appear at first glance. In a model with non-diverging common factors, the probability that one has a dataset at hand such that the SC weights provide a close-to-perfect pre-intervention fit with a moderate $T_0$ is close to zero, unless the variance of the idiosyncratic shocks is small. Therefore, our results agree with the theoretical results from  \cite{Abadie2010} in that the asymptotic bias of the SC estimator should be small in situations where one would expect to have a close-to-perfect fit for a large $T_0$. 

\end{remark}

\begin{remark}
\normalfont \label{R_small_T}
While many SC applications do not have a large  number of pre-treatment periods to justify large-$T_0$ asymptotics (see, for example,  \cite{Doudchenko}),   our results can also  be interpreted as the SC weights not converging to weights that reconstruct the factor loadings of the treated unit when $J$ is fixed \emph{even when $T_0$ is large}. In Appendix \ref{finite_T}, we show that the problem we present remains if we consider a setting with finite $T_0$.

\end{remark}

\begin{remark}
\normalfont

Related to Remark \ref{R_small_T}, the bias we derive for the SC estimator does not come from the difficulty in having a perfect pre-treatment fit when $T_0$ is large and $J$ is fixed. Rather, this would remain a problem when $J$ is fixed, even if $T_0$ is small enough so that a perfect pre-treatment fit is achieved due to over-fitting. 

\end{remark}

\begin{remark}
\normalfont

Having a larger $T_0$ could make the problem worse if the linear factor model for the post-treatment periods does not approximate well the potential outcomes in earlier periods. For example, imagine we have $y^N_{jt} = \lambda_t \mu_j + \epsilon_{jt}$ for the post-treatment periods and for the $M$ periods before the treatment, and $y^N_{jt} = \lambda_t \tilde \mu_j + \epsilon_{jt}$ for earlier periods, with $\mu_j$ potentially different from $\tilde \mu_j$. In this case, including more pre-treatment periods would likely induce more bias in the SC estimator, because in this case the first part of the objective function in (\ref{Q0}) would be minimized with a $\widetilde{\mathbf{w}}$ such that $ \mu_0 \neq  \boldsymbol{\mu}'\widetilde{\mathbf{w}}$. In any case, our results from Proposition \ref{main_result} should be interpreted as the SC estimator generally being asymptotically biased if there is selection on unobservables \emph{even when} $T_0 \rightarrow \infty$ \emph{and} the factor loadings are stable for all periods when $T_0 \rightarrow \infty$.

\end{remark}

\subsection{Comparison with DID estimator  \& the demeaned SC estimator } \label{Sub_demeaned}

In contrast to the SC estimator, the DID estimator  for the treatment effect in a given post-intervention period $t \in \mathcal{T}_1$ would be given by
\begin{eqnarray} 
\hat \alpha^{\tiny DID}_{0t} &=& y_{0t} - \frac{1}{J} \mathbf{y}_t ' \mathbf{i}  - \frac{1}{T_0}\sum_{\tau \in \mathcal{T}_0} \left[  y_{0\tau} - \frac{1}{J} \mathbf{y}_\tau ' \mathbf{i}      \right],
\end{eqnarray}
where $\mathbf{i}$ is a $J \times 1$ vector of ones.\footnote{Note that the DID estimator in this case with one treated unit is numerically the same as the two-way fixed effects (TWFE) estimator using unit and time fixed effects. Since the goal in the SC literature is to estimate the effect of the treatment for unit 1 at a specific date $t$, this circumvents the problem of aggregating heterogeneous effects, as considered by \cite{chaisemartin2018twoway}, \cite{Pedro}, \cite{Imbens_DID}, and \cite{Bacon} in the DID setting.}   Under  Assumptions \ref{assumption_LFM}, \ref{assumption_sample},  and \ref{assumptions_lambda}, we have that
\begin{eqnarray} 
\hat \alpha^{\tiny DID}_{0t} &\buildrel p \over \rightarrow  &  \alpha_{0t} + \left( \epsilon_{0t}  - \frac{1}{J}  \boldsymbol{\epsilon}_{t} ' \mathbf{i} \right) +  \lambda_t  \left(  \mu_{0}  - \frac{1}{J} \boldsymbol{\mu} ' \mathbf{i}  \right)  \mbox{ when } T_0 \rightarrow \infty.
\end{eqnarray}

Therefore, the DID estimator will be asymptotically unbiased in this setting if $\mathbb{E}[\lambda_t ] = 0$ for the factors such that  $ \mu_{0}  \neq \frac{1}{J} \boldsymbol{\mu} ' \mathbf{i}$. This would be the case if treatment assignment is uncorrelated with the time-varying common factors. Differently from the SC estimator, however, the DID estimator would \emph{not} be biased if the average of the control units do not recover the fixed effect of the treated unit. 

As an  alternative to the standard SC estimator, we suggest a modification in which we calculate the pre-treatment average for all  units and demean the data. This is equivalent to  a generalization of the SC method suggested, in parallel to our paper,  by \cite{Doudchenko},  which includes an intercept parameter in the minimization problem to estimate the SC weights and construct the counterfactual.\footnote{Relaxing the non-intercept constraint was already a feature of \cite{Hsiao}. The difference here is that we relax this constraint while maintaining the adding-up and non-negativity constraints, which allows us to rank the demeaned SC with the DID estimator under some conditions.} Here we formally consider the implications of this alternative on the bias and variance of the SC estimator.  

The demeaned SC estimator is given by $\hat \alpha^{\mbox{\tiny SC$'$}}_{0t} = y_{0t} -  \mathbf{y}_t ' \widehat{\mathbf{w}}^{\mbox{\tiny SC$'$}} -  (\bar y_{0} - \mathbf{\bar y}' \widehat{\mathbf{w}}^{\mbox{\tiny SC$'$}    }  )$, where $\bar y_0$ is the pre-treatment average of unit $0$, and $\mathbf{\bar y}$ is an $J \times 1$ vector with the pre-treatment averages of the controls. We define $\Phi = \{  \textbf{w} \in \Delta^{J-1} ~ | ~ \mu_0 = \boldsymbol{\mu}' \mathbf{w} \}$. In this case, the weights $\widehat{\textbf{w}}^{\mbox{\tiny SC$'$}} $ are given by
\begin{eqnarray}  \label{Q_demeaned}
\widehat{\textbf{w}}^{\mbox{\tiny SC$'$}} =  \underset{{\textbf{w} \in \Delta^{J-1}}}{\mbox{argmin}} \frac{1}{T_0} \sum_{t \in \mathcal{T}_0} \left[ y_{0t} - \mathbf{y}_t'\mathbf{w}  - \left(\bar y_{0} -\mathbf{\bar y} ' \mathbf{w} \right)   \right]^2.   \end{eqnarray}

\begin{proposition} \label{Prop_demeaned} 
\normalfont Under Assumptions \ref{assumption_LFM}, \ref{assumption_sample}, \ref{assumption_exogeneity} and \ref{assumptions_lambda},  { $\widehat{\textbf{w}}^{\mbox{\tiny SC$'$}}   \buildrel p \over \rightarrow  \mathbf{\bar w}^{\mbox{\tiny SC$'$}}$} when $T_0 \rightarrow \infty$, where  $\mu_0 \neq \boldsymbol{\mu} ' \mathbf{\bar w}^{\mbox{\tiny SC$'$}} $, unless $ \sigma_\epsilon^2=0 $ or $\exists \textbf{w} \in \Phi |  \textbf{w} \in  \underset{{\textbf{w} \in \Delta^{J-1}}}{\mbox{argmin}} \left\{  \mathbf{w}'\mathbf{w} \right\}$. Moreover, for $t \in \mathcal{T}_1$,
\begin{eqnarray}
\hat \alpha^{\mbox{\tiny SC$'$}}_{0t}   \buildrel p \over \rightarrow  \alpha_{0t} + \left( \epsilon_{0t} -  \boldsymbol{\epsilon}_t'\mathbf{\bar w}^{\mbox{\tiny SC$'$}} \right) + \lambda_t \left(\mu_0 - \boldsymbol{\mu} ' \mathbf{\bar w}^{\mbox{\tiny SC$'$}} \right)  \mbox{ when } T_0 \rightarrow \infty.
\end{eqnarray}
\end{proposition}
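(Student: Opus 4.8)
The plan is to follow the route used for Proposition \ref{main_result}, adjusting for the demeaning. First I would substitute the factor model (\ref{model}) into the objective in (\ref{Q_demeaned}). For any $\mathbf{w}\in\Delta^{J-1}$ and any $t\in\mathcal{T}_0$, the common term $\delta_t$ and its pre-treatment average cancel because the weights sum to one, so --- writing $\bar\lambda\equiv\frac{1}{T_0}\sum_{t\in\mathcal{T}_0}\lambda_t$, $\bar\epsilon_0\equiv\frac{1}{T_0}\sum_{t\in\mathcal{T}_0}\epsilon_{0t}$ and $\bar{\boldsymbol{\epsilon}}\equiv\frac{1}{T_0}\sum_{t\in\mathcal{T}_0}\boldsymbol{\epsilon}_t$ --- the bracket in (\ref{Q_demeaned}) equals $(\lambda_t-\bar\lambda)(\mu_0-\boldsymbol{\mu}'\mathbf{w})+(\epsilon_{0t}-\bar\epsilon_0)-(\boldsymbol{\epsilon}_t-\bar{\boldsymbol{\epsilon}})'\mathbf{w}$. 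Expanding the square and applying Assumption \ref{assumptions_lambda} term by term, every coefficient of the resulting quadratic form in $\mathbf{w}$ converges in probability; since these coefficients do not depend on $\mathbf{w}$ and $\Delta^{J-1}$ is compact, the convergence is uniform over $\Delta^{J-1}$, with limit $Q^{\mbox{\tiny SC$'$}}(\mathbf{w})=\sigma_\epsilon^2(1+\mathbf{w}'\mathbf{w})+(\mu_0-\boldsymbol{\mu}'\mathbf{w})'(\Omega_0-\omega_0'\omega_0)(\mu_0-\boldsymbol{\mu}'\mathbf{w})$ --- exactly (\ref{objective_function_limit}) with the positive semi-definite matrix $\Omega_0-\omega_0'\omega_0$ (a limit of sample covariance matrices of $\lambda_t$) replacing $\Omega_0$.

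With uniform convergence of a continuous objective over a compact set, the standard M-estimator consistency argument gives $\widehat{\mathbf{w}}^{\mbox{\tiny SC$'$}}\buildrel p \over\rightarrow\mathbf{\bar w}^{\mbox{\tiny SC$'$}}\equiv\arg\min_{\mathbf{w}\in\Delta^{J-1}}Q^{\mbox{\tiny SC$'$}}(\mathbf{w})$; the minimizer is unique when $\sigma_\epsilon^2>0$ because $Q^{\mbox{\tiny SC$'$}}$ is then strictly convex (its Hessian $2\sigma_\epsilon^2 I_J+2\boldsymbol{\mu}(\Omega_0-\omega_0'\omega_0)\boldsymbol{\mu}'$ is positive definite), and for $\sigma_\epsilon^2=0$ the tie is broken as for the infeasible estimator in Section \ref{SC_model}, a case excluded from the statement. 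To show $\mu_0\neq\boldsymbol{\mu}'\mathbf{\bar w}^{\mbox{\tiny SC$'$}}$ in general, I would reuse the perturbation argument from Proposition \ref{main_result}: if $\Phi=\varnothing$ the claim is immediate; if $\Phi\neq\varnothing$, $\sigma_\epsilon^2>0$, and (for contradiction) $\mathbf{\bar w}^{\mbox{\tiny SC$'$}}\in\Phi$, let $\mathbf{w}^\dagger$ minimize $\mathbf{w}'\mathbf{w}$ over $\Delta^{J-1}$ and move along $\mathbf{w}(\tau)=(1-\tau)\mathbf{\bar w}^{\mbox{\tiny SC$'$}}+\tau\mathbf{w}^\dagger\in\Delta^{J-1}$. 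Unless $\mathbf{\bar w}^{\mbox{\tiny SC$'$}}=\mathbf{w}^\dagger$, convexity of $\mathbf{w}\mapsto\mathbf{w}'\mathbf{w}$ forces a strictly negative directional derivative of the first term at $\tau=0$ (a first-order gain), while the second term is $O(\tau^2)$ because $\mu_0-\boldsymbol{\mu}'\mathbf{\bar w}^{\mbox{\tiny SC$'$}}=0$ there; hence $Q^{\mbox{\tiny SC$'$}}(\mathbf{w}(\tau))<Q^{\mbox{\tiny SC$'$}}(\mathbf{\bar w}^{\mbox{\tiny SC$'$}})$ for small $\tau>0$, contradicting optimality. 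So $\mu_0=\boldsymbol{\mu}'\mathbf{\bar w}^{\mbox{\tiny SC$'$}}$ is possible only when $\mathbf{\bar w}^{\mbox{\tiny SC$'$}}=\mathbf{w}^\dagger\in\Phi$, i.e. when some $\mathbf{w}\in\Phi$ also minimizes $\mathbf{w}'\mathbf{w}$, which together with $\sigma_\epsilon^2=0$ exhausts the exceptions.

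For the asymptotic distribution, I would plug (\ref{model}) into $\hat\alpha^{\mbox{\tiny SC$'$}}_{0t}=y_{0t}-\mathbf{y}_t'\widehat{\mathbf{w}}^{\mbox{\tiny SC$'$}}-(\bar y_0-\mathbf{\bar y}'\widehat{\mathbf{w}}^{\mbox{\tiny SC$'$}})$ for a fixed $t>0$. Again the $\delta_t$ and $\bar\delta$ terms cancel because the weights sum to one, leaving $\hat\alpha^{\mbox{\tiny SC$'$}}_{0t}=\alpha_{0t}+(\epsilon_{0t}-\boldsymbol{\epsilon}_t'\widehat{\mathbf{w}}^{\mbox{\tiny SC$'$}})+(\lambda_t-\bar\lambda)(\mu_0-\boldsymbol{\mu}'\widehat{\mathbf{w}}^{\mbox{\tiny SC$'$}})-(\bar\epsilon_0-\bar{\boldsymbol{\epsilon}}'\widehat{\mathbf{w}}^{\mbox{\tiny SC$'$}})$. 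As $T_0\rightarrow\infty$, Assumption \ref{assumptions_lambda} gives $\bar\lambda\buildrel p \over\rightarrow\omega_0$ and $\bar\epsilon_0,\bar{\boldsymbol{\epsilon}}\buildrel p \over\rightarrow 0$, the previous step gives $\widehat{\mathbf{w}}^{\mbox{\tiny SC$'$}}\buildrel p \over\rightarrow\mathbf{\bar w}^{\mbox{\tiny SC$'$}}$, and $\alpha_{0t},\lambda_t,\epsilon_{0t},\boldsymbol{\epsilon}_t$ do not vary with $T_0$; Slutsky and the continuous mapping theorem then deliver $\hat\alpha^{\mbox{\tiny SC$'$}}_{0t}\buildrel p \over\rightarrow\alpha_{0t}+(\epsilon_{0t}-\boldsymbol{\epsilon}_t'\mathbf{\bar w}^{\mbox{\tiny SC$'$}})+(\lambda_t-\omega_0)(\mu_0-\boldsymbol{\mu}'\mathbf{\bar w}^{\mbox{\tiny SC$'$}})$, which is the asserted expression.

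The delicate ingredients are the uniform convergence over $\Delta^{J-1}$ that licenses the argmin-consistency step and the identification of the limiting minimizer (uniqueness when $\sigma_\epsilon^2>0$, the tie-break otherwise), but these are essentially imported verbatim from the proof of Proposition \ref{main_result}. The genuinely new bookkeeping is the demeaning: verifying that the $\delta_t$/$\bar\delta$ cancellation goes through both in the objective and in the estimator, and that demeaning amounts to nothing more than replacing $\Omega_0$ by $\Omega_0-\omega_0'\omega_0$ and $\lambda_t$ by $\lambda_t-\omega_0$. I expect the perturbation argument establishing $\mu_0\neq\boldsymbol{\mu}'\mathbf{\bar w}^{\mbox{\tiny SC$'$}}$ --- in particular, making sure the segment $\mathbf{w}(\tau)$ stays feasible and the first-order/second-order comparison of the two terms of $Q^{\mbox{\tiny SC$'$}}$ is airtight --- to require the most care.
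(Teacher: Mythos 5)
Your proposal is correct and follows essentially the same route as the paper's proof: it establishes (uniform) convergence of the demeaned objective to $\sigma_\epsilon^2(1+\mathbf{w}'\mathbf{w})+(\mu_0-\boldsymbol{\mu}'\mathbf{w})'(\Omega_0-\omega_0'\omega_0)(\mu_0-\boldsymbol{\mu}'\mathbf{w})$, invokes M-estimator consistency, reuses the first-order/second-order perturbation argument toward the minimum-norm weights (which over $\Delta^{J-1}$ is exactly the equal-weights vector the paper uses), and finishes with Slutsky. The paper reaches the same limiting objective slightly more compactly, by writing the demeaned criterion as $\widehat Q_{T_0}(\mathbf{w})$ minus the squared pre-treatment mean of $y_{0t}-\mathbf{y}_t'\mathbf{w}$, and then declares the remaining steps identical to Proposition \ref{main_result}; the substance is the same.
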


Therefore, both the demeaned SC  and the DID estimators are asymptotically unbiased when $\mathbb{E}[\lambda_t ] = 0$ for $t \in \mathcal{T}_1$.\footnote{This is a sufficient condition. More generally, the demeaned SC estimator would be asymptotically unbiased if $\mathbb{E}[\lambda^k_t] = \omega_0$ for $t \in \mathcal{T}_1$ for any common factor $k$ such that $\mu^k_0 \neq \sum_{j \neq 0} \bar w^{\mbox{\tiny SC$'$}}_j \mu^k_j$. However, as  we show in Proposition \ref{Prop_demeaned}, if $\sigma^2_\epsilon>0$, then we would only have $\mu^k_0 = \sum_{j \neq 0} \bar w^{\mbox{\tiny SC$'$}}_j \mu^k_j$ in knife-edge cases. Therefore, we focus on the sufficient condition $\mathbb{E}[\lambda_t ] = 0$ for $t \in \mathcal{T}_1$.} Differently from the original SC estimator, these estimators do not require that the weights recover the fixed effect of the treated unit for unbiasedness.   The proof is essentially the same as the one for Proposition \ref{main_result} (details in Appendix \ref{Proof_demeaned}).

With additional assumptions on $(\epsilon_{0t},...,\epsilon_{Jt},\lambda_t')$ in the post-treatment periods,   we can also assure that the demeaned SC estimator is asymptotically  more  efficient than DID. 

\begin{assumption}[Stability in the pre- and post-treatment periods]

\label{A5}

\normalfont For $t \in \mathcal{T}_1$, $\mathbb{E}[\lambda_t  ]=    0$, $\mathbb{E}[\epsilon_{t}]= 0$, $\mathbb{E}[\lambda_t'\lambda_t ] = \Omega_0$, and $\mathbb{E}[\epsilon_t\epsilon_t' ] =  \sigma^2_\epsilon I_{J+1}$, $cov(\epsilon_t,\lambda_t)=0$.

\end{assumption}

Assumptions \ref{assumptions_lambda} and \ref{A5} imply that  idiosyncratic shocks and common factors have the same first and second moments  in the pre- and post-treatment periods. Again, the assumptions that idiosyncratic errors are homoskedastic is made just for simplification. What is crucial in this assumption is that the variance/covariance matrix of the idiosyncratic shocks in the post-treatment periods is the same as the long-run variance/covariance matrix of the idiosyncratic shocks in the pre-treatment periods. From Proposition \ref{Prop_demeaned},  Assumption \ref{A5}  implies that the demeaned SC estimator is asymptotically unbiased. We now show that this assumption also implies that the demeaned SC estimator has lower asymptotic MSE than  the DID estimator.

\begin{proposition} \label{Prop_demeaned_efficient}

\normalfont Under Assumptions \ref{assumption_LFM}  to \ref{A5},  the demeaned SC estimator ($\hat \alpha^{\mbox{\tiny SC$'$}}_{0t}$) dominates  the DID estimator ($\hat \alpha^{\tiny DID}_{0t} $) in terms of asymptotic MSE when $T_0 \rightarrow \infty$.
\end{proposition}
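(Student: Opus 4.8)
The plan is to reduce all three comparisons to minimizing a single quadratic function over the unit simplex. First I would collect the three asymptotic distributions that are already available: Proposition \ref{Prop_demeaned} gives $\hat\alpha^{\mbox{\tiny SC$'$}}_{0t} - \alpha_{0t} \buildrel p \over \rightarrow (\epsilon_{0t} - \boldsymbol{\epsilon}_t'\mathbf{\bar w}^{\mbox{\tiny SC$'$}}) + (\lambda_t - \omega_0)(\mu_0 - \boldsymbol{\mu}'\mathbf{\bar w}^{\mbox{\tiny SC$'$}})$; the display for the DID estimator in Section \ref{DID} gives $\hat\alpha^{\tiny DID}_{0t} - \alpha_{0t} \buildrel p \over \rightarrow (\epsilon_{0t} - \tfrac{1}{J}\boldsymbol{\epsilon}_t'\mathbf{i}) + (\lambda_t - \omega_0)(\mu_0 - \tfrac{1}{J}\boldsymbol{\mu}'\mathbf{i})$; and equation (\ref{infeasible}) gives $\hat\alpha^\ast_{0t} - \alpha_{0t} = \epsilon_{0t} - \boldsymbol{\epsilon}_t'\mathbf{w}^\ast$ with $\mathbf{w}^\ast \in \Phi$ chosen to minimize its variance. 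Under Assumption \ref{A5}, conditional on $D(0,0)=1$ the idiosyncratic component and the common-factor component of each of these limits are mean zero and uncorrelated (because $cov(\epsilon_t,\lambda_t\mid D(0,0)=1)=0$), so every asymptotic MSE splits into an idiosyncratic part and a factor part. Using $\mathbb{E}[\epsilon_t\epsilon_t'\mid D(0,0)=1]=\sigma_\epsilon^2 I_{J+1}$ and $\mathbb{E}[(\lambda_t-\omega_0)'(\lambda_t-\omega_0)\mid D(0,0)=1] = \Omega_0-\omega_0'\omega_0$, each asymptotic MSE takes the form $\sigma_\epsilon^2(1+\mathbf{w}'\mathbf{w}) + (\mu_0-\boldsymbol{\mu}'\mathbf{w})'(\Omega_0-\omega_0'\omega_0)(\mu_0-\boldsymbol{\mu}'\mathbf{w})$ evaluated at the relevant weight vector $\mathbf{w} \in \{\mathbf{\bar w}^{\mbox{\tiny SC$'$}}, \tfrac{1}{J}\mathbf{i}, \mathbf{w}^\ast\}$.

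Second, I would identify this functional form. Let $Q(\mathbf{w}) \equiv \sigma_\epsilon^2(1+\mathbf{w}'\mathbf{w}) + (\mu_0-\boldsymbol{\mu}'\mathbf{w})'(\Omega_0-\omega_0'\omega_0)(\mu_0-\boldsymbol{\mu}'\mathbf{w})$ for $\mathbf{w}\in\Delta^{J-1}$. Expanding the demeaned criterion in (\ref{Q_demeaned}) --- using the adding-up constraint $\mathbf{i}'\mathbf{w}=1$ to kill the $\delta_t$ term, and Assumption \ref{assumptions_lambda} to drive the sample second moments to their limits and the sample cross-moments of $\lambda_t$ with the idiosyncratic shocks (and the idiosyncratic sample means) to zero --- shows that $\frac{1}{T_0}\sum_{t\in\mathcal{T}_0}[\cdot]^2 \buildrel p \over \rightarrow Q(\mathbf{w})$, which is exactly the limiting objective established in the proof of Proposition \ref{Prop_demeaned}, so $\mathbf{\bar w}^{\mbox{\tiny SC$'$}} = \arg\min_{\mathbf{w}\in\Delta^{J-1}}Q(\mathbf{w})$. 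Combining with the first step: the demeaned SC estimator has asymptotic MSE $Q(\mathbf{\bar w}^{\mbox{\tiny SC$'$}}) = \min_{\mathbf{w}\in\Delta^{J-1}}Q(\mathbf{w})$; the DID estimator is literally the demeaned estimator evaluated at the fixed weight $\tfrac{1}{J}\mathbf{i}\in\Delta^{J-1}$, so it has asymptotic MSE $Q(\tfrac{1}{J}\mathbf{i})$; and since $\mathbf{w}^\ast\in\Phi\subseteq\Delta^{J-1}$ forces $\mu_0=\boldsymbol{\mu}'\mathbf{w}^\ast$, the infeasible SC estimator has asymptotic MSE $\sigma_\epsilon^2(1+{\mathbf{w}^\ast}'\mathbf{w}^\ast)=Q(\mathbf{w}^\ast)=\min_{\mathbf{w}\in\Phi}Q(\mathbf{w})$, the last equality because $Q$ reduces to $\sigma_\epsilon^2(1+\mathbf{w}'\mathbf{w})$ on $\Phi$ and $\mathbf{w}^\ast$ was chosen to minimize $\mathbf{w}'\mathbf{w}$ over $\Phi$.

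The conclusion is then immediate from optimality over a fixed convex set: since $\mathbf{\bar w}^{\mbox{\tiny SC$'$}}$ minimizes $Q$ over all of $\Delta^{J-1}$, and both $\tfrac{1}{J}\mathbf{i}$ and (when $\Phi\neq\varnothing$) the set $\Phi$ lie inside $\Delta^{J-1}$, we get $\min_{\Delta^{J-1}}Q \le Q(\tfrac{1}{J}\mathbf{i})$ and $\min_{\Delta^{J-1}}Q \le \min_{\Phi}Q$, i.e. the demeaned SC estimator weakly dominates both DID and the infeasible SC estimator in asymptotic MSE. The domination is strict over DID unless $\tfrac{1}{J}\mathbf{i}$ already minimizes $Q$, and strict over the infeasible SC estimator whenever $\mathbf{\bar w}^{\mbox{\tiny SC$'$}}\notin\Phi$, which by Proposition \ref{Prop_demeaned} is the generic case when $\sigma_\epsilon^2>0$.

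I expect the only genuine work to be in the second step: verifying that the plim of the demeaned sample criterion is exactly $Q(\mathbf{w})$ with $\Omega_0-\omega_0'\omega_0$ (and not $\Omega_0$) in the quadratic term, and justifying that the asymptotic MSE may be computed as the second moment of the plim random variable using the moments supplied by Assumption \ref{A5} --- the latter following from the paper's definition of asymptotic MSE together with $\widehat{\mathbf{w}}^{\mbox{\tiny SC$'$}}\buildrel p \over \rightarrow\mathbf{\bar w}^{\mbox{\tiny SC$'$}}$ and the argument in Appendix \ref{definition_bias}. Everything downstream of that is the one-line simplex-minimization argument above.
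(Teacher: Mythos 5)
Your proposal is correct and follows essentially the same route as the paper's proof: both identify the asymptotic MSE of each estimator with the limiting demeaned objective $Q_0'(\mathbf{w})=\sigma_\epsilon^2(1+\mathbf{w}'\mathbf{w})+(\mu_0-\boldsymbol{\mu}'\mathbf{w})'(\Omega_0-\omega_0'\omega_0)(\mu_0-\boldsymbol{\mu}'\mathbf{w})$ evaluated at $\mathbf{\bar w}^{\mbox{\tiny SC$'$}}$, $\tfrac{1}{J}\mathbf{i}$, and $\mathbf{w}^\ast$ respectively, and conclude by noting that $\mathbf{\bar w}^{\mbox{\tiny SC$'$}}$ minimizes this function over $\Delta^{J-1}$, which contains both $\tfrac{1}{J}\mathbf{i}$ and $\Phi$. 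Your added remarks on unbiasedness under Assumption \ref{A5} and on when the dominance is strict are consistent with the paper but not part of its argument.
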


The intuition of this result  is that, under Assumption \ref{A5}, the demeaned SC weights converge to weights that minimize a function $\Gamma(\mathbf{w})$ such that $\Gamma(\mathbf{\bar w}^{\mbox{\tiny SC$'$}}) = a.var(\hat \alpha^{\mbox{\tiny SC$'$}}_{0t}  ) $, and  $\Gamma(\{\frac{1}{J},...,\frac{1}{J} \}) = a.var(\hat \alpha^{\mbox{ \tiny DID}}_{1t}  ) $. Therefore, it must be that the asymptotic variance of $\hat \alpha^{\mbox{\tiny SC$'$}}_{0t} $ is weakly lower than the variance of $\hat \alpha^{\mbox{ \tiny DID}}_{1t} $. Moreover, these  estimators are unbiased under these assumptions (details in Appendix \ref{Proof_demeaned_efficient}). 

If we had that   $\mathbb{E}[\lambda_t ] \neq 0$ for $t \in \mathcal{T}_1$, then both the demeaned SC  and the DID estimators would generally be asymptotically biased.  In general, it is not possible to rank the demeaned SC and the DID estimators in terms of bias and MSE if treatment assignment is correlated with time-varying common factors. We provide in Appendix \ref{example} a specific example in which the DID can have a smaller bias and MSE relative to the demeaned SC estimator. This might happen when selection into treatment depends on common factors with low variance, and it happens that a simple average of the controls provides a good match for the factor loadings associated with these common factors. In general, however, we should expect a lower bias for the demeaned SC estimator, given that the demeaned SC weights are \emph{partially} chosen to minimize the distance between $\mu_0$ and $\boldsymbol{\mu}' \widehat{\textbf{w}}^{\mbox{\tiny SC$'$}}$, while the DID estimator uses weights that are not data driven. 

Since the biases of these two estimators would generally differ when  $\mathbb{E}[\lambda_t ] \neq 0$ for $t \in \mathcal{T}_1$, we can consider a specification test by contrasting these two estimators. More specifically, if the DID estimator is very different from the demeaned SC estimator, this would suggest that  both estimators are biased (considering the setting in which the pre-treatment fit is imperfect).

A potential problem in  properly testing the equality of these two estimators is that they are generally not asymptotically normal. Still, if we consider a stronger assumption that $\lambda_t$ and $\epsilon_{jt}$ are stationary and weakly dependent for all $t \in \mathcal{T}_0 \cup \mathcal{T}_1$, which implies that  $\mathbb{E}[\lambda_t] = 0$ for $t\in \mathcal{T}_1$, then we can follow the idea from \cite{Chernozhukov} and test this condition using in-time placebos.  More specifically, let $\widetilde{\mathbf{w}}$ be the demeaned SC weights using all periods. We consider
\begin{eqnarray}
\hat u_t =\left( \widetilde{\mathbf{w}}'\mathbf{y}_t + \frac{1}{T_0 + T_1} \sum_{\tau \in \mathcal{T}_0 \cup \mathcal{T}_1 } (y_{0\tau} -  \widetilde{\mathbf{w}}'\mathbf{y}_\tau) \right) - \left( J^{-1} \mathbf{i}'\mathbf{y}_t + \frac{1}{T_0 + T_1} \sum_{\tau \in \mathcal{T}_0 \cup \mathcal{T}_1 } (y_{0\tau} -  J^{-1} \mathbf{i}'\mathbf{y}_\tau) \right).
\end{eqnarray}

Following \cite{Chernozhukov}, we impose the null  $\mathbb{E}[\lambda_t] = 0$ for $t\in \mathcal{T}_1$ and estimate the model using all periods of data to provide better finite sample properties. 

The main idea is that, under the null that $\lambda_t$ and $\epsilon_{jt}$ are stationary and weakly dependent, $\hat u_t$ will approximately be stationary and weakly dependent. Therefore, we can construct a test statistic $S(\widehat{\mathbf{u}}) = \left| \frac{1}{T_1} \sum_{t \in \mathcal{T}_1} \hat u_t \right|$, and derive the distribution of the test statistic by considering the set of all moving block permutations of the time periods. Let $\mathcal{T}_0=\{1,...,T_0\}$ and $\mathcal{T}_1=\{T_0+1,...,T\}$, and let $\Pi$ be the set of permutations $\pi_j$ indexed by $j \in {0,...,T-1}$ such that
\begin{eqnarray}
\pi_j(i) = \begin{cases} i+j & \mbox{ if }  i+j \leq T \\ i+j-T & \mbox{otherwise}.  \end{cases}
\end{eqnarray}

Then the p-value of the specification test is given by
\begin{eqnarray}
\hat p = \frac{1}{T} \sum_{j=0}^{T-1} \mathbf{1} \left\{  S(\widehat{\mathbf{u}}) > S(\widehat{\mathbf{u}}_{\pi_j})  \right\},
\end{eqnarray}

We formalize this idea in the following proposition.

\begin{proposition} \label{Prop_specification}

Assume $\lambda_t$ and $\epsilon_{jt}$ are stationary and weakly dependent, with finite second moments, and that Assumptions \ref{assumption_LFM} and \ref{assumption_sample} hold. Then, for any $a \in (0,1)$, $Pr(\hat p \leq a) \rightarrow a$ when $T_0 \rightarrow \infty$ and $T_1$ is fixed.

\end{proposition}

If we find a low $\hat p$, indicating that the DID and the  demeaned SC estimators are significantly different, then this would be an indication that both estimators are biased (considering the case in which the pre-treatment fit is imperfect). In contrast, a high $\hat p$  would provide some evidence that the condition $\mathbb{E}[\lambda_t] = 0$  for $t\in \mathcal{T}_1$ is valid. If $\lambda_t$ and $\epsilon_{jt}$ are serially uncorrelated, then this test is exact. If there is serial correlation, though, then we may have distortions when $T_0$ is finite, but the test is asymptotically valid when $T_0 \rightarrow \infty$.

Importantly,  this test is completely uninformative about Assumption \ref{assumption_exogeneity}. Moreover, it relies on stationarity as an auxiliary assumption, which is not a necessary assumption for validity of the DID and the demeaned SC estimators in this setting.  We also recommend applied researchers should plot the demeaned SC and the DID estimators to provide a visual inspection of the differences between these two estimators.

\begin{remark}
\normalfont \label{Remark_demeaned}
In general, it is not possible to compare the original and the demeaned SC estimators in terms of bias and variance. For example, if units with similar factor loadings also have similar fixed effects, then matching also on the levels would help provide a better approximation to $\mu_0$.  Moreover, the demeaning process may increase the variance of the estimator for a finite $T_0$.  Finally, demeaning essentially implies that we allow for extrapolation, which some may consider as one of the advantage of the SC estimator (e.g., \cite{Abadie2015}). Therefore, it is not clear whether demeaning is the best option in all applications, and the use of this estimator depends on the willingness of the researcher to allow for extrapolation. 

\end{remark}

\begin{remark}
\normalfont \label{Remark_other}
Our main result that the original and the demeaned SC estimators are generally asymptotically biased if there are unobserved time-varying confounders (Propositions \ref{main_result} and \ref{Prop_demeaned}) still applies if we also relax the non-negative and the adding-up constraints, which essentially leads to the panel data approach suggested by \cite{Hsiao}, and further explored by \cite{Li}. Our conditions for unbiasedness of the SC estimator also apply to the estimators proposed by  \cite{Carvalho2015} and  \cite{Carvalho2016b} when $J$ is fixed. In Appendix  \ref{relaxing_constraints} we show that these papers rely on assumptions that implicitly imply that there is no selection on time-varying unobservables. This clarifies  what selection on unobservables means in this setting, and reconciles our findings with the asymptotic unbiasedness/consistency results in these papers.
 \end{remark}

\subsection{Model with ``diverging'' common factors} \label{Setting2}

While the assumptions considered in Sections \ref{Bias_SC} and \ref{Sub_demeaned} allow for outcomes with divergent pre-treatment averages (which would be the case when we consider, for example, GDP or average wages), we restrict to settings in which such diverging common shocks affect all units in the same way.  In Appendix \ref{Appendix_diverging} we modify Assumption \ref{assumption_LFM} to consider the case in which we may have diverging  common shocks with heterogeneous effects across unit. 

Assuming that there exist weights that reconstruct the factor loadings of the treated unit associated to the diverging common shocks, we show that the asymptotic distribution of the \textit{demeaned} SC estimator does not depend on such diverging common shocks when $T_0 \rightarrow \infty$. The intuition is that, as $T_0 \rightarrow \infty$,  the variance of a linear combination of the idiosyncratic shocks becomes irrelevant relative to the cost of failing to recover the factor loadings associated with the diverging common shocks. {This is consistent with the conclusion from Section \ref{Bias_SC} that the bias of the SC estimator should be less relevant when the common shocks are stronger relative to the idiosyncratic shocks.} However, if we also have non-diverging common shocks, then the demeaned SC weights will generally not asymptotically recover the factor loadings of the treated unit associated with those non-diverging shocks. This implies that the demeaned SC estimator may be asymptotically biased if there is correlation between treatment assignment and these non-diverging shocks, for exactly the same reasons outlined in Section \ref{Bias_SC}.

We also show that the conclusion that the demeaned SC estimator does not depend on the diverging common shocks is not valid for the original SC estimator. While the SC weights considering the original SC method converge in probability to weights that recover the factor loadings of the treated associated to the diverging common shocks, this convergence may not be fast enough to compensate that such common shocks are diverging. We present all details on this setting in Appendix \ref{Appendix_diverging}.

\section{Monte Carlo simulations} \label{Sec_MC}

To illustrate our theoretical findings, we construct a MC simulation based on a real dataset using the monthly employment data for 50 US states and the District of Columbia ($J+1 = 51$) from January 1982 to December 2019 ($T = 456$). We construct these series by aggregating the Current Population Survey (CPS) microdata at the state $\times$ month level.\footnote{We created our CPS extract using IPUMS (\cite{ipums}).}  We estimate a factor model that best approximates this data. In addition to the state and time fixed effects, we estimate four common factors.\footnote{We estimate the linear factor model using the iterated fixed effects method proposed by \cite{Bai}.  The number of factors was selected using the $IC_{p1}$ criterion in \cite{baing}.  We used  the \texttt{interFE} function in the package \texttt{gsynth}  \citep{XU}.}  We find evidence that these four factors and the 51 idiosyncratic shocks are stationary, suggesting that any non-stationary trends in the outcomes come from the time fixed effects, $\delta_t$.\footnote{For each of these series, we consider the Augmented Dickey-Fuller test for a unit root, where the number of lags is chosen using the MAIC criterion of  \cite{Ng2001}. We also test for the presence of deterministic trends using  the test statistics proposed by  \cite{Dickey1981}.} 
  This provides evidence that this dataset is well approximated by the setting we consider in Sections \ref{Bias_SC} and \ref{Sub_demeaned}. We consider state-specific Gaussian ARMA models for the distribution of each $\epsilon_{jt}$ and also for the distribution of the four common factors $\lambda_t$.\footnote{For each estimated factor and for each state time series of residuals, we use the \texttt{auto.arima} function in the R package \texttt{forecast} to perform grid search over the autoregressive and moving-average dimensions; and select the best model according to the BIC criterion  \citep{Hyndman2009}.} 

 For each simulation, we fix the estimated factor loadings $\mu_j$ and consider random draws for $\lambda_t$ and $\epsilon_{jt}$  We set $T_1 = 12$ (one year) and $T_0 \in \{ 120,240,480,1200 \}$, and we vary which state is considered as treated. We consider $5000$ replications for each scenario. The case with $T_0=480$  have approximately the same number of pre-treatment periods as we have in our original dataset, while the cases with smaller $T_0$ reflect more common setting in which we have 10 or 20 years of pre-treatment data. We include the case $T_0 = 1200$ to approximate the asymptotic behavior of the estimators when $T_0 \rightarrow \infty$. 

We consider in Panels A and B of Table  \ref{Table_MC} the case in which $\mathbb{E}[\lambda_{t}]= 0$ for $t \in \mathcal{T}_1$. In Panel A we consider that the treated state is such that $c_{0}$ is the second largest value in the distribution of $c_{j}$, while in Panel B the treated state has the second smallest value of $c_0$. We find that the original SC estimator is biased in both cases, even though it would be possible to have weights that reconstruct  $c_{0}$. The problem is that such weights would be very concentrated on the state with largest (or smallest) $c_j$, so the SC weights would converge to weights that are more diluted, even if this means not recovering $c_0$. This is consistent with Proposition \ref{main_result}. 

Figure \ref{Fig_MC}.A shows the bias of the original SC estimator as a function of the state fixed effects of the treated unit. We find relevant bias of SC estimator when the fixed effects of the treated unit is in the extremes, while the bias is closer to zero when the treated unit is more in the middle of the distribution of the fixed effects. This happens because, as we consider a treated unit with fixed effects more towards the center of the distribution of fixed effects, we can have a weighted average of the control units with more diluted weights that reconstruct the fixed effects of the treated. Therefore, the term in equation \ref{Q0} related to the variance of the linear combination of idiosyncratic shocks becomes less relevant in the minimization problem. Indeed, if we consider a measure of concentration of weights given by $||\mathbf{\widehat w}^{\mbox{\tiny SC}}||_2$, then the correlation between the absolute value of the bias of the SC estimator and this measure ranges from 0.723 to 0.849 (depending on the value of $T_0$).\footnote{For each $T_0$ and a treated state, we calculate the average bias and the average $||\mathbf{\widehat w}^{\mbox{\tiny SC}}||_2$. Then we consider the correlation between these two variables for each $T_0$.  }

\begin{table}[h]
  \centering
\caption{{\bf Monte Carlo Simulations}} \label{Table_MC}
      \begin{lrbox}{\tablebox}
\begin{tabular}{cccccccc}
\hline
\hline

&  \multicolumn{3}{c}{Bias }  & &  \multicolumn{3}{c}{Standard error  }     \\ \cline{2-4} \cline{6-8} 

& SC & Demeaned SC & DID && SC & Demeaned SC & DID \\

     & (1) &  (2) & (3) &  & (4) &  (5) & (6)  \\ 
     
     \hline
 
   \multicolumn{8}{c}{ Panel A: second largest $c_j$, no break }\\
$T_0 = 120$ & 0.243 & 0.009 & 0.020 &  & 0.531 & 0.573 & 0.915 \\
$T_0 = 240$ & 0.200 & -0.002 & 0.006 &  & 0.505 & 0.532 & 0.960 \\
$T_0 = 480$ & 0.196 & 0.006 & 0.007 &  & 0.501 & 0.514 & 1.007 \\
$T_0 = 12000$ & 0.167 & -0.004 & -0.010 &  & 0.503 & 0.500 & 1.005 \\
   \multicolumn{8}{c}{Panel B: second smallest $c_j$, no break} \\
$T_0 = 120$ & -0.565 & 0.020 & 0.028 &  & 1.173 & 0.823 & 1.368 \\
$T_0 = 240$ & -0.601 & -0.012 & 0.002 &  & 1.242 & 0.801 & 1.431 \\
$T_0 = 480$ & -0.596 & -0.011 & 0.004 &  & 1.264 & 0.755 & 1.469 \\
$T_0 = 12000$ & -0.581 & -0.002 & -0.003 &  & 1.242 & 0.714 & 1.442 \\
   \multicolumn{8}{c}{Panel C: second largest $\mu_{1j}$, break in factor 1} \\
$T_0 = 120$ & 1.142 & 1.043 & 2.143 &  & 0.772 & 0.819 & 1.247 \\
$T_0 = 240$ & 0.934 & 0.797 & 2.129 &  & 0.728 & 0.764 & 1.308 \\
$T_0 = 480$ & 0.808 & 0.675 & 2.110 &  & 0.713 & 0.728 & 1.337 \\
$T_0 = 12000$ & 0.718 & 0.598 & 2.112 &  & 0.688 & 0.691 & 1.310 \\
   \multicolumn{8}{c}{Panel D: second smallest $\mu_{1j}$, break in factor 1} \\
$T_0 = 120$ & -1.821 & -1.851 & -3.176 &  & 1.200 & 1.167 & 1.777 \\
$T_0 = 240$ & -1.573 & -1.532 & -3.157 &  & 1.130 & 1.113 & 1.794 \\
$T_0 = 480$ & -1.415 & -1.324 & -3.110 &  & 1.082 & 1.058 & 1.816 \\
$T_0 = 12000$ & -1.366 & -1.230 & -3.162 &  & 1.028 & 0.996 & 1.793 \\

\hline
\hline

\end{tabular}
   \end{lrbox}
\usebox{\tablebox}\\
\settowidth{\tableboxwidth}{\usebox{\tablebox}} \parbox{\tableboxwidth}{\footnotesize{Notes: this table presents the MC simulations discussed in Section \ref{Sec_MC}. Panels A and B consider the case in which all common factors have mean zero in the post-treatment periods. In Panel A, the treated unit is the state with second largest fixed effect in the distribution of $c_j$, while in Panel B the treated unit is the state with the second smallest fixed effect. Panels C and D consider the case in which the first common factor has expected value equal to two times its standard deviation in the post-treatment periods.   In Panel C, the treated unit is the state with second largest factor loadings associated to the first common factor  in the distribution of $\mu_{1j}$, while in Panel D the treated unit is the state with the second smallest factor loading. In all simulations, the true treatment effect is equal to zero.
}
}
\end{table}

 \begin{figure}[h] 
\centering 
\caption{{\bf Monte Carlo Simulations}} \label{Fig_MC}
\bigskip

\begin{tabular}{cc}

 \ref{basque}.A: Bias of original SC &   \ref{basque}.B:  Bias of demeaned SC \\
\includegraphics[scale=0.5]{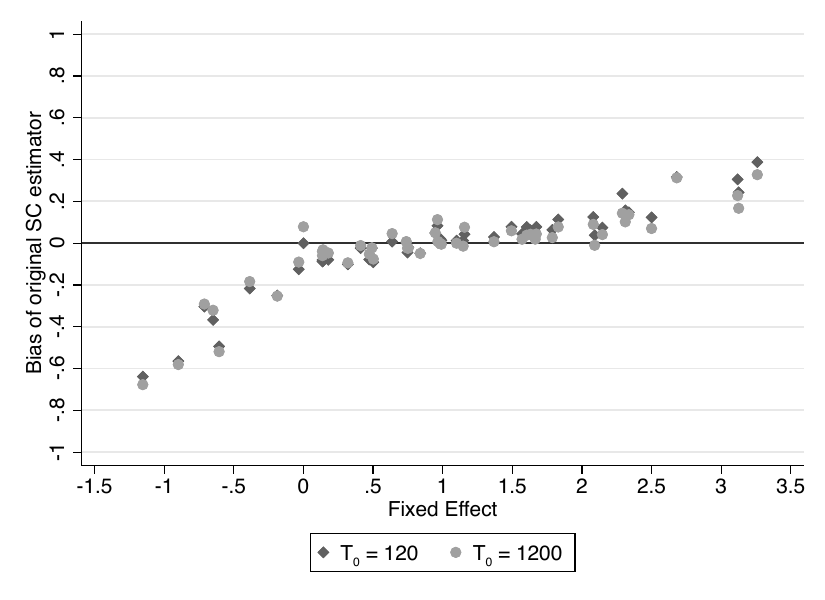} &   \includegraphics[scale=0.5]{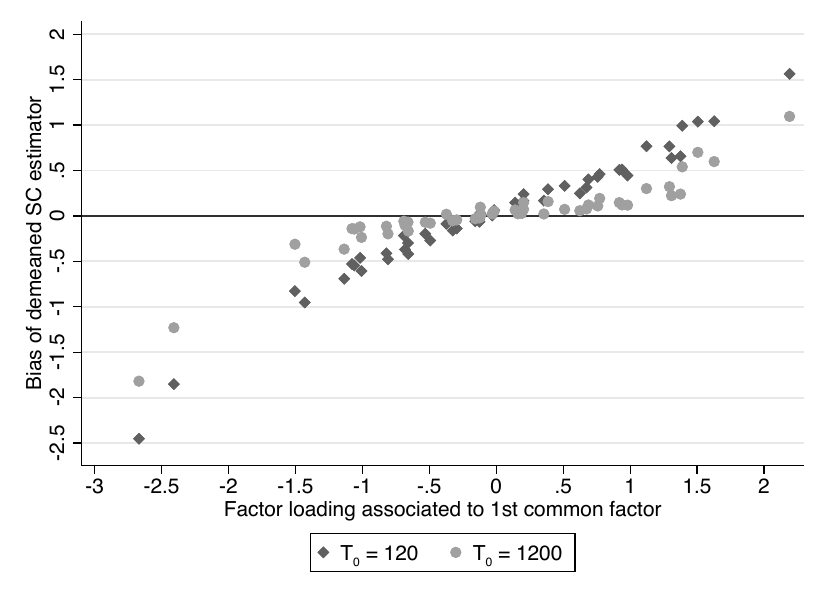} \\

\end{tabular}
\fnote{Notes: Figure A presents the bias of the original SC estimator as a function of the fixed effect of the treated state. We consider in this case a setting with no structural break for the common factors. Figure B presents the bias of the demeaned SC estimator as a function of the factor loadings associated to the first common factor of the treated state. We consider the case in which  the first common factor has expected value equal to two times its standard deviation in the post-treatment periods. We present the settings with $T_0=120$ and $T_0=1200$.  In all simulations, the true treatment effect is equal to zero. }
\end{figure}

In contrast to the original SC estimator, both the demeaned SC and the DID estimators are unbiased in this setting, as presented in Panel A of Table  \ref{Table_MC}. This is consistent with Proposition \ref{Prop_demeaned}. Moreover, as expected from Proposition \ref{Prop_demeaned_efficient}, the demeaned SC estimator is more efficient than the DID estimator, with  roughly 40\%-50\% smaller standard errors. 

In Panels C and D of Table \ref{Table_MC}, we consider a setting in which the first common factor has expected value equal to two times its standard deviation in the post-treatment periods.  In Panel C, we consider the case in which the treated state is such that $\mu_{10}$ is the second largest value in the distribution of $\mu_{1j}$, while in Panel D we consider the case in which it is the second smallest. Therefore, again we are in a setting in which it would be possible to construct a SC state that is affected by $\lambda_{1t}$ in the same way as the treated unit. Still, the results from columns 1 and 2 show that the original and the demeaned SC estimators are biased even when $T_0$ is large. This happens because the SC weights fail to reconstruct  $\mu_{10}$, despite the fact that there exist weights that would do so. This is consistent with the results from Propositions \ref{main_result} and  \ref{Prop_demeaned}. Note also that the biases of the original and demeaned SC estimators are larger when $T_0$ is smaller. See \cite{FP_old} for a more thorough discussion on that. Figure \ref{Fig_MC}.B shows the bias of the demeaned SC estimator as a function of the factor loadings of the treated unit. Again, the bias is closer to zero when the treated unit is in the middle of the distribution of $\mu_{1j}$.  If we consider a measure of concentration of weights given by $||\mathbf{\widehat w}^{\mbox{\tiny SC}'}||_2$, then the correlation between the absolute value of the bias of the demeaned SC estimator and this measure ranges from 0.723 to 0.810 (depending on the value of $T_0$).

While the biases of the original and the demeaned SC estimators do not converge to zero when there is selection on unobservables, these biases are substantially smaller than the bias of the DID estimator in this setting (column 3). Interestingly, the bias of the demeaned SC estimator attenuates the bias from the DID estimator, but does not eliminate it. The idea is that, when we move from the DID to the demeaned SC estimator, the weights move in the direction of weights that reconstruct the factor loadings. However, the SC weights do not generally go all the way through to recover $\mu_0$ because of the idiosyncratic shocks. Moreover, the DID estimator presents a substantially larger standard error (columns 4 to 6). These results are consistent with the conclusions from Section \ref{Sub_demeaned}, in that the demeaned SC estimator improves relative to DID in terms of bias and variance.\footnote{In these simulations, the bias of the original SC estimator is only slightly larger than the bias of the demeaned SC estimator, suggesting that, in this setting, the SC state  approximately reconstructs  the  fixed effect of the treated state. Note, however, that such comparison cannot be extrapolated to other settings, as discussed in Remark \ref{Remark_demeaned}.}

Finally, we consider  the size and power of the specification test proposed in Section \ref{Sub_demeaned}. We present in columns 1 to 4 of Appendix Table \ref{Table_spec} rejection rates when there is no structural break (so both the demeaned SC and the DID estimators are asymptotically unbiased), depending on the treated state and on $T_0$. The test presents relevant over-rejection when $T_0$ is small, but such distortions become less relevant when $T_0$ increases. Such distortions with finite $T_0$ arise because the common factors exhibit serial dependence. If we did not have dependence, then the test would be exact. Overall, the fact that the test has some over-rejection when $T_0$ is small is less worrisome than if we had under-rejection,  because this would lead researchers to be more cautious  about the use of the demeaned SC estimator.  In columns 5 to 8, we consider the case in which there is a structural break. In this case, the test would have  power to detect that the demeaned SC and the DID estimators are different, especially when the treated unit is on the extremes of the distribution of $\mu_{1j}$. When the treated unit is in the middle of this distribution, then the bias of both the demeaned SC and of the DID estimators become less relevant, so the probability of rejecting specification test becomes lower.

\section{Empirical Illustration} \label{Sec_EI}

As an empirical illustration, we revisit the application presented by  \cite{Abadie2003}.  We present in Figure \ref{basque}.A the per capita GDP time series for the Basque Country and for other Spanish regions, while in Figure \ref{basque}.B we replicate their Figure 1, which displays the per capita GDP of the Basque Country contrasted with the per capita GDP of a SC unit constructed to provide a counterfactual for the Basque Country without terrorism. We construct three different SC units, with the original SC estimator using all pre-treatment outcome lags as predictors, with the demeaned SC estimator using all pre-treatment outcome lags as predictors, and with the specification considered by \cite{Abadie2003}. All specifications point out to large negative treatment effects, although the estimated effects are slightly smaller for the original specifications considered by  \cite{Abadie2003}.

Figure \ref{basque}.B displays a remarkably good pre-treatment fit, regardless of the specification. However, the per capita GDP series are clearly non-stationary, with all regions displaying  similar trends before the intervention. Considering the results presented in Section \ref{Sec_main_results}, such non-stationarity may come either from time fixed effects $\delta_t$, or from non-stationary common shocks that may have heterogeneous effects across regions. If the non-stationarity comes from a common factor $\delta_t$ that affects every unit in the same way, then the series $\tilde y_{jt} = y_{jt} - \frac{1}{J}\sum_{j' \neq 0}y_{j't}$ would not display non-stationary trends. As shown in Figure \ref{basque}.C, this appears to be the case in this application.\footnote{If there were other sources of non-stationarity, then the series would remain non-stationary even after such transformation. In such cases, other strategies to de-trend the series could be used, such as, for example, considering parametric trends.}   
  
In light of the results from Section \ref{Sec_main_results}, the distortions in the SC weights depend on the relative magnitudes of the variance of the \textit{non-diverging} common factors relative to the variance of the idiosyncratic shocks. Therefore,  Figure \ref{basque}.C provides a better visual assessment of whether the pre-treatment fit is good relative to Figure \ref{basque}.B. While the pre-treatment fit is still reasonably good after we discard the non-stationary part of the series, it is not as good as when we consider the series in levels. 

We can consider whether we would be able to justify the use of the SC method in this setting without relying on theoretical results based on perfect pre-treatment fit approximations. First, note that the estimated weights in this application are very concentrated among a few control regions, so we cannot rely on the theoretical results from \cite{Ferman} to argue that the SC estimator is asymptotically unbiased in this setting (see Appendix Table \ref{Appendix_table_weights}). Following the discussion in Section  \ref{Sub_demeaned}, we also contrast  in Figure \ref{basque}.D the DID and the demeaned SC estimators. If they were very similar, then we would have some support to rely on these estimators even if they failed to reconstruct the factor loadings of the treated region. However, we find the the estimated effect using the demeaned SC estimator is systematically larger. The p-value of the specification test proposed in Proposition \ref{Prop_specification} is $0.023$.  This suggests that we may have selection on time-varying unobservables, implying that both the demeaned and the DID estimators are asymptotically biased, although the bias of the demeaned SC estimator should be smaller.

 \begin{figure}[h] 
\centering 
\caption{{\bf \cite{Abadie2003} application}} \label{basque}

\begin{tabular}{cc}

 \ref{basque}.A:  Raw data &   \ref{basque}.B:  GDP in level \\
\includegraphics[scale=0.25]{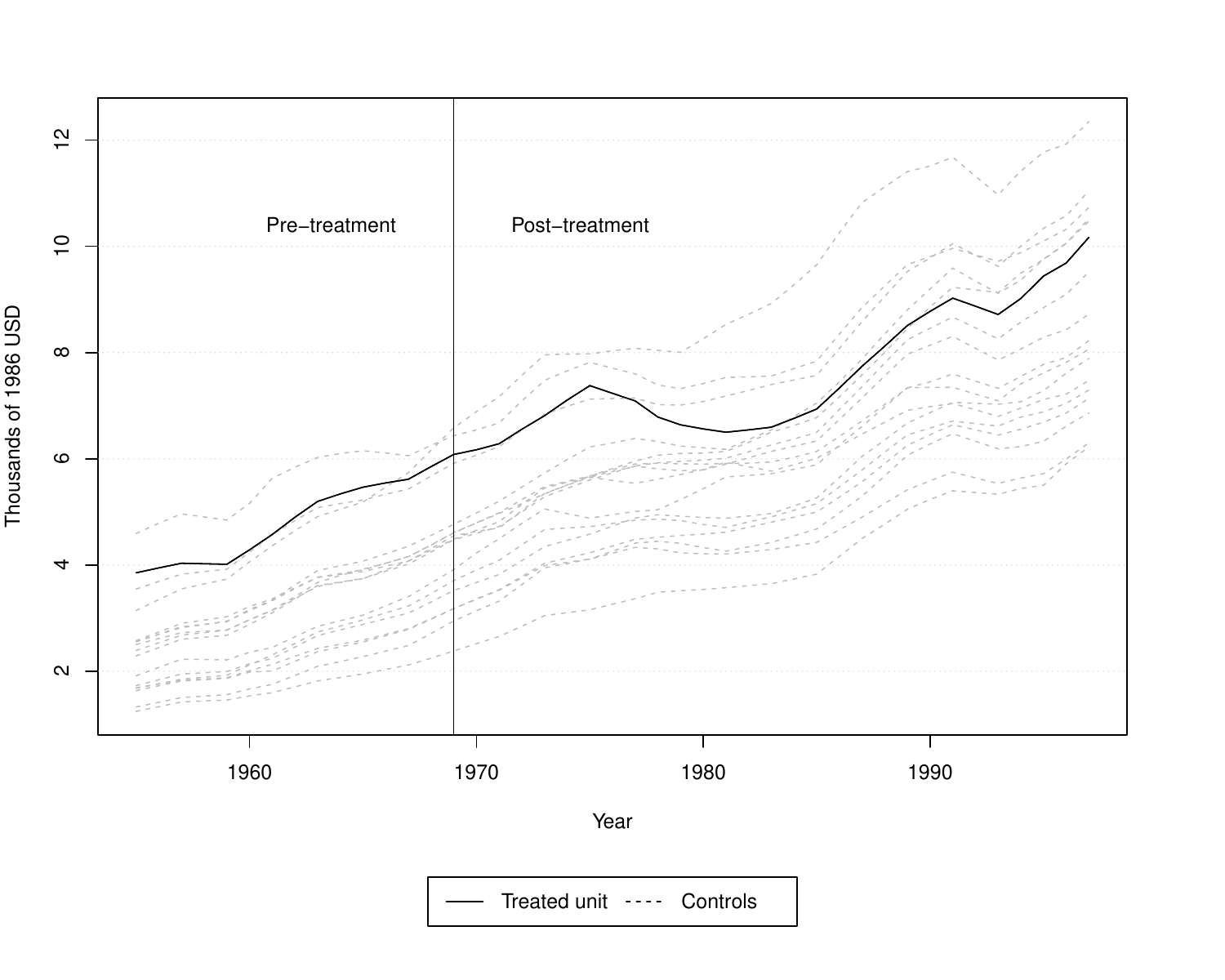} &   \includegraphics[scale=0.25]{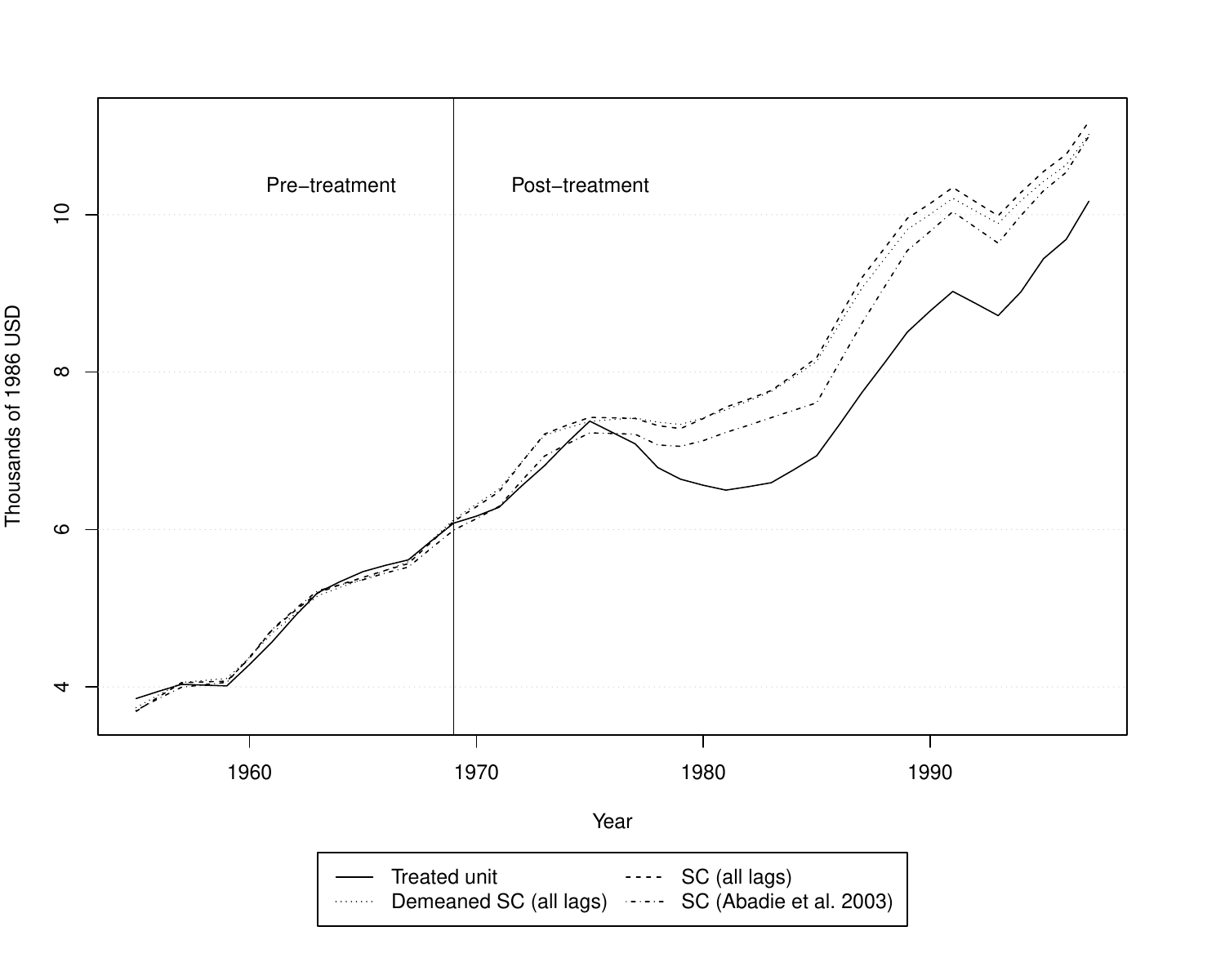} \\

  \ref{basque}.C: GDP de-trended &   \ref{basque}.D: Demeaned SC vs DID estimators  \\
\includegraphics[scale=0.25]{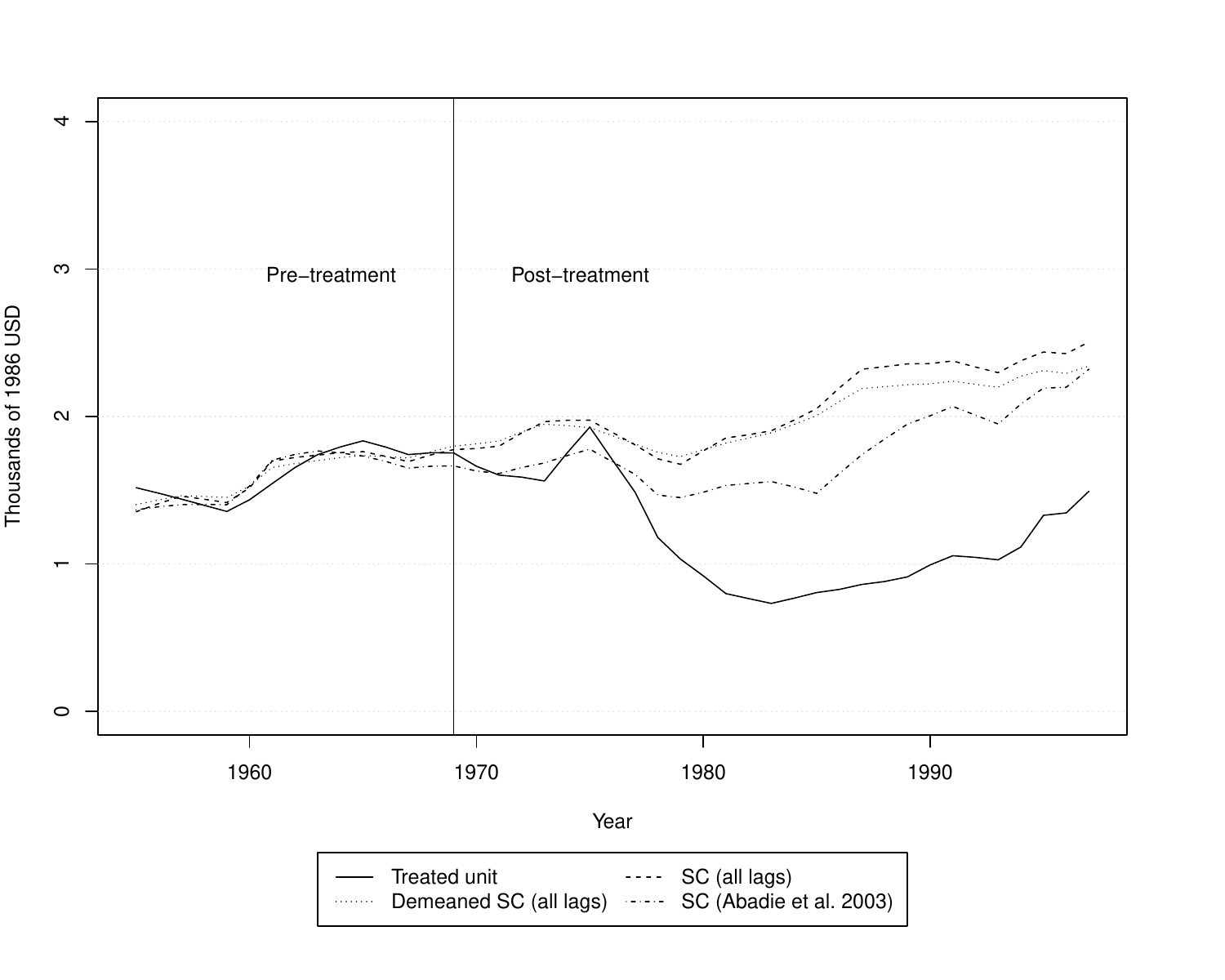}  & \includegraphics[scale=0.25]{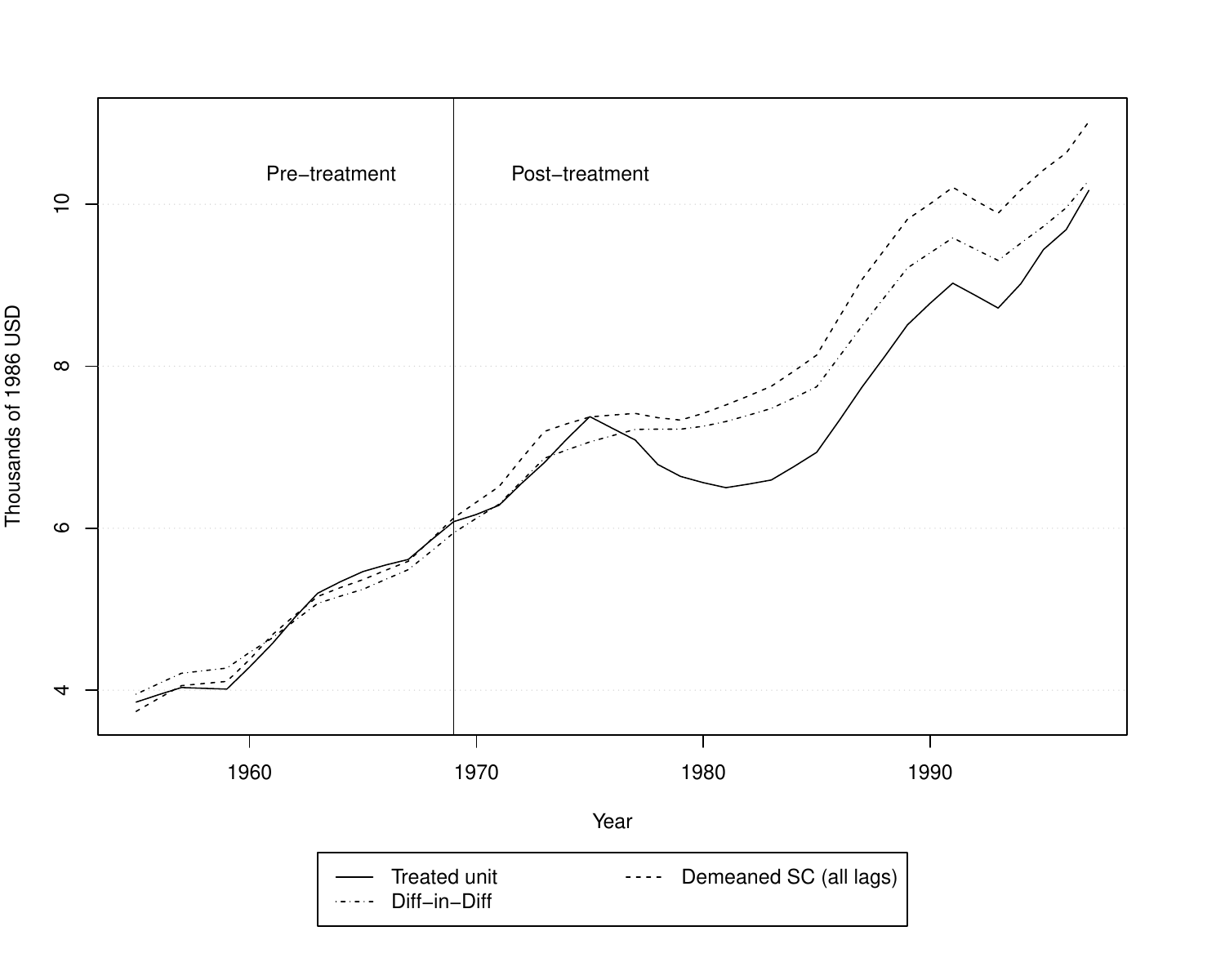} 

\end{tabular}
\fnote{Notes: Figure A presents time series for  the treated and for the control units used in the empirical application from \cite{Abadie2003}. In Figure B we present the time series for the treated and for the SC units. We consider the SC unit estimated with the original SC estimator using all pre-treatment periods lags, with the demeaned SC estimator using all pre-treatment periods lags, and with the specification considered by \cite{Abadie2003}. In Figure C we present the same information as in Figure B after subtracting the control groups' averages for each time period.  In Figure D we present the counterfactuals using the demeaned SC and the DID estimators. }
\end{figure}

Overall, since in this particular application the pre-treatment fit is reasonably good even once we subtract the non-stationary trends, and the treatment effects are large relative to the pre-treatment gaps, we should expect that any potential bias from the demeaned SC estimator does not explain a large proportion of the estimated effects.  Moreover, given the discussions from Sections \ref{Sub_demeaned} and \ref{Sec_MC}, we should expect  the demeaned SC estimator to \emph{partially} control for any bias that the DID estimator experience. Since the estimated effects with the demeaned SC estimator are stronger than the DID estimates, given this rationale, we should expect, if anything, that the demeaned SC estimator would provide a lower bound on the (absolute values of the) treatment effects. Therefore, a careful analysis of the potential problems of the SC method would not change the main conclusions from this empirical application. Still, in other settings in which the pre-treatment fit is worse, and in which moving from the DID to the demeaned SC estimator leads to weaker results, then it would not be possible to rely on the arguments used above, and the problems we highlight in this paper may undermine conclusions from the SC method.

\section{Recommendations} \label{recommendations}

Taken together, our results clarify the conditions in which the SC and related estimators can be reliably used, and provide guidance on how applied researchers could justify the use of these methods. First, a condition like the one we present in Assumption \ref{assumption_exogeneity} is always necessary to justify the SC estimator. It states that treatment assignment is not related to shocks that are specific to the treated unit. It does allow, however, for unobserved confounders that may also affect other control units. Indeed, the main reason why a researcher should use these kind of methods is if he/she believes that there may be confounding factors that also affect the control units. In this case, information from the control units could be used to control for such confounders. Therefore, any applied paper relying on the SC method should discuss the possible unobserved confounders in the specific application, and argue that such confounders are not specific to the treated unit. 

Importantly, even if it is plausible that idiosyncratic shocks are not correlated with the treatment assignment, whether the SC method is able to reliably control for the common shocks depends crucially on details of the empirical application. There are two settings that provide validity for the SC estimator even when there are time-varying unobserved confounders. First, \cite{Abadie2010} show that the SC estimator is reliable if the pre-treatment fit is good for a large number of pre-treatment periods. This condition can be checked by contrasting the outcomes of the treated and of the SC units in the pre-treatment periods. Based on our results, we recommend that applied researchers should also consider the pre-treatment fit after discarding diverging trends, in order to provide a better understanding of the relative magnitude between the variances of the non-diverging common factors and of the idiosyncratic shocks. Also, it is important that the number of control units in this case cannot be large in comparison to $T_0$, otherwise a good pre-treatment fit might be a consequence of over-fitting. In this case, the bias of the SC estimator we uncover in our paper may remain relevant even if we have a good pre-treatment fit.

Second, when both $J$ and $T_0$ are large,  \cite{Ferman} show that the SC estimator may be asymptotically unbiased even when the pre-treatment fit is imperfect. This would be the case if the confounders affect a large number of control units, and in this case the SC weights would get diluted among an increasing number of control units when $J \rightarrow \infty$.   Therefore, we recommend that applied researchers also  report the $L_2$ norm of the SC weights, $||\widehat{\textbf{w}}^{\mbox{\tiny SC}}||_2$. If this is close to zero, then we would have evidence that we are closer to the setting considered by \cite{Ferman}. In contrast, if the weights are concentrated, then we would have evidence that the bias we uncover in our paper is potentially relevant. As we show in our MC simulations in Section \ref{Sec_MC}, the cases in which we find largest biases are exactly the ones in which the SC weights are more concentrated.  

The results we derive in Section \ref{Sec_main_results} are informative about the properties of the SC estimator when the conditions outlined by \cite{Abadie2010} and \cite{Ferman} do not hold. This would be the case when (i) the pre-treatment fit is imperfect and $J$ is not large, (ii) the pre-treatment fit is imperfect with large  $J$ and $T_0$, but the SC weights are not diluted among a large number of control units, or (iii)  the pre-treatment fit is good, but $J$ is much larger than $T_0$, so such pre-treatment fit is possibly good due to over-fitting. 

In these cases, we show that the SC estimator can still provide important gains relative to the DID estimator, but the applied researcher should be more careful in justifying the use of the method. If one considers the demeaned SC estimator, then the assumptions for unbiasedness would be the same as those for the DID estimator. That is, the researcher should argue that the relevant unobserved confounders are not time-varying. The advantage of relying on the demeaned SC estimator relative to DID in this case is that it would be more efficient if common shocks are stable before and after the treatment, and that it should have lower bias in case there is correlation between treatment assignment and time-varying unobservables. We also show that contrasting the DID and the demeaned SC estimators is informative about whether these conditions for unbiasedness are valid, and propose a specification test based on that. If we find evidence that these two estimators are similar, then we should be more confident that the conditions for asymptotic unbiasedness of the demeaned SC estimator holds even when we are not in the settings considered by  \cite{Abadie2010} or \cite{Ferman}. 

Importantly, if the conditions considered by \cite{Abadie2010} or \cite{Ferman} hold in a specific application, then the demeaned SC estimator would be asymptotically unbiased, while the DID estimator may be biased. In this case,  an information from the specification test indicating that the demeaned SC and the DID estimators are different would not imply that the demeaned SC estimator is invalid. Therefore, it is crucial to understand the conditions under which each of these estimators are valid to interpret the conclusions from this specification test. 

Finally, if one considers the original SC estimator, then one should have to inspect the pre-treatment fit. If the SC unit recovers the levels of the treated unit (even if the pre-treatment fit is imperfect), then again the estimator would be reliable if there is no relevant time-varying unobserved confounders. If the SC unit does not recover the levels, then the original SC estimator should not be used. In such settings, the researcher should either use the demeaned SC estimator, or discard such application in case he/she does not want to rely on extrapolation.

\section{Conclusion  } \label{Conclusion}

We consider the properties of the SC and related estimators, in a linear factor model setting, when the pre-treatment fit is imperfect. We show that, in this framework, the SC estimator is generally biased if treatment assignment is correlated with the unobserved heterogeneity, and that such bias does not converge to zero even when the number of pre-treatment periods is large. Still, we  also show that a modified version of the SC method can substantially improve relative to DID, even if the pre-treatment fit is not close to perfect and if $T_0$ is not large.  Overall, we show that the SC method can provide substantial improvement relative to DID, even in settings where the method was not originally designed to work. However, researchers should be more careful in the evaluation of the identification assumptions in those cases. {Importantly, our results clarify the conditions in which the SC and related estimators are reliable, and provide practical guidance on how applied researchers should justify the use of such estimators in empirical applications. }

\pagebreak


\appendix

\setcounter{table}{0}
\renewcommand\thetable{A.\arabic{table}}

\setcounter{figure}{0}
\renewcommand\thefigure{A.\arabic{figure}}

\pagebreak

\onehalfspacing
\normalsize

\section{Supplemental Appendix: Revisiting the Synthetic Control Estimator (For Online Publication)}

\subsection{Proof of the Main Results} 

\subsubsection{Proposition \ref{main_result} } \label{Prop1}

\begin{proof}

The SC weights $\mathbf{\widehat w} \in \mathbb{R}^J$ are given by\footnote{If the number of control units is greater than the number of pre-treatment periods, then the solution to this minimization problem might not be unique. However, since we consider the asymptotics with $T_0 \rightarrow \infty$, then we guarantee that, for large enough $T_0$, the solution will be unique.}

\begin{eqnarray}
\widehat{\textbf{w}} =  \underset{{\textbf{w} \in \Delta^{J-1}}}{\mbox{argmin}}  \frac{1}{T_0} \sum_{t \in \mathcal{T}_0} \left(y_{0t}-\mathbf{y}_{t}'\mathbf{w}   \right)^2. 
\end{eqnarray}

Under Assumptions \ref{assumption_LFM},  \ref{assumption_sample} and  \ref{assumptions_lambda}, the objective function $\widehat Q_{T_0}(\mathbf{w}) \equiv \frac{1}{T_0} \sum_{t \in \mathcal{T}_0} \left(y_{0t}-\mathbf{y}_{t}'\mathbf{w}   \right)^2  $  converges pointwise in probability  to
\begin{eqnarray} \label{obfunction}
Q_0(\mathbf{w}) \equiv \sigma_\epsilon^2(1+\textbf{w}^\prime\textbf{w})+ \left[\left(\mu_{0}-\boldsymbol{\mu}'\textbf{w}\right)^\prime\Omega_0\left(\mu_{0}-\boldsymbol{\mu}' \textbf{w}\right) + (c_0 - \mathbf{c}'\mathbf{w})^2 \right]
\end{eqnarray}
which is a continuous and strictly convex function. Therefore, $Q_0(\mathbf{w})$ is uniquely minimized over $\Delta^{J-1}$, and we define its minimum as $\mathbf{\bar w} \in \Delta^{J-1}$.

We show that this convergence in probability is uniform over $\mathbf{w} \in \Delta^{J-1}$. Define $\tilde y_{0t} = y_{0t} - \delta_t$  and $ \mathbf{\tilde y}_{t} = \mathbf{y}_{t} - \delta_t \mathbf{i}$, where $\mathbf{i}$ is a $J \times 1$ vector of ones. For any $\mathbf{w}',\mathbf{w} \in \Delta^{J-1}$, using the mean value theorem, we can find a $\mathbf{\widetilde w} \in \Delta^{J-1}$ such that
\begin{eqnarray} \nonumber
\left|  \widehat Q_{T_0}(\mathbf{w}') - \widehat Q_{T_0}(\mathbf{w})  \right| &=& \left | 2 \left( \frac{1}{T_0} \sum_{t \in \mathcal{T}_0} \mathbf{\tilde y}_{t} \tilde y_{0t}  -  \frac{1}{T_0} \sum_{t \in \mathcal{T}_0} \mathbf{\tilde y}_{t}\mathbf{\tilde y}_{t}' \mathbf{\widetilde w} \right) \cdot \left( \mathbf{w}' - \mathbf{w} \right) \right| \\
&\leq & \left[ \left( 2 \left| \left|  \frac{1}{T_0} \sum_{t \in \mathcal{T}_0} \mathbf{\tilde y}_{t} \tilde y_{0t}   \right| \right|  + \left| \left|   \frac{1}{T_0} \sum_{t \in \mathcal{T}_0} \mathbf{\tilde y}_{t}\mathbf{\tilde y}_{t}'  \right| \right| \times  \left| \left|\mathbf{\widetilde w}   \right| \right| \right)   \left| \left| \mathbf{w}' - \mathbf{w} \right| \right|   \right].
\end{eqnarray}

Define  $B_{T_0} = 2 \left| \left|  \frac{1}{T_0} \sum_{t \in \mathcal{T}_0} \mathbf{\tilde y}_{t}\tilde y_{0t}   \right| \right|  + \left| \left|   \frac{1}{T_0} \sum_{t \in \mathcal{T}_0} \mathbf{\tilde y}_{t}\mathbf{\tilde y}_{t}'  \right| \right| \times  C$. Since $\Delta^{J-1}$ is compact, $\left| \left| \mathbf{\widetilde w} \right| \right|$ is bounded, so we can find a constant $C$  such that $\left|  \widehat Q_{T_0}(\mathbf{w}') - \widehat Q_{T_0}(\mathbf{w})  \right| \leq B_{T_0} \left( \left| \left| \mathbf{w}' - \mathbf{w} \right| \right| \right)^{\frac{1}{2}}$. Since  $\tilde y_{0t}\mathbf{\tilde y}_{t} $  and $ \mathbf{\tilde y}_{t}\mathbf{\tilde y}_{t}' $ are linear combinations of cross products of $\lambda_t$ and $\epsilon_{it}$, from Assumptions  \ref{assumption_LFM},   \ref{assumption_sample}, and  \ref{assumptions_lambda}, we have that $B_{T_0}$ converges in probability to a positive constant, so $B_{T_0}=O_p(1)$. Note also that $Q_0(\mathbf{w})$ is uniformly continuous on $\Delta^{J-1}$. Therefore, from Corollary 2.2 of \cite{Newey}, we have that $\widehat Q_{T_0}$ converges uniformly in probability to $Q_0$. Since $Q_{0}$ is uniquely minimized at $\mathbf{\bar w}$, $\Delta^{J-1}$ is a compact space, $Q_{0}$ is continuous and $\widehat{Q}_{T_0}$ converges uniformly to $Q_{{0}}$, from  Theorem $2.1$ of  \cite{NeweyMac1994}, $\widehat{\mathbf{w}}$ exists with probability approaching one, and $\widehat{\mathbf{w}} \buildrel p \over \rightarrow \mathbf{\bar w}$.

Now we  show that $\mathbf{\bar w}$ does not generally reconstruct the factor loadings. Note that $Q_0$ has two parts. The first one reflects that different choices of weights will generate different weighted averages of the idiosyncratic shocks $\epsilon_{it}$. In this simpler case, this part would be minimized when we set all weights equal to $\frac{1}{J}$. Let the $J \times 1$ vector $\mathbf{j_J}=\left(\frac{1}{J},...,\frac{1}{J}\right)' \in \Delta^{J-1}$.  The second part reflects the presence of  common factors $\lambda_t$ and of the unit fixed effects that would remain after we choose the weights to construct the SC unit. This part is minimized if we choose a $\mathbf{w}^\ast \in \widetilde \Phi $.  Suppose that we  start at $\mathbf{w}^\ast \in \Phi$ and move in the direction of  $\mathbf{j_J} $, with  $\mathbf{w}(\Delta) = {\mathbf{w}^\ast}+ \Delta(\mathbf{j_J} - {\mathbf{w}^\ast})$. Note that, for all $\Delta \in [0,1]$, these weights will continue to satisfy the constraints of the minimization problem. If we consider the derivative of function \ref{obfunction} with respect to $\Delta$ at $\Delta=0$, we have that
\begin{eqnarray} \nonumber 
\Gamma'( \mathbf{w}^\ast ) = 2 \sigma_\epsilon^2 \left( \frac{1}{J} -  {\mathbf{w}^\ast}'\mathbf{w}^\ast    \right)  < 0 \mbox{ unless } {\mathbf{w}^\ast}=\mathbf{j_J} \mbox{ or } \sigma_\epsilon^2=0,
\end{eqnarray}
where we used the fact that $\mathbf{j_J}  ' \mathbf{w}^\ast = \frac{1}{J}$, because weights are restricted to sum one.

Therefore, $\textbf{w}^\ast$ will not, in general, minimize $Q_0$. This implies that,  when $T_0 \rightarrow \infty$,  the SC weights will converge in probability to weights  $\mathbf{\bar w}$ that does not reconstruct the factor loadings of the treated unit, unless it turns out that   $\textbf{w}^\ast$ also minimizes the variance of this linear combination of the idiosyncratic errors or if $\sigma^2_\epsilon=0$.

Now considering the SC estimator,
\begin{eqnarray} 
\hat \alpha^{\mbox{\tiny SC}}_{0t} &=& y_{0t} - \mathbf{y}_{t} \mathbf{\widehat w}^{\mbox{\tiny SC}} \buildrel p \over \rightarrow   \alpha_{0t} + \left( \epsilon_{0t} - \boldsymbol{\epsilon}_{t}' \mathbf{\bar w} \right) + \lambda_t \left(\mu_0 -  \boldsymbol{\mu}' \mathbf{\bar w}  \right) + \left(c_0 -  \mathbf{c}' \mathbf{\bar w}  \right).
\end{eqnarray}
\end{proof}

\subsubsection{Proposition \ref{Prop_demeaned} } \label{Proof_demeaned}

\begin{proof}

The demeaned SC estimator is given by $\widehat{\textbf{w}}^{\mbox{\tiny SC$'$}} =  \underset{{ \textbf{w} \in \Delta^{J-1}}}{\mbox{argmin}} ~ \widehat Q_{T_0}'(\mathbf{w})$, where
\begin{eqnarray} \nonumber
 \widehat Q_{T_0}'(\mathbf{w}) &=& \frac{1}{T_0} \sum_{t \in \mathcal{T}_0} \left( y_{0t} - \mathbf{y}_{t}'\mathbf{w}  - \left( \frac{1}{T_0} \sum_{t' \in \mathcal{T}_0} y_{0t'} - \frac{1}{T_0} \sum_{t' \in \mathcal{T}_0}  \mathbf{y}_{t'}'\mathbf{w}  \right)  \right)^2 \\ &
  =& \widehat Q_{T_0}(\mathbf{w}) - \left(  \frac{1}{T_0} \sum_{t \in \mathcal{T}_0} y_{0t} - \frac{1}{T_0} \sum_{t \in \mathcal{T}_0}  \mathbf{y}_{t}'\mathbf{w}    \right)^2.
 \end{eqnarray}

 $ \widehat Q_{T_0}'(\mathbf{w}) $ converges pointwise in probability to
\begin{eqnarray}
  Q_{0}'(\mathbf{w})  \equiv \sigma_\epsilon^2(1+\textbf{w}^\prime\textbf{w})+\left(\mu_{0}-\boldsymbol{\mu}'\textbf{w}\right)^\prime  \Omega   \left(\mu_{0}-\boldsymbol{\mu}'\textbf{w}\right)
\end{eqnarray}
where  $ \Omega_0  - \omega_0'\omega_0$ is positive semi-definite, so $  Q_{0}'(\mathbf{w}) $ is a continuous and convex function.

The proof that $\widehat{\textbf{w}}^{\mbox{\tiny SC$'$}}  \buildrel p \over \rightarrow \mathbf{\bar w}^{\mbox{\tiny SC$'$}}$ where $ \mathbf{\bar w}^{\mbox{\tiny SC$'$}}$ will generally not reconstruct the factor loadings of the treated unit follows exactly the same steps as the proof of Proposition \ref{main_result}. Therefore
\begin{eqnarray} 
\hat \alpha^{\mbox{\tiny SC$'$}}_{0t} &=& y_{0t} - \mathbf{y}_{t} \mathbf{\widehat w}^{\mbox{\tiny SC$'$}}  - \left[ \frac{1}{T_0} \sum_{t' \in \mathcal{T}_0}  y_{0t} - \frac{1}{T_0} \sum_{t' \in \mathcal{T}_0}  \mathbf{y}_{t}'  \mathbf{\widehat w}^{\mbox{\tiny SC$'$}}  \right]  \\
& \buildrel p \over \rightarrow &  \alpha_{0t} + \left( \epsilon_{0t} - \boldsymbol{\epsilon}_{t}' \mathbf{\bar w}^{\mbox{\tiny SC$'$}} \right) + \lambda_t \left(\mu_0 -  \boldsymbol{\mu}' \mathbf{\bar w}^{\mbox{\tiny SC$'$}}  \right).
\end{eqnarray}
\end{proof}

\subsubsection{Proposition \ref{Prop_demeaned_efficient} } \label{Proof_demeaned_efficient}

\begin{proof}

For any estimator $\hat \alpha_{0t}( \mathbf{\widetilde w}) = y_{0t} - \mathbf{y}_{t} \mathbf{\widetilde w}  - \left[ \frac{1}{T_0} \sum_{t' \in \mathcal{T}_0}  y_{0t} - \frac{1}{T_0} \sum_{t' \in \mathcal{T}_0 }  \mathbf{y}_{t}'  \mathbf{\widetilde w} \right] $ such that $\mathbf{\widetilde w} \buildrel p \over \rightarrow \mathbf{ w}$, we have that, under Assumptions \ref{assumption_LFM} to  \ref{A5},
\begin{eqnarray}
a.var(\hat \alpha_{0t}( \mathbf{\widetilde w})  ) = \sigma_\epsilon^2(1+\textbf{w}^\prime\textbf{w})+\left(\mu_{0}-\boldsymbol{\mu}\textbf{w}\right)^\prime  \Omega \left(\mu_{0}-\boldsymbol{\mu}\textbf{w}\right) =   Q_{0}'(\mathbf{w}), 
\end{eqnarray}
which implies that $a.var(\hat \alpha^{\mbox{\tiny SC$'$}}_{0t} ) =   Q_{0}'(\bar{\mathbf{w}}^{\mbox{\tiny SC$'$}}) $, and $a.var (\hat \alpha^{\tiny DID}_{0t}  ) =   Q_{0}'(\bar{\mathbf{w}}^{\tiny DID} )$.  By definition of $\bar{\mathbf{w}}^{\mbox{\tiny SC$'$}}$, it must be that $Q_{0}'(\bar{\mathbf{w}}^{\mbox{\tiny SC$'$}})  \leq Q_{0}'(\bar{\mathbf{w}}^{\tiny DID} )$. 
\end{proof}

\subsubsection{Proposition \ref{Prop_specification} } \label{Proof_specification}

Consider the trivial identity 
\begin{eqnarray} \label{eq_chernozhukov}
0 = \left(\bar{\mathbf{w}} - \frac{1}{J} \mathbf{i} \right)' \left( \mathbf{y}_t - \omega \right) -  \left(\bar{\mathbf{w}} - \frac{1}{J} \mathbf{i} \right)' \left( \mathbf{y}_t - \omega \right),
\end{eqnarray}
where the demeaned SC weights converge to $\bar{\mathbf{w}}$, and $\omega = \mathbb{E}[\mathbf{y}_t - \mathbf{i}\delta_t]$. Note that these two vectors are well defined given the assumption that  $\lambda_t$ and $\epsilon_{jt}$ are stationary.

Following the notation from \cite{Chernozhukov}, we define $P^N_t = \left(\bar{\mathbf{w}} - \frac{1}{J} \mathbf{i} \right)' \left( \mathbf{y}_t - \omega \right)$ and $u_t = -\left(\bar{\mathbf{w}} - \frac{1}{J} \mathbf{i} \right)' \left( \mathbf{y}_t - \omega \right)$. Note that 
\begin{eqnarray}
u_t = -  \left(\bar{\mathbf{w}} - \frac{1}{J} \mathbf{i} \right)' \left( \boldsymbol{\mu} \lambda_t' +\boldsymbol{\epsilon}_t \right),
\end{eqnarray}
where we use the fact that $\bar{\mathbf{w}}'\mathbf{i} = \frac{1}{J} \mathbf{i}'\mathbf{i} = 1$ to eliminate $\delta_t$. Since $\lambda_t$ and $\boldsymbol{\epsilon}_t$ are weakly dependent stationary with mean zero, we have that $u_t$ is  weakly dependent stationary with mean zero.

Now consider 
\begin{eqnarray}
\hat P^N_t = \left( \widetilde{\mathbf{w}} -  \frac{1}{J} \mathbf{i} \right)' \left( \mathbf{y}_t - \frac{1}{T_0 + T_1} \sum_{\tau \in \mathcal{T}_0 \cup \mathcal{T}_1} \mathbf{y}_\tau \right) = -\hat u_t.
\end{eqnarray}

Note that 
\begin{eqnarray}
\hat P^N_t - P^N_t = \left( \widetilde{\mathbf{w}} -  \frac{1}{J} \mathbf{i} \right)' \left(  \frac{1}{T_0 + T_1} \sum_{\tau \in \mathcal{T}_0 \cup \mathcal{T}_1}\left( \boldsymbol{\mu} \lambda_\tau' +\boldsymbol{\epsilon}_\tau \right) \right) +  \left( \widetilde{\mathbf{w}} - \bar{\mathbf{w}} \right)' \left( \boldsymbol{\mu} \lambda_t' +\boldsymbol{\epsilon}_t \right),
\end{eqnarray}
where the first term on the RHS of the previous equation is $O_p(1) o_p(1)$, while the second one is $o_p(1)O_p(1)$.  Therefore, the model considered in equation \ref{eq_chernozhukov} satisfies all conditions for Theorem 1 from \cite{Chernozhukov}.

\subsection{Case with finite $T_0$} \label{finite_T}

We consider here the case with $T_0$ fixed.  For weights $\mathbf{w}^\ast  \in \widetilde \Phi$, note that:
\begin{eqnarray}  \label{Finite_T_original_reg}
y_{0t} =\mathbf{y}_t ' \mathbf{w}^\ast + \eta_{t}  \mbox{, for } t \leq 0 \mbox{, where }  \eta_{t}  = \epsilon_{0t} - \boldsymbol{\epsilon}_t' \mathbf{w}^\ast
\end{eqnarray}

Since $\sum_{j=1}^{J} w^\ast_j = 1$, we can write:
\begin{eqnarray}  \label{reg_finiteT}
\dot y_{0t} = \dot{\mathbf{y}}_t ' \dot{\mathbf{w}}^\ast  + \eta_{t}  
\end{eqnarray}
where $\dot y_{jt} = y_{jt} - y_{Jt}$, $\dot{\mathbf{y}}_t  = (\dot y_{1t},...,\dot y_{J-1,t}  )' $, and $\dot{\mathbf{w}}^\ast$ is the $J-1$ vector excluding the last entry of ${\mathbf{w}}^\ast$. The SC weights will be given by the OLS regression in \ref{reg_finiteT} with the non-negativity constraints, and with the constraint that the sum of the $J-1$ weights in $\widehat{\dot{\mathbf{w}}}^\ast $ must be smaller than 1. We ignore for now these constraints. Then we have that 
\begin{eqnarray}
\widehat{\dot{\mathbf{w}}}^\ast = \left( \sum_{t \in \mathcal{T}_0}\dot{\mathbf{y}}_t  \dot{\mathbf{y}}_t  ' \right)^{-1} \sum_{t \in \mathcal{T}_0} \dot{\mathbf{y}}_t   \dot y_{0t}.
\end{eqnarray}

We assume that $T_0$ is large enough so that $ \left( \sum_{t \in \mathcal{T}_0}\dot{\mathbf{y}}_t  \dot{\mathbf{y}}_t  ' \right)$ has full rank. Therefore:
\begin{eqnarray}
\mathbb{E}\left[\widehat{\dot{\mathbf{w}}}^\ast | \dot{\mathbf{y}}_{-T_0+1},...,\dot{\mathbf{y}}_0 \right] =\dot{\mathbf{w}}^\ast + \left( \sum_{t \in \mathcal{T}_0}\dot{\mathbf{y}}_t  \dot{\mathbf{y}}_t  ' \right)^{-1} \sum_{t \in \mathcal{T}_0}\dot{\mathbf{y}}_t \mathbb{E}[\eta_t |  \dot{\mathbf{y}}_{-T_0+1},...,\dot{\mathbf{y}}_0  ] 
\end{eqnarray}

By definition of $\eta_t$, we have that $\mathbb{E}[\eta_t |  \dot{\mathbf{y}}_{-T_0+1},...,\dot{\mathbf{y}}_0  ]  \neq 0$ for $t \leq 0$, which implies that $\widehat{\dot{\mathbf{w}}}^\ast $ is a biased estimator of ${\dot{\mathbf{w}}}^\ast $. Intuitively, the outcomes of the control units work as a proxy to the factor loadings of the treated unit. However, such proxy is imperfect, because the idiosyncratic shocks behave as a measurement error. 

If we consider the case without the non-negativity constraints, and assume that $\lambda_t$ and $\epsilon_{jt}$ are i.i.d. normal, then the conditional expectation function of $y_{0t}$ given $\mathbf{y}_t$ would be linear. As a consequence, the expected value of the SC weights would be exactly the same for any $T_0$, which, in turn, would be the same as the asymptotic value when $T_0 \rightarrow \infty$. If we relax the i.i.d. normality assumption and/or include the non-negativity constraints, then $\mathbb{E}\left[\widehat{\dot{\mathbf{w}}}^\ast | \dot{\mathbf{y}}_{-T_0+1},...,\dot{\mathbf{y}}_0 \right] $ would not be constant irrespectively  of $(\dot{\mathbf{y}}_{-T_0+1},...,\dot{\mathbf{y}}_0)$. However, the $\mathbb{E}\left[\widehat{\dot{\mathbf{w}}}^\ast \right]$ would be the integral of $\mathbb{E}\left[\widehat{\dot{\mathbf{w}}}^\ast | \dot{\mathbf{y}}_{-T_0+1},...,\dot{\mathbf{y}}_0 \right] $ over the distribution of  $(\dot{\mathbf{y}}_{-T_0+1},...,\dot{\mathbf{y}}_0)$. Therefore, we have no reason to believe that the distortion in the SC weights would be ameliorated if we consider a finite $T_0$ setting in comparison to the asymptotic distortion when $T_0 \rightarrow \infty$. 

Considering the non-negativity constraints would also affect the distribution of $\widehat{\dot{\mathbf{w}}}^\ast$ because, with finite $T_0$, there will be a positive probability that the solution to the unrestricted OLS problem will not satisfy the non-negativity constraints. However, this would not change the conclusion that  $\widehat{\dot{\mathbf{w}}}^\ast $ is a biased estimator of ${\dot{\mathbf{w}}}^\ast $.  In Section \ref{Sec_MC} we show MC simulations in which the distortion in the SC weights is aggravated when $T_0$ is small and we consider the non-negativity constraints. 

The larger bias of the SC weights when $T_0$ is smaller is discussed in detail for a particular set of linear factor models considered in the a previous version of our paper (see \cite{FP_old}). We present there a justification why we should expect (in that particular model) a larger bias for the SC weights when $T_0$ is finite.

\subsection{Setting with diverging common factors} \label{Appendix_diverging}

\subsubsection{Main Results with diverging common factors}

While the assumptions considered in Sections \ref{Bias_SC} and \ref{Sub_demeaned} allow for outcomes with divergent pre-treatment averages (which would be the case when we consider, for example, GDP or average wages), we restrict to settings in which such diverging common shocks affect all units in the same way. We now consider that case in which we may have diverging  common shocks that may have heterogeneous effects across unit.  We modify Assumption \ref{assumption_LFM} to include both common shocks that are non-diverging and diverging.

\begin{assumption_b}{\ref{assumption_LFM}$'$}[potential outcomes]
\normalfont

Potential outcomes are given by

\begin{eqnarray} \label{explosive_model}
\begin{cases} y_{jt}^N = c_j + \delta_t+  \gamma_t \theta_j + \lambda_t \mu_j + \epsilon_{jt}  \\ 
y_{jt}^I = \alpha_{jt} + y_{jt}^N. \end{cases}
\end{eqnarray}

\end{assumption_b}

We now separate the factor structure in two parts. One part, $\lambda_t \mu_j$,  that has the same properties as considered in  Sections \ref{Bias_SC} and \ref{Sub_demeaned}, and another one, $\gamma_t \theta_j $, which are ``diverging'', in the sense that pre-treatment averages of $\gamma_t$ diverge.  

\begin{assumption_b}{\ref{assumption_sample}$'$}[sampling] 

\normalfont We observe a realization of $\{ y_{0t} ,..., y_{Jt} \}_{t \in \mathcal{T}_0 \cup \mathcal{T}_1}$, where  $y_{jt} = d_{jt} y_{jt}^I  +  (1-d_{jt}) y_{jt}^N$, while $d_{jt}=1$ if $j=0$ and $t\in \mathcal{T}_1$, and zero otherwise. Potential outcomes are determined by equation (\ref{explosive_model}). We treat $\{ \mu_j\}_{j=0}^J$, $\{ \theta_j\}_{j=0}^J$, and $\{\gamma_t\}_{t \in   \mathcal{T}_0 \cup \mathcal{T}_1}$ as fixed, and $\{ \lambda_t \}_{t \in \mathcal{T}_0 \cup \mathcal{T}_1}$ and $\{ \epsilon_{jt} \}_{t \in \mathcal{T}_0 \cup \mathcal{T}_1}$ for $j=0,...,J$ as stochastic.

\end{assumption_b}

An important difference relative to the setting considered in Sections \ref{Bias_SC} and \ref{Sub_demeaned} is that we can consider a fixed sequence of $\gamma_t$. The idea is that, in this setting, we can find conditions in which the estimator is asymptotically unbiased even conditional on the realization of $\gamma_t$.\footnote{In contrast, the conditions for asymptotic unbiasedness considered in Sections \ref{Bias_SC} and \ref{Sub_demeaned} were valid over the distribution of $\lambda_t$.}  Since in this setting we expect  $\gamma_t$ to diverge as $T_0 \rightarrow \infty$, we have to consider the possibility that, for $\tau \in \mathcal{T}_1$,   $\gamma_\tau \rightarrow \infty$ when $T_0 \rightarrow \infty$.\footnote{We can think of that as a triangular array, where we fix a post-treatment periods $\tau$, and $\gamma_\tau$ potentially changes once we increase $T_0$. } The assumption below imposes restrictions on the sequence of $\gamma_t$ and on the other common and idiosyncratic shocks. Let $\tilde \gamma_t = [1 ~\gamma_t]$,  $\eta_j =  \lambda_t \mu_j +  \epsilon_{jt}$, and $\boldsymbol{\eta}_t = (\eta_{1t},...,\eta_{Jt})$, and consider $A = \mbox{diag}(T_0^{f_1},...,T_0^{f_{F_1}})$ and $\tilde A = \mbox{diag}(1,T_0^{f_1},...,T_0^{f_{F_1}})$ for constants $(f_1,...,f_{F_1}) \in \mathbb{R}^{F_1}_+$.

\begin{assumption_b}{\ref{assumptions_lambda}$'$}[Common and idiosyncratic shocks] 
\label{explosive_shocks}
\normalfont

$\exists (f_1,...,f_{F_1}) \in \mathbb{R}^{F_1}_+$ such that, \\ (i) $T_0^{-1}  \sum_{t \in \mathcal{T}_0} [\eta_{0t} ~ \boldsymbol{\eta}_t' ] \rightarrow  0$, (ii)  $T_0^{-1}  \sum_{t \in \mathcal{T}_0} [\eta_{0t} ~ \boldsymbol{\eta}_t' ]' [\eta_{0t} ~ \boldsymbol{\eta}_t' ] \rightarrow  \Sigma$ positive definite, (iii)  $T_0^{-1} \sum_{t \in \mathcal{T}_0}  \tilde A^{-1} \tilde \gamma_t' \tilde \gamma_t \tilde A^{-1} \rightarrow \Omega$ positive definite, (iv)  $T_0^{-1} \sum_{t \in \mathcal{T}_0}  \tilde A^{-1} \tilde \gamma_t'  \eta_{jt} \rightarrow 0 $ for all $j=0,...,J$, and (v) $A^{-1} \gamma_t = O(1)$.

\end{assumption_b}

Assumptions \ref{explosive_shocks}(i) and \ref{explosive_shocks}(ii) are equivalent to the assumptions we consider in Sections \ref{Bias_SC} and \ref{Sub_demeaned} for the ``non-diverging'' shocks.  Assumptions \ref{explosive_shocks}(iii), \ref{explosive_shocks}(iv) and \ref{explosive_shocks}(v)  determine the rates in which the components of $\gamma_t$ diverge.  Note that these assumptions would be satisfied if $\gamma_t$ is a polynomial trend. Moreover, we also show in a previous version of the paper that we can instead assume that $\gamma_t$ is a combination of ${I}(1)$ and polynomial trend factors (see  \cite{FP_old}).

We also consider an additional assumption on the factor loadings associated with the non-stationary common trends. Let $\Theta$ be the $J \times F_1$ matrix with information on the factor loadings $\theta_j$ of the controls.

\begin{assumption}[factor loadings] 
\label{non_stationary_assumption}
\normalfont

(i) $rank(\Theta) = F_1$, and (ii) $\exists ~ \textbf{w}^\ast \in W$ such that $\theta_0 = \Theta ' \mathbf{w}^\ast$, where $W$ is the set of possible weights given the constrains on the weights the researcher is willing to consider.

\end{assumption}

The first part of Assumption \ref{non_stationary_assumption} guarantees that the each diverging common shock generates enough independent variation on the outcomes of the controls. The second part of the assumption assumes existence of weights that reconstruct the factor loadings of  unit 0 associated with the non-stationary common trends. If this condition does not hold, then the asymptotic distribution of the SC estimators would trivially depend on the factor structure $\gamma_t \theta_i$.  Importantly,  we do \textit{not} need to assume existence of weights that satisfy Assumption \ref{non_stationary_assumption} and  also reconstruct $\mu_0$. Let $\Phi$ be the set of weights that reconstruct the factor loadings of both the diverging and non-diverging common shocks. 

We focus first on the demeaned SC estimator, and then we consider the  original SC estimator.

\begin{proposition} \label{I1_result}
\normalfont
Under Assumptions \ref{assumption_LFM}$'$, \ref{assumption_sample}$'$, \ref{assumption_exogeneity}, \ref{assumptions_lambda}$'$, and \ref{non_stationary_assumption}, for $\tau \in  \mathcal{T}_1$,
\begin{eqnarray}
\hat \alpha_{0 \tau}^{\mbox{\tiny SC$'$}}   \buildrel p \over \rightarrow  \alpha_{0\tau} + \left( \epsilon_{0\tau} - \mathbf{\bar w} '\boldsymbol{\epsilon}_\tau \right) + \lambda_\tau \left(\mu_0 - \boldsymbol{\mu}' \mathbf{ \bar w} \right) \mbox{ when } T_0 \rightarrow \infty
\end{eqnarray}
where $\mu_0 \neq  \boldsymbol{\mu}' \mathbf{ \bar w}$, unless  $ \sigma_\epsilon^2=0 $ or $\exists \textbf{w} \in \Phi |  \textbf{w} \in  \underset{{\textbf{w} \in W}}{\mbox{argmin}}  \left\{ \mathbf{w}'\mathbf{w}     \right\}$.

\end{proposition}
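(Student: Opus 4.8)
The plan is to follow the same M-estimator logic used in Proposition \ref{Prop_demeaned}, but carefully isolating the non-stationary part of the problem so that it decouples from the stationary part in the limit. First I would write the demeaned objective function in \eqref{Q_demeaned} and split the regressors into the $I(0)$ block ($\delta_t$ absorbed by demeaning, plus $\lambda_t \mu_j$ and $\epsilon_{jt}$) and the $I(1)$/polynomial-trend block ($\gamma_t \theta_j$). After demeaning, the contribution of $\gamma_t$ to the residual $y_{0t} - \mathbf{y}_t'\mathbf{w} - (\bar y_0 - \bar{\mathbf{y}}'\mathbf{w})$ is $(\gamma_t - \bar\gamma)(\theta_0 - \boldsymbol{\theta}'\mathbf{w})$. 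The key observation is that $\frac{1}{T_0}\sum_{t\in\mathcal{T}_0}(\gamma_t-\bar\gamma)'(\gamma_t-\bar\gamma)$ diverges (after appropriate normalization it converges to a nondegenerate limit — e.g. an integrated-Brownian-motion functional or a deterministic matrix for polynomial trends), whereas all cross terms with $\lambda_t$ and $\epsilon_t$, once divided by $T_0$, converge to zero under Assumption \ref{assumptions_lambda}$'$ (the $\gamma_t$ part, suitably normalized, is asymptotically orthogonal to the stationary components). So the leading term in the normalized objective function is a positive (semi)definite quadratic form in $(\theta_0 - \boldsymbol{\theta}'\mathbf{w})$, which forces the minimizer to satisfy $\theta_0 = \boldsymbol{\theta}'\mathbf{w}$ asymptotically; this set of weights is exactly $\Phi_1$, which is nonempty by Assumption \ref{non_stationary_assumption}.

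Next I would argue that, restricted to the asymptotically feasible set $\{\mathbf{w} \in \Delta^{J-1} : \theta_0 = \boldsymbol{\theta}'\mathbf{w}\}$, the remaining (lower-order) terms of the objective function behave exactly as in the stationary case: they converge in probability to $\sigma_\epsilon^2(1+\mathbf{w}'\mathbf{w}) + (\mu_0 - \boldsymbol{\mu}'\mathbf{w})'\Omega_0(\mu_0 - \boldsymbol{\mu}'\mathbf{w})$ (up to the constant shifts induced by demeaning, which for the $\lambda$ part produce the $\Omega_0$-form using the demeaned second moment, just as in Proposition \ref{Prop_demeaned}). A standard epiconvergence / argmax-continuity argument (the objective is continuous and strictly convex-ish on the compact simplex, and the constraint set is closed) then gives $\widehat{\mathbf{w}}^{\mbox{\tiny SC$'$}} \buildrel p \over \rightarrow \mathbf{\bar w}$, where $\mathbf{\bar w}$ minimizes that stationary criterion over $\Phi_1 \cap \Delta^{J-1}$. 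The characterization that $\mu_0 \neq \boldsymbol{\mu}'\mathbf{\bar w}$ unless $\sigma_\epsilon^2 = 0$ or some $\mathbf{w}\in\Phi$ also minimizes $\mathbf{w}'\mathbf{w}$ over $W$ then follows from the identical first-order / second-order perturbation argument used in Proposition \ref{main_result}: starting from a point in $\Phi$ (if it exists) one can move toward the $\ell_2$-minimizer and gain at first order on the $\sigma_\epsilon^2\mathbf{w}'\mathbf{w}$ term while losing only at second order on the $\Omega_0$ term, so generically the minimizer leaves $\Phi$.

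Finally, plugging $\widehat{\mathbf{w}}^{\mbox{\tiny SC$'$}} \buildrel p \over \rightarrow \mathbf{\bar w}$ into $\hat\alpha^{\mbox{\tiny SC$'$}}_{0t} = y_{0t} - \mathbf{y}_t'\widehat{\mathbf{w}}^{\mbox{\tiny SC$'$}} - (\bar y_0 - \bar{\mathbf{y}}'\widehat{\mathbf{w}}^{\mbox{\tiny SC$'$}})$ and expanding using \eqref{explosive_model}: the $\gamma_t$ terms cancel exactly because $\theta_0 = \boldsymbol{\theta}'\mathbf{\bar w}$, the $\delta_t$ terms cancel by demeaning, and what remains is $\alpha_{0t} + (\epsilon_{0t} - \mathbf{\bar w}'\boldsymbol{\epsilon}_t) + (\lambda_t - \omega_0)(\mu_0 - \boldsymbol{\mu}'\mathbf{\bar w})$ — here $\bar\lambda \buildrel p \over \rightarrow \omega_0$ and $\bar\epsilon \buildrel p \over \rightarrow 0$ by Assumption \ref{assumptions_lambda}$'$, and the pre-treatment averages multiplying $\widehat{\mathbf{w}}^{\mbox{\tiny SC$'$}}$ are handled by a Slutsky argument. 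The main obstacle I expect is the first step: making rigorous the claim that the $I(1)$ block dominates and is asymptotically orthogonal to the stationary block in the objective function, since this requires the right normalization (different for $I(1)$ factors versus polynomial trends of different degrees) and a uniform-over-$\mathbf{w}$ control so that the argmin of the normalized objective converges to the argmin of the limit — a functional-central-limit-theorem / continuous-mapping argument for the $I(1)$ part combined with a law-of-large-numbers for the stationary part, handled jointly on the compact simplex.
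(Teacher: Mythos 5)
Your overall architecture (a dominant quadratic form in $\theta_0 - \boldsymbol{\theta}'\mathbf{w}$ forcing the limit weights into $\Phi_1$, then a stationary-case analysis restricted to $\Phi_1$, then the Proposition \ref{main_result} perturbation argument for $\mu_0 \neq \boldsymbol{\mu}'\mathbf{\bar w}$) correctly identifies the limit $\mathbf{\bar w}$ and the characterization of when it fails to lie in $\Phi$. But there is a genuine gap in the final step. Showing $\widehat{\mathbf{w}}^{\mbox{\tiny SC$'$}} \buildrel p \over \rightarrow \mathbf{\bar w} \in \Phi_1$ does \emph{not} imply that the term $\gamma_t\left(\theta_0 - \boldsymbol{\theta}'\widehat{\mathbf{w}}^{\mbox{\tiny SC$'$}}\right)$ vanishes in the expression for $\hat\alpha_{0t}^{\mbox{\tiny SC$'$}}$, because that term is evaluated at $t = T_0+\tau$ where $\gamma_t$ itself diverges ($O_p(T_0^{1/2})$ for an $I(1)$ factor, $O(T_0^{f})$ for a polynomial trend). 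You write that ``the $\gamma_t$ terms cancel exactly because $\theta_0 = \boldsymbol{\theta}'\mathbf{\bar w}$'' and invoke Slutsky, but the cancellation is exact only at the probability limit; at the estimator you have a product of an $o_p(1)$ quantity and a diverging one, and Slutsky gives you nothing without a rate. This is precisely the crux of the paper's proof: it shows, via the triangular (cointegration) representation and a Hamilton-style analysis of the OLS regression of $z^\ast_{1t}$ on $\mathbf{z}^\ast_{2t}$ and the transformed non-stationary regressors $\mathbf{\tilde y}_{2t}$, that the relevant linear combinations of the weights converge at rates $T_0^{-\tilde f_j}$ (i.e., $\Lambda\,{\Theta_{h+1}^{J+1}}'A'\hat\phi \buildrel p \over \rightarrow 0$ with $\Lambda = \mbox{diag}(T_0^{a_1},\dots,T_0^{a_F})$), which is exactly fast enough to offset the explosion of $\gamma_{T_0+\tau}$.

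That this rate argument cannot be waved away is demonstrated by the paper's own counterexample for the no-intercept (original) SC estimator: there $\hat\beta \buildrel p \over \rightarrow 1$, so the weights are consistent for $\Phi_1$, yet $T_0(\hat\beta - 1)$ converges to a nonzero limit and the non-stationary trend survives in the asymptotic distribution. Your argument, as written, would ``prove'' Proposition \ref{I1_result} equally well for that estimator, for which the conclusion is false. To close the gap you need to (i) establish the convergence rates of the weight estimator in the directions spanned by the non-stationary factor loadings, distinguishing polynomial trends of different degrees from $I(1)$ factors (this is where the staircase transformation $A$ and the normalization matrix in the paper's proof do the work), and (ii) verify that the intercept/demeaning is what delivers those rates. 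Your first two steps and the perturbation argument for $\mu_0 \neq \boldsymbol{\mu}'\mathbf{\bar w}$ are otherwise sound and parallel the paper's treatment of the stationary residual process $u_t$ on $\Phi_1$.
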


We present the proof in Appendix \ref{Tec_diverging}. Proposition \ref{I1_result} has two important implications. First, if  Assumption \ref{non_stationary_assumption} is valid, then the asymptotic distribution of the \emph{demeaned} SC estimator does not depend on the diverging common trends. The intuition of this result is the following. As $T_0 \rightarrow \infty$ minimizing the variance of a linear combination of the idiosyncratic shocks becomes irrelevant relative to the cost of failing to recover the factor loadings associated with the diverging common shocks. Therefore, we do not have the distortion on the SC weights we  find in Section \ref{Bias_SC} when we consider the diverging shocks. Interestingly, while $\mathbf{\widehat w}$ will generally be only $\sqrt{T_0}-$consistent when $\Phi_1 \equiv \{\mathbf{w} \in W | \theta_0 = \Theta' \mathbf{w}^\ast \}$ is not a singleton, we show in the proof  that there are linear combinations of $\mathbf{\widehat w}$ that will converge at a faster rate, implying that $\gamma_t ( \theta_0 - \sum_{j \neq 0}\hat  w_j \theta_j  ) \buildrel p \over \rightarrow 0$, despite the fact that $\gamma_t$ explodes when $T_0 \rightarrow \infty$. Therefore, such diverging common trends will not lead to asymptotic bias in the SC estimator. 

Second, the demeaned SC  estimator will be biased if there is correlation between treatment assignment and the non-diverging common factors $\lambda_t$. The intuition is that the demeaned SC weights will converge in probability to weights that minimize the asymptotic variance of  $u_t = y_{0t} - \mathbf{w}'\mathbf{y}_t= \lambda_t (\mu_0 - \boldsymbol{\mu}'\mathbf{w}) + (\epsilon_{0t} - \mathbf{w}'\boldsymbol{\epsilon}_t)$, restricting to the weights that satisfy Assumption \ref{non_stationary_assumption}.  Following the same arguments as in Proposition \ref{main_result}, $\mathbf{\widehat w}$ will not eliminate these non-diverging common factors, unless we have that $ \sigma_\epsilon^2=0 $ or it coincides that there is a $\textbf{w} \in \Phi$ that also minimizes the linear combination of idiosyncratic shocks.

The result that the  asymptotic distribution of the SC estimator does not depend on the non-stationary common  trends depends crucially on  Assumption \ref{non_stationary_assumption}. If there were no linear combination of the control units that reconstruct the factor loadings of the treated unit associated to the diverging common trends, then the asymptotic distribution of the SC estimator would trivially depend on these common trends, which might lead to bias in the SC estimator if treatment assignment is correlated with such diverging trends.

Proposition \ref{I1_result}  remains valid when  we relax the adding-up and/or the non-negativity constraints, with minor variations in the conditions for unbiasedness. However, these results are not valid when we consider the no-intercept constraint, as the original SC estimator does. When the intercept is not included, it remains true that  $\mathbf{\widehat w} \buildrel p \over \rightarrow \mathbf{\bar w} \in \Phi_1$.  However,  in this case, the weights will not converge fast enough to compensate the fact that $\gamma_t$ explodes, implying that the result from Proposition \ref{I1_result} that the asymptotic distribution of the estimator does not depend on the diverging common factor does not hold if we consider the estimator with no intercept. We present a counter-example in Appendix \ref{Tec_diverging}.

\subsubsection{Technical results with  diverging common factors} \label{Tec_diverging}

\textbf{Proof of Proposition \ref{I1_result} without constraints}

We show this result  for the case without the adding-up, non-negativity, and no intercept constraints. In this case, the time fixed effects $\delta_t$ may enter either in the $\gamma_t$ or in the $\lambda_t$ vectors.  We then  extend these results for the cases with the adding-up and/or non-negativity constraints. After that,  we show a counterexample in which  this result is not valid when we use the no intercept constraint.

First, let $\Theta_a^b$ contain the rows $a$ to $b$ of matrix $\Theta$. If we set $a=0$, then the first row of  $\Theta_a^b$ is given by $\theta_0'$. Since $rank(\Theta) = F_1$, we can assume, without loss of generality, that $rank(\Theta_{J-F_1+1}^J)$ (that is, the last $F_1$ control units have $\theta_j$ that form a basis of $\mathbb{R}^{F_1}$. 
Therefore, we have
\begin{eqnarray}\nonumber
\begin{bmatrix}   y_{0,t} \\ y_{1,t} \\ \vdots \\ y_{J-F_1,t}  \end{bmatrix} &=& \begin{bmatrix}   c_{0} \\ c_{1} \\ \vdots \\ c_{J-F_1}\end{bmatrix} -  \Theta_{0}^{J-F_1} (\Theta_{J-F_1+1}^J)^{-1} \begin{bmatrix}   c_{J-F_1+1} \\ \\ \vdots \\ c_{J}\end{bmatrix}   + \Theta_{0}^{J-F_1} (\Theta_{J-F_1+1}^J)^{-1}\begin{bmatrix}   y_{J-F_1+1,t} \\ \vdots \\ y_{J,t}   \end{bmatrix}  \\
&& + \begin{bmatrix}   \eta_{0,t} \\ \eta_{1,t} \\ \vdots \\ \eta_{J-F_1,t}\end{bmatrix} -  \Theta_{0}^{J-F_1} (\Theta_{J-F_1+1}^J)^{-1} \begin{bmatrix}   \eta_{J-F_1+1,t} \\ \\ \vdots \\ \eta_{J,t}\end{bmatrix},
\end{eqnarray}
which is similar to the triangular representation from \cite{Phillips1991} for cointegrating relations.

We re-write this equation as
\begin{eqnarray} \label{eq_structure}
\begin{bmatrix}   y_{0,t} \\ y_{1,t} \\ \vdots \\ y_{J-F_1,t}  \end{bmatrix} &=& \begin{bmatrix} \bar  c_{0} \\  \bar c_{1} \\ \vdots \\ \bar c_{J-F_1}\end{bmatrix} + \Theta_{0}^{J-F_1} \begin{bmatrix}   \tilde y_{J-F_1+1,t} \\ \vdots \\ \tilde y_{J,t}   \end{bmatrix}   + \begin{bmatrix}   \bar \eta_{0,t} \\ \bar  \eta_{1,t} \\ \vdots \\ \bar \eta_{J-F_1,t}\end{bmatrix}.
\end{eqnarray}

Now define $\beta \in \mathbb{R}^{F_1}$ such that $u_t = \bar \eta_{0,t} - [\bar \eta_{1,t} ~ \hdots ~ \bar \eta_{J-F_1}]'\beta \rightarrow_p 0 $, and consider the OLS regression of $\bar \eta_{0,t}$ on  $\bar{\boldsymbol{\eta}}_t \equiv (\bar \eta_{1,t}, \hdots,  \bar \eta_{J-F_1})$, a constant,  and $ \tilde{\mathbf{y}}_t \equiv (  \tilde y_{J-F_1+1,t},\hdots,\tilde y_{J,t})$. The OLS estimators ($\hat \beta$, $\hat \kappa$. and $\hat \phi$) are given by

\begin{eqnarray}
\begin{bmatrix}    \hat \beta - \beta \\ \hat \kappa \\  A \hat \phi  \end{bmatrix} \nonumber
 &=&  \begin{bmatrix}    T_0^{-1}   \sum_{t \in \mathcal{T}_0} \bar{\boldsymbol{\eta}}_t \bar{\boldsymbol{\eta}}_t '   &  T_0^{-1}   \sum_{t \in \mathcal{T}_0} \bar{\boldsymbol{\eta}}_t   & T_0^{-1}   \sum_{t \in \mathcal{T}_0}\bar{\boldsymbol{\eta}}_t \left( A^{-1} \tilde{\mathbf{y}}_t \right)  '   \\  
 T_0^{-1}   \sum_{t \in \mathcal{T}_0} \bar{\boldsymbol{\eta}}_t '    & 1  &  T_0^{-1}   \sum_{t \in \mathcal{T}_0} \left( A^{-1} \tilde{\mathbf{y}}_t \right) '    \\ 
   T_0^{-1}   \sum_{t \in \mathcal{T}_0}  \left( A^{-1} \tilde{\mathbf{y}}_t \right) \bar{\boldsymbol{\eta}}_t '  &  T_0^{-1}   \sum_{t \in \mathcal{T}_0} \left( A^{-1} \tilde{\mathbf{y}}_t \right)  & T_0^{-1}   \sum_{t \in \mathcal{T}_0} \left( A^{-1} \tilde{\mathbf{y}}_t \right) \left( A^{-1} \tilde{\mathbf{y}}_t \right) '
   \end{bmatrix}^{-1} \times \\
   && \times  \begin{bmatrix}  T_0^{-1}   \sum_{t \in \mathcal{T}_0} \bar{\boldsymbol{\eta}}_t  u_t  \\  T_0^{-1}   \sum_{t \in \mathcal{T}_0}u_t  \\  T_0^{-1}   \sum_{t \in \mathcal{T}_0} \tilde{\mathbf{y}}_t  u_t    \end{bmatrix}.
\end{eqnarray}

From Assumption \ref{explosive_shocks}, we have that 
\begin{eqnarray}
\begin{bmatrix}    T_0^{-1}   \sum_{t \in \mathcal{T}_0} \bar{\boldsymbol{\eta}}_t \bar{\boldsymbol{\eta}}_t '   &  T_0^{-1}   \sum_{t \in \mathcal{T}_0} \bar{\boldsymbol{\eta}}_t   & T_0^{-1}   \sum_{t \in \mathcal{T}_0}\bar{\boldsymbol{\eta}}_t \left( A^{-1} \tilde{\mathbf{y}}_t \right)  '   \\  
 T_0^{-1}   \sum_{t \in \mathcal{T}_0} \bar{\boldsymbol{\eta}}_t '    & 1  &  T_0^{-1}   \sum_{t \in \mathcal{T}_0} \left( A^{-1} \tilde{\mathbf{y}}_t \right) '    \\ 
   T_0^{-1}   \sum_{t \in \mathcal{T}_0}  \left( A^{-1} \tilde{\mathbf{y}}_t \right) \bar{\boldsymbol{\eta}}_t '  &  T_0^{-1}   \sum_{t \in \mathcal{T}_0} \left( A^{-1} \tilde{\mathbf{y}}_t \right)  & T_0^{-1}   \sum_{t \in \mathcal{T}_0} \left( A^{-1} \tilde{\mathbf{y}}_t \right) \left( A^{-1} \tilde{\mathbf{y}}_t \right) '
   \end{bmatrix}   \rightarrow_p \begin{bmatrix} \Sigma & 0 \\ 0 & \Omega   \end{bmatrix}, 
\end{eqnarray}
which is positive definite, and we also have that 
\begin{eqnarray}
\begin{bmatrix}  T_0^{-1}   \sum_{t \in \mathcal{T}_0} \bar{\boldsymbol{\eta}}_t  u_t  \\  T_0^{-1}   \sum_{t \in \mathcal{T}_0}u_t  \\  T_0^{-1}   \sum_{t \in \mathcal{T}_0} \tilde{\mathbf{y}}_t  u_t    \end{bmatrix} \rightarrow_p 0.
\end{eqnarray}

Therefore, 
\begin{eqnarray}
\begin{bmatrix}    \hat \beta - \beta \\ \hat \kappa \\  A \hat \phi  \end{bmatrix} \rightarrow_p 0.
\end{eqnarray}

Now note that, from equation (\ref{eq_structure}), we have that
\begin{eqnarray}
y_{0,t} &=& \begin{bmatrix}   1 & -\hat \beta' \end{bmatrix}  \begin{bmatrix} \bar  c_{0} \\  \bar c_{1} \\ \vdots \\ \bar c_{J-F_1}\end{bmatrix}  + \hat \kappa + \hat \beta'  \begin{bmatrix}    y_{1,t} \\ \vdots \\ y_{J-F_1,t}  \end{bmatrix} + \\ 
&& + \left( \begin{bmatrix}   1 & -\hat \beta' \end{bmatrix} \Theta_0^{J-F_1} (\Theta_{J-F_1+1}^J)^{-1}  + \hat \phi  (\Theta_{J-F_1+1}^J)^{-1} \right)\begin{bmatrix}   y_{J-F_1+1,t} \\ \vdots \\ y_{J,t}   \end{bmatrix}  + \hat u_t,
\end{eqnarray}
which implies that an OLS regression of $y_{0,t}$ on a constant, $ (y_{1,t}, \hdots,  y_{J-F_1})$, and $  (   y_{J-F_1+1,t},\hdots, y_{J,t})$ yields estimators $\hat c = \begin{bmatrix}   1 & -\hat \beta' \end{bmatrix}  \begin{bmatrix} \bar  c_{0} &  \bar c_{1} & \hdots & \bar c_{J-F_1}\end{bmatrix}' +  \hat \kappa $, $\hat \beta$, and $\left( \begin{bmatrix}   1 & -\hat \beta' \end{bmatrix} \Theta_0^{J-F_1} (\Theta_{J-F_1+1}^J)^{-1}  + \hat \phi  (\Theta_{J-F_1+1}^J)^{-1} \right)$.

We are interested in the limiting distribution of $\hat \alpha_{0\tau}^{\mbox{\tiny SC$'$}}$, for $\tau \in \mathcal{T}_1$:
\begin{eqnarray} \nonumber
\hat \alpha_{0\tau}^{\mbox{\tiny SC$'$}} &=& y_{0\tau} - \mathbf{y}_\tau' \widehat{\textbf{w}}^{\mbox{\tiny SC$'$}} =   \alpha_{0\tau}  + \lambda_\tau \left(\mu_0 - \boldsymbol{\mu}' \widehat{\textbf{w}}^{\mbox{\tiny SC$'$}} \right) +\gamma_\tau \left(\theta_0 - \Theta'\widehat{\textbf{w}}^{\mbox{\tiny SC$'$}}  \right) + \left( \epsilon_{0\tau} - \boldsymbol{\epsilon}_{\tau}'\widehat{\textbf{w}}^{\mbox{\tiny SC$'$}} \right) \\
&& + c_0 - [c_1 ~ \hdots ~ c_J]\widehat{\textbf{w}}^{\mbox{\tiny SC$'$}} -  \hat c.
\end{eqnarray}

With some algebra, we have that 
\begin{eqnarray} 
\gamma_\tau \left(\theta_0 - \Theta'\widehat{\textbf{w}}^{\mbox{\tiny SC$'$}}  \right) = \gamma_\tau \hat \phi = ( \gamma_\tau A^{-1})(A\hat \phi) = o_p(1).
\end{eqnarray}

Likewise, we have that
\begin{eqnarray} 
c_0 - [c_1 ~ \hdots ~ c_J]\widehat{\textbf{w}}^{\mbox{\tiny SC$'$}} -  \hat c = \hat \kappa = o_p(1),
\end{eqnarray}
implying that 
\begin{eqnarray} 
\hat \alpha_{0\tau}^{\mbox{\tiny SC$'$}} \rightarrow_p   \alpha_{0\tau}  + \lambda_\tau \left(\mu_0 - \boldsymbol{\mu}' {\mathbf{\bar w}} \right) + \left( \epsilon_{0\tau} - \boldsymbol{\epsilon}_{\tau}' {\mathbf{\bar w}} \right).
\end{eqnarray}

Finally, by definition of $u_t$, the OLS estimator converges to weights that minimize $\mbox{plim}[(y_{0t} - \mathbf{y}_t ' \mathbf{w})^2]$ subject to $\mathbf{w} \in \Phi_1$. Therefore, the proof that $\mathbf{\widehat w} \buildrel p \over \rightarrow \bar{\mathbf{w}} \notin \Phi$  is essentially the same as the proof of Proposition \ref{main_result}. 

\

\textbf{Proof of Proposition \ref{I1_result} with adding-up and non-negativity constraints}

To show that this result is also valid for the case with adding-up constraint we just have to  consider the OLS regression of $y_{0t} - y_{1t}$ on a constant and $y_{2t}-y_{1t},...,y_{Jt}-y_{1t}$. Under Assumption \ref{non_stationary_assumption}, this transformed model is also cointegrated, so we can  apply our previous result.

We now consider the case with the non-negative constraints. We prove the case $W=\{\textbf{w} \in \mathbb{R}^J ~ | ~ w_j \geq 0\}$. Including an adding-up constraint then follows directly from a change in variables as we did for the case without non-negative constraints.

We first show that $\mathbf{\widehat w} \buildrel p \over \rightarrow \bar{\mathbf{w}}$ where $\bar{\mathbf{w}}$ minimizes $\mathbb{E}[u_t^2]$ subject to $\mathbf{w} \in \Phi_1 \cap W$. Suppose that $\bar{\mathbf{w}} \in int(W)$. This implies that $\bar{\mathbf{w}} \in int(\Phi_1 \cap W)$ relative to $\Phi_1$.  By convexity of $E[u_t^2]$,  $\bar{\mathbf{w}} $ also minimizes $E[u_t^2]$ subject to $\Phi_1$. We know that OLS without the non-negativity constraints converges in probability to $\bar{\mathbf{w}}$. Let $\widehat{\textbf{w}}_{u}$ be the OLS estimator without the non-negativity constraints and $\widehat{\textbf{w}}_{r}$ be the OLS estimator with the non-negativity constraint. Since $\bar{\mathbf{w}} \in {int}( W)$, then it must be that, for all $\epsilon>0$, $||\widehat{\textbf{w}}_{u} - \bar{\mathbf{w}}||<\epsilon$ with probability approaching to 1 (w.p.a.1). Since $\widehat{\textbf{w}}_{u} =  \widehat{\textbf{w}}_{r} $ when $\widehat{\textbf{w}}_{u} \in int(W)$ (due to convexity of the OLS objective function), these two estimators are asymptotically equivalent.  

Consider now the case in which $\mathbf{\bar w}$ is on the boundary of $W$.  This means that $\bar w_j=0$ for at least one $j$. Let $A=\{ j | w_j^\ast=0 \}$. Note first that $\mathbf{\bar w}$ also minimizes $E[u^2_t]$ subject to $\textbf{w} \in \Phi_1 \cap \{ \textbf{w} |  w_j=0 ~ \forall j \in A \}$. That is, if we impose the restriction $w_j=0$ for all $j$ such that $\bar w_j=0$, then we would have the same minimizer, even if we ignore the other non-negative constraints. Suppose there is an $\mathbf{\tilde w} \neq \mathbf{\bar w}$ that minimizes $E[u^2_t]$ subject to $\textbf{w} \in \Phi_1 \cap \{ \textbf{w} |  w_j=0 ~ \forall j \in A \}$.  By strict convexity of the objective function and the fact that $\mathbf{\bar w}$ is in the interior of $\Phi \cap W \cap  \{ \textbf{w} |  w_j=0 ~ \forall j \in A \} $ relative to $\Phi_1 \cap  \{ \textbf{w} |  w_j=0 ~ \forall j \in A \} $, there must be $\textbf{w}' \in \Phi_1 \cap W \cap  \{ \textbf{w} |  w_j=0 ~ \forall j \in A \}  \subset  \Phi_1 \cap W$ that attains a lower value in the objective function than $\mathbf{\bar w}$. However, this contradicts the fact that $\mathbf{\bar w} \in \Phi_1 \cap W $ is the minimum. 

Now let $\widehat{\textbf{w}}' $ be the OLS estimator subject to  $ \{ \textbf{w} |  w_j=0 ~ \forall j \in A \}$. We have that $\widehat{\textbf{w}}' $ is consistent for $\mathbf{\bar w}$. Now we show that  $\widehat{\textbf{w}}' $ is asymptotically equivalent to $\widehat{\textbf{w}}'' $, the OLS estimator subject to   $ \{ \textbf{w} |  w_j \geq 0 ~ \forall j \in A \}$. We prove the case in which $A = \{ j \}$ (there is only one restriction that binds). The general case follows by induction. Suppose  these two estimators are not asymptotically equivalent. Then there is $\epsilon>0$ such that $Lim Pr(|\widehat{\textbf{w}}' -\widehat{\textbf{w}}'' | >\epsilon ) \neq 0$. There are two possible cases.

First, suppose  that $\mbox{Lim} Pr \left( |  \widehat w''_j | > \epsilon'   \right)=0$ for all $\epsilon'>0$ (that is, the OLS subject to  $ \{ \textbf{w} |  w_j \geq 0 ~ \forall j \in A \}$ converges in probability to $\mathbf{\bar w}$ such that $\bar w_j=0$). However, since the two estimators are not asymptotically equivalent, for all $T_0'$, we can always find a $T_0>T_0'$ such that, with positive probability,  $|\widehat{\textbf{w}}' -\widehat{\textbf{w}}'' | >\epsilon$. Since $\{ \textbf{w} |  w_j=0 ~ \forall j \in A \} \subset   \{ \textbf{w} |  w_j \geq 0 ~ \forall j \in A \}$ and $\widehat{\textbf{w}}' \neq \widehat{\textbf{w}}'' $,  then $Q_{T_0}( \widehat{\textbf{w}}'') < Q_{T_0}( \widehat{\textbf{w}}')$, where $Q_{T_0}()$ is the OLS  objective function. Now using the continuity of the OLS objective function and the fact that $\widehat w_j''$ converges in probability to zero, we can always find $T_0'$ such that there will be a positive probability that  $Q_{T_0}( \widehat{\textbf{w}}'' - e_j \hat w_j'') < Q_{T_0}( \widehat{\textbf{w}}')$. Since $\widehat{\textbf{w}}'' - e_j \hat w_j'' \in \{ \textbf{w} |  w_j=0 ~ \forall j \in A \} $, this contradicts $\widehat{\textbf{w}}'$ being OLS subject to $  \{ \textbf{w} |  w_j = 0 ~ \forall j \in A \}$.

Alternatively, suppose that there exists $\epsilon'>0$ such that  $\mbox{Lim} Pr \left( |  \widehat w''_j | > \epsilon'   \right) \neq 0$. This means that, for all $T_0'$, we can find $T_0 > T_0'$ such that  there is a positive probability that the solution to OLS on $ \{ \textbf{w} |  w_j \geq 0 ~ \forall j \in A \}$ is in an interior point $\widehat{\textbf{w}}''$ with $\hat w_j'' > \epsilon'>0$. By convexity of $Q_{T_0}()$, this would imply that $\widehat{\textbf{w}}''$ is also the solution to the OLS without any restriction. However, this contradicts the fact that OLS without non-negativity restriction is consistent (see proof of Proposition \ref{I1_result}).

Finally, we show that  $\widehat{\textbf{w}}''$ and  $\widehat{\textbf{w}}_r$ are asymptotically equivalent. Note that $\mathbf{\bar w}$ is in the interior of $W$ relative to  $ \{ \textbf{w} |  w_j \geq 0 ~ \forall j \in A \}$. Therefore, w.p.a.1,  $\widehat{\textbf{w}}'' \in W$, which implies that $\widehat{\textbf{w}}''=\widehat{\textbf{w}}_r$.

We still need to show that linear combinations of $\mathbf{\widehat w}_r$ converge fast enough to reconstruct the factor loadings of the treated unit associated with the non-stationary common factors, so that $\gamma_t ( \theta_0 - \sum_{j \neq 0}\hat  w^r_j \theta_j  ) \buildrel p \over \rightarrow 0$. Let $Q_{T_0}()$ be the OLS objective function, and let $\mathcal{\widetilde W} = \{\mathbf{\widetilde w}_1,...,\mathbf{\widetilde w}_{2^J} \}$ be the set of all possible OLS estimators when we consider some of the non-negative constraints as equality and ignore the other ones. Let $\mathcal{\widetilde W}' \subset \mathcal{\widetilde W}$ be the set of estimators in $\mathcal{\widetilde W} $ such that all non-negative constraints are satisfied. Then we know that $\mathbf{\widehat w}_r = argmin_{\mathbf{w} \in \mathcal{\widetilde W}'} Q_{T_0}(\mathbf{w})$.

Suppose first that, for each of the $2^J$ combinations of restrictions, there is at least one $\mathbf{w} \in \Phi_1$ that satisfy these restrictions. In this case, we know from the first part of the proof that  $\gamma_t \left( \theta_0 - \sum_{j \neq 0}\widetilde  w_j^h \theta_j  \right) \buildrel p \over \rightarrow 0$ for all $h=1,...,2^J$, where $\mathbf{\widetilde w}_h = (\widetilde  w_1^h,...,\widetilde  w_J^h)'$. Moreover, since $\mathcal{\widetilde W} $ is finite, then this convergence is uniform in $\mathcal{\widetilde W} $. Therefore, it must be that $\gamma_t ( \theta_0 - \sum_{j \neq 0}\hat  w^r_j \theta_j  ) \buildrel p \over \rightarrow 0$.  Suppose now that for the combination of restrictions considered for   $\mathbf{\widetilde w}_h$, with $h \in \{1,...,2^J \}$, there is no $\mathbf{w} \in \Phi_1$ that satisfies these restrictions. Since the parameter space with this combination of restrictions is closed, then $\exists \eta>0$ such that $||\theta_0 - \sum_{j \neq 0} w_j \theta_j ||>\eta$ for all $\mathbf{w}$ that satisfy this combinations of restrictions.\footnote{Otherwise, there would be $\mathbf{w} \in \Phi_1$ that satisfies this combination of restrictions.} Therefore, $Q_{T_0}(\mathbf{\widetilde w}_h)$ diverge when $T_0 \rightarrow \infty$, implying that, w.p.a.1, $\mathbf{\widehat w}_r \neq \mathbf{\widetilde w}_h$.

\

\textbf{Example with no intercept} 
 
We consider now a very simple example to show that it is not possible to guarantee that $\gamma_t \left( \theta_0 - \sum_{j \neq 0}\hat  w_j \theta_j  \right) \buildrel p \over \rightarrow 0$ if we do not include the intercept.  Consider the case in which there are only one treated and one control unit, and $y_{0t} = \mu_0 + t + u_{0t}$ while $y_{1t} = \mu_1 + t + u_{1t}$. We consider a regression of $y_{0t}$ on $y_{1t}$ without the intercept. Note that $y_{0t} = (\mu_0 - \mu_1) + y_{1t} + u_{0t } - u_{1t} = \mu + y_{1t}  + u_t$. Then we have that:
\begin{eqnarray}
\hat \beta = \frac{\sum_{t=1}^{T_0}y_{1t}y_{0t}}{\sum_{t=1}^{T_0}y_{1t}^2} = 1 + \frac{\sum_{t=1}^{T_0} ( \mu \mu_1 + \mu t + \mu u_{1t} + \mu_1 u_t + t u_t + u_t u_{1t}) }{\sum_{t=1}^{T_0} (t^2 + \mu_1^2 + u_{1t}^2 + \mbox{``cross terms''})}
\end{eqnarray}
which implies that:
\begin{eqnarray}
T (\hat\beta-1) =  \frac{\frac{1}{T^2} \sum_{t=1}^{T_0} ( \mu \mu_1 + \mu t + \mu u_{1t} + \mu_1 u_t + t u_t + u_t u_{1t}) }{ \frac{1}{T^3} \sum_{t=1}^{T_0} (t^2 + \mu_1^2 + u_{1t}^2 + \mbox{``cross terms''})} \buildrel p \over \rightarrow  \frac{\frac{1}{2} \mu}{\frac{1}{3}}
\end{eqnarray}

Therefore, while $\hat \beta \buildrel p \over \rightarrow 1$, it does not converge fast enough so that $T (\hat\beta-1) \buildrel p \over \rightarrow 0$, except when $\mu_0=\mu_1$.

\subsection{Example: SC Estimator vs DID Estimator} \label{example}

We provide an example in which the asymptotic bias of the SC estimator can be higher than  the asymptotic bias of the DID estimator. Assume we have 1 treated and 4 control units in a model with 2 common factors. For simplicity, assume that there is no additive fixed effects and that $\mathbb{E}[\lambda_t]=0$. We have that the factor loadings are given by:
\begin{eqnarray}
\mu_0 = \left( \begin{array}{c} 1 \\ 1 \end{array} \right) \mbox{, } \mu_2 = \left( \begin{array}{c} 0.5 \\ 1 \end{array} \right) \mbox{, } \mu_3 = \left( \begin{array}{c} 1.5 \\ 1 \end{array} \right) \mbox{, } \mu_4 = \left( \begin{array}{c} 0.5 \\ 0 \end{array} \right) \mbox{, } \mu_5 = \left( \begin{array}{c} 1.5 \\ 1 \end{array} \right)
\end{eqnarray}

Note that any linear combination $0.5 \mu_2 + w_1^3 \mu_3 + w_1^5 \mu_5$ with $w_1^3+w_1^5 = 0.5$ recovers $\mu_0$. Note also that DID equal weights would set the first factor loading to 1, which is equal to $\mu_0^1$, but the second factor loading would be equal to $0.75 \neq \mu_0^2$. We want to show that the SC weights would improve the construction of the second factor loading but it will distort the combination for the first factor loading. If we set $\sigma_\epsilon^2=\mathbb{E}[(\lambda_t^1)^2]=\mathbb{E}[(\lambda_t^2)^2]=1$, then the factor loadings of the SC unit would be given by $(1.038,0.8458)$. Therefore, there is small loss in the construction of the first factor loading and a gain in the construction of the second factor loading. Therefore, if selection into treatment is correlated with the common shock $\lambda_t^1$, then the SC estimator would be more asymptotically biased than the DID estimator.

\subsection{Alternatives specifications and alternative estimators}  \label{A_alternatives}

\subsubsection{Average of pre-intervention outcome as economic predictor}

We consider now another very common specification in SC applications, which is to use the average pre-treatment outcome as the economic predictor. Note that if one uses only the average pre-treatment outcome as the economic predictor then the choice of matrix $V$ would be irrelevant. In this case, the minimization problem would be given by:
\begin{eqnarray} \nonumber
\{ {\hat w}_j \}_{j \neq 0} &=& \mbox{argmin}_{w \in \Delta^{J-1}}  \left[\frac{1}{T_0} \sum_{t \in \mathcal{T}_0} \left(  y_{0t} - \sum_{j \neq 0} w_j y_{jt}  \right) \right]^2  \\
&=& \mbox{argmin}_{w \in \Delta^{J-1}}  \left[ \frac{1}{T_0} \sum_{t \in \mathcal{T}_0} \left(  \epsilon_{0t} - \sum_{j \neq 0} w_j \epsilon_{jt} + \lambda_t \left(\mu_0 - \sum_{j \neq 0} w_j \mu_j \right) + c_0 - \sum_{j\neq 0 } w_j c_j  \right)\right]^2.
\end{eqnarray}

Therefore, under Assumptions \ref{assumption_sample}, \ref{assumption_exogeneity} and  \ref{assumptions_lambda},  the objective function converges in probability to:
\begin{eqnarray}
\Gamma( \textbf{w} ) =   \left(c_0 - \sum_{j \neq 0} w_j c_j \right)^2  
\end{eqnarray}

Therefore, if there are weights that reconstruct the unit fixed effects without reconstructing the other factor loadings of the treated unit, then  there is no guarantee that the SC control method will choose weights that are close to the correct ones. This result is consistent with the MC simulations by \cite{FPP}, who show that this specification performs particularly bad in allocating the weights correctly. 

\subsubsection{Adding other covariates as predictors} \label{theta}

Most SC applications that use  the average pre-intervention outcome value as economic predictor also consider other time invariant covariates as economic predictors. Let $Z_i$ be a $(R \times 1)$ vector of observed covariates (not affected by the intervention). Assumption \ref{assumption_LFM} changes to:
\begin{eqnarray} 
\begin{cases} y_{it}^N = \delta_t + c_i + \theta_t Z_i+ \lambda_t \mu_i + \epsilon_{it}  \\ 
y_{it}^I = \alpha_{it} + y_{it}^N \end{cases}
\end{eqnarray}

We redefine the set $\Phi = \{ \textbf{w} \in \Delta^{J-1} ~ | ~ c_0 = \sum_{j\neq 0} c_j w_j, \mu_0 = \sum_{j \neq 0} {w_j} \mu_j \mbox{, } Z_0 = \sum_{j \neq 0} {w_j} Z_j  \}$. Let $X_1$ be an $((R +1) \times 1)$ vector that contains the average pre-intervention outcome  and all covariates for unit 1, while $X_0$ is a $((R +1) \times J)$ matrix that contains    the same information for the control units. For a given $V$, the first step of the nested optimization problem suggested in \cite{Abadie2010} would be given by:
\begin{eqnarray}  \label{1st_problem}
\widehat{\textbf{w}}(V) \in \mbox{argmin}_{\textbf{w} \in \Delta^{J-1}} || X_1 - X_0 \textbf{w}  ||_V.
\end{eqnarray}

Considering again the assumptions from Section \ref{Bias_SC}, the  objective function of this minimization problem converges to $|| \bar X_1 - \bar X_0 \textbf{w}  ||_V$, where:
\begin{eqnarray}
 \bar X_1 - \bar X_0 \textbf{w} =  \left[  \begin{array}{c} 
\bar \theta \left(Z_0 - \sum_{j \neq 0} w_j Z_j \right) +  \left(c_0 - \sum_{j \neq 0} w_jc_j \right)  \\
 \left(Z_0^1 - \sum_{j \neq 0} w_j Z_j^1 \right)  \\
\vdots \\
  \left(Z_0^R - \sum_{j \neq 0} w_j Z_j^R \right) \\
   \end{array}      \right],
\end{eqnarray}
  where we assume $\frac{1}{T_0} \sum_{t \in \mathcal{T}_0} \theta_t \rightarrow_p \bar \theta$.  Therefore, there is no guarantee that an estimator based on this minimization problem would converge to weights in $\Phi$ for any given matrix $V$, even if $\Phi \neq \varnothing$.  

The second step in the nested optimization problem is to choose $V$ such that $\widehat{\textbf{w}}(V) $ minimizes the pre-intervention prediction error. Note that this problem is essentially given by:
\begin{eqnarray} 
\widehat{\textbf{w}} &=& \mbox{argmin}_{w \in \widetilde W}  \left[\frac{1}{T_0} \sum_{t \in \mathcal{T}_0} \left(  y_{0t} - \sum_{j \neq 0} w_j y_{jt}  \right) \right]^2  
\end{eqnarray}
where $\widetilde W  \subseteq \Delta^{J-1}$ is the set of $\textbf{w}$  such that ${\textbf{w}}$ is the  solution to problem \ref{1st_problem} for some positive semidefinite matrix $V$. Similarly to the SC estimator that includes  all pre-treatment outcomes, there is no guarantee that this minimization problem will choose weights in $\Phi$, even when $T_0 \rightarrow \infty$.  
Therefore, it is not possible to guarantee that this SC estimator would be asymptotically unbiased. MC simulation presented by  \cite{FPP} confirm that this SC specification systematically misallocates more weight than alternatives that use a large number of pre-treatment outcome lags as predictors.

\subsubsection{Relaxing constraints on the weights \& other estimators} \label{relaxing_constraints}

Our main result that the original and the demeaned SC estimators are generally asymptotically biased if there are unobserved time-varying confounders (Propositions \ref{main_result} and \ref{Prop_demeaned}) still applies if we also relax the non-negative and the adding-up constraints, which essentially leads to the panel data approach suggested by \cite{Hsiao}, and further explored by \cite{Li}.\footnote{In this case,  since we do not constraint the weights to sum 1, we need to adjust Assumption \ref{assumptions_lambda} so that it also includes convergence of the pre-treatment averages of the first and second moments of $\delta_t$.} Our conditions for unbiasedness of the SC estimator also apply to the estimators proposed by  \cite{Carvalho2015} and  \cite{Carvalho2016b} when $J$ is fixed. 

These papers rely on assumptions that essentially imply no selection on unobservables to derive consistency results, which reconciles our results with theirs.   \cite{Hsiao} and  \cite{Li}  implicitly rely on stability in the linear projection of the potential outcomes of the treated unit on the outcomes of the control units, before and after the intervention, to show that their proposed estimators are  unbiasedness and consistent. See, for example, equation A.4 from \cite{Li}. For simplicity, consider that $\lambda_t \mu_i$ includes the fixed effects $c_i$ and $\delta_t$. Then the linear projection of $y_{0t}^N$ given $\mathbf{y}_t$ for any given $t$ is given by $\delta_1(t) + \mathbf{y}_t ' \delta(t) $, where
\begin{eqnarray} \label{linear_projection}
\begin{cases}
\delta(t) = \left[ \boldsymbol{\mu} var(\lambda_t ) \boldsymbol{\mu}'  \right]^{-1} \boldsymbol{\mu} var(\lambda_t ) \mu_0 \mbox{, and} \\
\delta_1(t) = \mathbb{E}[\lambda_t ] (\mu_0 -   \boldsymbol{\mu}' \delta(t)).
\end{cases}
\end{eqnarray}

Therefore, in general, we will only have $(\delta_1(t),\delta(t))$ constant for all $t$ if the distribution of $\lambda_t$  is stable over time. However, the idea that treatment assignment is correlated with the factor model structure essentially means that the distribution of $\lambda_t$  is different before and after the treatment assignment. In this case,  it would not be reasonable to assume that the parameters of the linear projection of $y_{0t}^N$ given $\mathbf{y}_t$ are the same for $t \in \mathcal{T}_0$ and $t \in \mathcal{T}_1$ if we consider that  treatment assignment is correlated with the factor model structure. \cite{2018arXiv181210820C} assume that  $y_{0t}^N$ and $\mathbf{y}_t$  are covariance-stationary for all periods (see their Assumption 6), which implies that $(\delta_1(t),\delta(t))$ constant for all $t$. Therefore, they  also implicitly imply that there is no selection on unobservables. Since they consider a setting with   both large $J$ and $T$, however, it is possible that their estimator is consistent when there is selection on unobservables under conditions similar to the ones considered by \cite{Ferman}.  

\cite{Carvalho2015},  \cite{Carvalho2016b},  \cite{Masini}, and \cite{Zhou}  assume that the outcome of the control units are independent from treatment assignment. If we consider the linear factor model structure from Assumption \ref{assumption_LFM}, then this essentially means that there is no selection on unobservables. Given Assumption \ref{assumption_exogeneity}, if treatment assignment is correlated with the potential outcomes of the treated unit, then it must be correlated with $\lambda_t \mu_0$. However, if this is the case, then treatment assignment must also be correlated with at least some control units, implying that their assumption that  the outcome of the control units are independent from treatment assignment would be violated. Note that \cite{Carvalho2015},   \cite{Masini}, and \cite{Zhou}  encompass a setting with  both large $J$ and $T$. Therefore, it might be possible to consider a different set of assumptions, as the ones considered by \cite{Ferman}, so that their estimator is asymptotically unbiased when $J$ also increases.

Overall, our results clarify what selection on unobservables means in this setting, and the  conditions under which these estimators are asymptotically unbiased when $J$ is fixed. These results also clarify that there is no contradiction between these papers and the literature on factor models, which shows that factor loadings can only be consistently estimated with fixed $J$ under strong assumptions on the idiosyncratic shocks.

\pagebreak

\section{Appendix Tables and Figures}

\begin{table}[H]
  \centering
\caption{{\bf MC Results - Specification Test}} \label{Table_spec}
      \begin{lrbox}{\tablebox}
      \begin{adjustbox}{width=\textwidth,totalheight=0.8\textheight}
\begin{tabular}{cccccccccc}
\hline
\hline

&  \multicolumn{4}{c}{ No break }  & &  \multicolumn{4}{c}{Break in $\lambda_{1t}$  }     \\ \cline{2-5} \cline{7-10}

 & $T_0=120$ & $T_0=240$ & $T_0=480$ & $T_0=1200$ &  & $T_0=120$ & $T_0=240$ & $T_0=480$ & $T_0=1200$ \\

$\mu_{10}$     & (1) &  (2) & (3) &(4)  &      & (5) &  (6) & (7) & (8)  \\ 
     
     \hline
     
-2.6689 & 0.172 & 0.110 & 0.079 & 0.059 &  & -2.669 & 0.619 & 0.510 & 0.430 \\
-2.4079 & 0.179 & 0.117 & 0.084 & 0.063 &  & -2.408 & 0.691 & 0.582 & 0.503 \\
-1.5034 & 0.183 & 0.121 & 0.090 & 0.063 &  & -1.503 & 0.670 & 0.562 & 0.477 \\
-1.4303 & 0.170 & 0.115 & 0.091 & 0.064 &  & -1.430 & 0.594 & 0.477 & 0.391 \\
-1.1359 & 0.167 & 0.112 & 0.090 & 0.063 &  & -1.136 & 0.560 & 0.429 & 0.350 \\
-1.0772 & 0.168 & 0.117 & 0.091 & 0.068 &  & -1.077 & 0.666 & 0.553 & 0.484 \\
-1.0604 & 0.173 & 0.111 & 0.090 & 0.069 &  & -1.060 & 0.626 & 0.526 & 0.448 \\
-1.0173 & 0.165 & 0.111 & 0.085 & 0.061 &  & -1.017 & 0.598 & 0.507 & 0.430 \\
-1.0066 & 0.167 & 0.117 & 0.088 & 0.059 &  & -1.007 & 0.576 & 0.481 & 0.385 \\
-0.8201 & 0.150 & 0.114 & 0.080 & 0.065 &  & -0.820 & 0.616 & 0.563 & 0.506 \\
-0.8087 & 0.151 & 0.110 & 0.080 & 0.061 &  & -0.809 & 0.577 & 0.489 & 0.431 \\
-0.6899 & 0.170 & 0.125 & 0.095 & 0.064 &  & -0.690 & 0.412 & 0.313 & 0.218 \\
-0.6813 & 0.145 & 0.105 & 0.081 & 0.068 &  & -0.681 & 0.534 & 0.476 & 0.447 \\
-0.6594 & 0.158 & 0.116 & 0.098 & 0.061 &  & -0.659 & 0.459 & 0.362 & 0.315 \\
-0.6573 & 0.152 & 0.120 & 0.097 & 0.060 &  & -0.657 & 0.479 & 0.382 & 0.298 \\
-0.5299 & 0.155 & 0.109 & 0.085 & 0.063 &  & -0.530 & 0.374 & 0.287 & 0.229 \\
-0.4925 & 0.138 & 0.098 & 0.074 & 0.059 &  & -0.493 & 0.412 & 0.345 & 0.326 \\
-0.3721 & 0.156 & 0.113 & 0.092 & 0.063 &  & -0.372 & 0.324 & 0.244 & 0.187 \\
-0.3253 & 0.158 & 0.128 & 0.103 & 0.065 &  & -0.325 & 0.291 & 0.223 & 0.163 \\
-0.2952 & 0.126 & 0.101 & 0.088 & 0.060 &  & -0.295 & 0.321 & 0.265 & 0.230 \\
-0.1566 & 0.138 & 0.080 & 0.070 & 0.049 &  & -0.157 & 0.270 & 0.183 & 0.144 \\
-0.1291 & 0.136 & 0.116 & 0.086 & 0.060 &  & -0.129 & 0.214 & 0.167 & 0.120 \\
-0.1251 & 0.138 & 0.115 & 0.107 & 0.066 &  & -0.125 & 0.233 & 0.178 & 0.141 \\
-0.1190 & 0.153 & 0.121 & 0.097 & 0.062 &  & -0.119 & 0.271 & 0.192 & 0.133 \\
-0.1147 & 0.136 & 0.100 & 0.074 & 0.062 &  & -0.115 & 0.243 & 0.170 & 0.121 \\
-0.0297 & 0.145 & 0.120 & 0.103 & 0.066 &  & -0.030 & 0.225 & 0.163 & 0.119 \\
-0.0155 & 0.131 & 0.100 & 0.073 & 0.057 &  & -0.015 & 0.202 & 0.139 & 0.098 \\
0.1411 & 0.129 & 0.112 & 0.089 & 0.063 &  & 0.141 & 0.258 & 0.184 & 0.130 \\
0.1616 & 0.126 & 0.105 & 0.087 & 0.059 &  & 0.162 & 0.261 & 0.202 & 0.160 \\
0.1895 & 0.150 & 0.116 & 0.093 & 0.063 &  & 0.190 & 0.247 & 0.178 & 0.133 \\
0.2039 & 0.152 & 0.125 & 0.104 & 0.066 &  & 0.204 & 0.233 & 0.169 & 0.127 \\
0.2043 & 0.145 & 0.115 & 0.086 & 0.059 &  & 0.204 & 0.248 & 0.181 & 0.113 \\
0.3557 & 0.135 & 0.115 & 0.100 & 0.064 &  & 0.356 & 0.408 & 0.359 & 0.288 \\
0.3874 & 0.152 & 0.106 & 0.076 & 0.058 &  & 0.387 & 0.350 & 0.274 & 0.201 \\
0.5107 & 0.152 & 0.102 & 0.081 & 0.057 &  & 0.511 & 0.383 & 0.297 & 0.248 \\
0.6244 & 0.157 & 0.112 & 0.093 & 0.058 &  & 0.624 & 0.512 & 0.419 & 0.337 \\
0.6743 & 0.153 & 0.120 & 0.096 & 0.057 &  & 0.674 & 0.536 & 0.439 & 0.345 \\
0.6887 & 0.155 & 0.102 & 0.083 & 0.056 &  & 0.689 & 0.466 & 0.355 & 0.307 \\
0.7582 & 0.148 & 0.105 & 0.080 & 0.067 &  & 0.758 & 0.504 & 0.421 & 0.381 \\
0.7728 & 0.161 & 0.110 & 0.093 & 0.058 &  & 0.773 & 0.461 & 0.356 & 0.284 \\
0.9193 & 0.160 & 0.108 & 0.082 & 0.067 &  & 0.919 & 0.593 & 0.486 & 0.429 \\
0.9395 & 0.157 & 0.111 & 0.086 & 0.061 &  & 0.939 & 0.650 & 0.583 & 0.522 \\
0.9810 & 0.182 & 0.111 & 0.080 & 0.061 &  & 0.981 & 0.621 & 0.514 & 0.451 \\
1.1221 & 0.159 & 0.112 & 0.093 & 0.068 &  & 1.122 & 0.594 & 0.497 & 0.421 \\
1.2940 & 0.173 & 0.117 & 0.092 & 0.056 &  & 1.294 & 0.629 & 0.527 & 0.450 \\
1.3090 & 0.186 & 0.126 & 0.083 & 0.064 &  & 1.309 & 0.687 & 0.578 & 0.506 \\
1.3762 & 0.187 & 0.128 & 0.095 & 0.063 &  & 1.376 & 0.719 & 0.609 & 0.519 \\
1.3897 & 0.176 & 0.108 & 0.086 & 0.068 &  & 1.390 & 0.659 & 0.546 & 0.467 \\
1.5060 & 0.168 & 0.119 & 0.084 & 0.068 &  & 1.506 & 0.601 & 0.494 & 0.413 \\
1.6281 & 0.178 & 0.120 & 0.087 & 0.060 &  & 1.628 & 0.692 & 0.586 & 0.498 \\
2.1912 & 0.189 & 0.119 & 0.086 & 0.065 &  & 2.191 & 0.712 & 0.598 & 0.513 \\

     \hline

\hline

\end{tabular}
\end{adjustbox}
   \end{lrbox}
\usebox{\tablebox}\\
\settowidth{\tableboxwidth}{\usebox{\tablebox}} \parbox{\tableboxwidth}{\footnotesize{Notes: this table presents rejection rates for the specification test presented in Section \ref{Sub_demeaned}. In columns 1 to 4, there is no structural break, while in columns 5 to 8 the first common factor has expected value equal to two times its standard deviation in the post-treatment periods. 
}
}
\end{table}

\begin{table}[ht]
\centering
\caption{Estimated weights - Empirical Illustration}  \label{Appendix_table_weights}
\begin{tabular}{cccc}
  \hline
 & Original SC & Demeaned SC & Abadie et al. (2003) \\ 
  \hline
Andalucia & 0.0000 & 0.0000 & 0.0000 \\ 
  Aragon & 0.0000 & 0.0000 & 0.0000 \\ 
  Baleares (Islas) & 0.3111 & 0.2539 & 0.0000 \\ 
  Canarias & 0.0000 & 0.0000 & 0.0000 \\ 
  Cantabria & 0.0000 & 0.0008 & 0.0000 \\ 
  Castilla Y Leon & 0.0000 & 0.0002 & 0.0000 \\ 
  Castilla-La Mancha & 0.0000 & 0.0000 & 0.0000 \\ 
  Cataluna & 0.0000 & 0.0536 & 0.8508 \\ 
  Comunidad Valenciana & 0.0000 & 0.0003 & 0.0000 \\ 
  Extremadura & 0.0000 & 0.0000 & 0.0000 \\ 
  Galicia & 0.0000 & 0.0000 & 0.0000 \\ 
  Madrid (Comunidad De) & 0.4831 & 0.2879 & 0.1492 \\ 
  Murcia (Region de) & 0.0000 & 0.1898 & 0.0000 \\ 
  Navarra & 0.0000 & 0.0190 & 0.0000 \\ 
  Principado De Asturias & 0.0000 & 0.0072 & 0.0000 \\ 
  Rioja (La) & 0.2058 & 0.1873 & 0.0000 \\ 
   \hline
\end{tabular}
\end{table}

\pagebreak

\newpage

\singlespacing
\bibliographystyle{aer}
\bibliography{bib/bib.bib}

\end{document}